\keywords{Logic and verification, control, parametric systems}
\newcommand{\arkiv}[1]{}
\theoremstyle{plain}
\begin{document}

\title{Controlling a population}
% \titlecomment{{\lsuper*}OPTIONAL comment concerning the title, \eg,
%   if a variant or an extended abstract of the paper has appeared elsewhere.}

\author[N. Bertrand]{Nathalie Bertrand}	%required
\address{Univ Rennes, Inria \& IRISA, France}	%required
\email{nathalie.bertrand@inria.fr}  %optional
%\urladdr{name3@url3\quad\rm{(optionally, a web-page can be specified)}}  %optional
% \thanks{thanks 1, optional.}	%optional

\author[M. Dewaskar]{Miheer Dewaskar}
\address{University of North Carolina at Chapel Hill, USA}
\email{miheer@live.unc.edu}
%\thanks{}

\author[B. Genest]{Blaise Genest}	%optional
\address{Univ Rennes, CNRS, IRISA, France}	%optional
\email{blaise.genest@irisa.fr}  %optional
%\thanks{thanks 2, optional.}	%optional

\author[H. Gimbert]{Hugo Gimbert}	%optional
\address{CNRS \& LaBRI, France}	%optional
\email{hugo.gimbert@labri.fr}
%\thanks{thanks 3, optional.}	%optional

\author[A.A. Godbole]{Adwait Amit Godbole}
\address{IIT Bombay, India}
\email{godbole15@gmail.com}

%% required for running head on odd and even pages, use suitable
%% abbreviations in case of long titles and many authors:

%%%%%%%%%%%%%%%%%%%%%%%%%%%%%%%%%%%%%%%%%%%%%%%%%%%%%%%%%%%%%%%%%%%%%%%%%%%

%% the abstract has to PRECEDE the command \maketitle:
%% be sure not to issue the \maketitle command twice!

\begin{abstract}
  \noindent
  We introduce a new setting where a population of agents, each
  modelled by a finite-state system, are controlled uniformly: the
  controller applies the same action to every agent. The framework is
  largely inspired by the control of a biological system, namely a
  population of yeasts, where the controller may only change the
  environment common to all cells. We study a synchronisation problem
  for such populations: no matter how individual agents react to the
  actions of the controller, the controller aims at driving all agents
  synchronously to a target state. The agents are naturally
  represented by a non-deterministic finite state automaton (NFA), the
  same for every agent, and the whole system is encoded as a 2-player
  game. The first player (\playerone) chooses actions, and the second
  player (\playertwo) resolves non-determinism for each agent. The
  game with $m$ agents is called the $m$-population game. This gives
  rise to a parameterized control problem (where control refers to 2
  player games), namely the \emph{population control problem}: can
  \playerone\ control the $m$-population game for all
  $m \in \mathbb{N}$ whatever \playertwo\ does?

  In this paper, we prove that the population control problem is
  decidable, and it is a \EXPTIME-complete problem.  As far as we
  know, this is one of the first results on the control of
    parameterized systems. 
	 Our algorithm, which is not based on
  cut-off techniques, produces winning strategies which are symbolic,
  that is, they do not need to count precisely how the population is
  spread between states. 
  The winning strategies produced by our algorithm are optimal with respect to the synchronisation time: the maximal number of
    steps before synchronisation of all agents in the target state is
    at most polynomial in the number of agents $m$, and exponential in
    the size of the NFA.  
  We also show that if there is no winning strategy,
  then there is a population size $\cutoff$ such that \playerone\ wins
  the $m$-population game if and only if $m\leq \cutoff$.
  Surprisingly, $\cutoff$ can be doubly exponential in the number of
  states of the NFA, with tight upper and lower bounds. 
   % { we show the otpimality of the controller
   %  built by our algorithm concerning the maximal time to synchronize
   %  all agents in the target state. Namely, we show that it is at most
   %  polynomial in the number of agents $m$, and exponential in the
   %  size of the NFA.}
\end{abstract}

\maketitle

% !TEX root = BDGG-concur17.tex
\section{Introduction}
Finite-state controllers, implemented by software, find applications
in many different domains: telecommunication, aeronautics, etc. Many
theoretical studies from the model-checking community showed that, in
idealised settings, finite-state controllers suffice. 
% There
% have been many theoretical studies from the model checking community
% to show that finite-state controllers are sufficient to control
% systems in idealised settings.
Games are an elegant formalism to model control
problems~\cite{ArnoldWalukiewicz}:
% Usually, the problem would be modeled as a game:
players represent the controller and the system; the precise setting
(number of players, their abilities, and their observations) depends
on the context.
% some players model
% the controller, and some players model the
% system~\cite{ArnoldWalukiewicz}, the game settings (number of players,
% their power, their observation) depending on the context.

Recently, finite-state controllers have been used to control living
organisms, such as a population of yeasts~\cite{Batt}. In this
application, microscopy is used to monitor the fluorescence level of a
population of yeasts, reflecting the concentration of some molecule,
which differs from cell to cell. Finite-state systems can model a
discretisation of the population of yeasts~\cite{Batt,AGKV16}.  The
frequency and duration of injections of a sorbitol solution can be
controlled, being injected uniformly into a solution in which the
yeast population is immersed. However, the response of each cell to
the osmotic stress induced by sorbitol varies, influencing the
concentration of the fluorescent molecule.  The objective is to
control the population to drive it through a sequence of predetermined
fluorescence states.

Taking inspiration from this biological control problem, we propose in
this paper, an \emph{idealised} problem for the population of yeasts:
the (perfectly-informed) controller aims at leading synchronously all
agents to a given fluorescence state. We introduce the
\emph{$m$-population game}, where a population of $m$ identical agents
is controlled uniformly. Each agent is modeled as a nondeterministic
finite-state automaton (NFA), the same for each agent. The first
player, called \playerone, applies the same action, a letter from the
NFA alphabet, to every agent. Its opponent, called \playertwo, chooses
the reaction of each individual agent, that is their successor state
upon that action. These reactions can differ due to non-determinism.
%and also since in general agents are in different states of the NFA. 
The objective for \playerone\ is to gather all agents synchronously in
the target state, and \playertwo\ seeks the opposite objective.
Our idealised setting may not be entirely satisfactory, yet it
constitutes a first step towards more realistic formalisations of the
yeast population control problem.

Dealing with large populations {\em explicitly} is in general
intractable due to the state-space explosion problem.
We therefore consider the associated {\em symbolic parameterized control problem}, that requires to synchronise all agents, independently of
the population size. Interestingly, this population control problem
does not fit traditional game frameworks from the model-checking
community. While {\em parameterized verification} received recently
quite some attention (see the related work below), to the best of our
knowledge, our framework is among the first ones in {\em parameterized
  control}.
 
% For complexity reasons, we want to avoid running algorithms 
% on a model where every single agent is represented. Instead, we want %to leverage on the finite-state nature of the system in order to %obtain 
% a symbolic controller being able to reach the goal no matter the size %of the population. This gives rise to a parameterized controller %question.

\medskip {\bf Our results.}  We first show that considering an
infinite population is not equivalent to the parameterized control
problem% for all non zero integer $m$
: there are simple cases where \playerone\ cannot control an infinite
population but can control every finite population.  Solving the
$\infty$-population game reduces to checking a reachability property
on the support graph~\cite{Martyugin-tocs14}, which can be easily done
in \PSPACE.  On the other hand, solving the parameterized control
problem requires new proof techniques, data structures and algorithms.

We easily obtain that when the answer to the population control
problem is negative, there exists a population size $\cutoff$, called
the \emph{cut-off}, such that \playerone\ wins the $m$-population game
if and only if $m\leq \cutoff$. Surprisingly, we obtain a lower-bound
on the cut-off doubly exponential in the number of states of the
NFA. %Following usual cut-off techniques
Exploiting this cut-off naively would thus yield an inefficient
algorithm of least doubly exponential time complexity.

Fortunately, developing new proof
techniques % To obtain better complexity, we developed new proof techniques
({\em not} based on cut-off), we manage to obtain a better complexity:
we prove the population control problem to be \EXPTIME-complete. As a
byproduct, we obtain a doubly exponential upper-bound for the cut-off,
matching the lower-bound. Our techniques are based on a reduction to a
parity game with exponentially many states and a polynomial number of
priorities. The constructed parity game, and associated winning
strategies, gives insight on the winning strategies of \playerone\ in
the $m$-population games, for all values of $m$. 
\playerone\ selects
actions based on a polynomial number of {\em transfer graphs}, 
describing the trajectory of agents before reaching a given state.
%giving for each
%current state the set of states at time $i$ from which agent came
%from, for different values of $i$.  We show that it suffices for
%\playerone\ to remember at most a quadratic number of such transfer
%graphs, corresponding to a quadratic number of indices $i$.
If \playerone\ wins this parity game then he can uniformly apply his
winning strategy to all $m$-population games, just keeping track of
these transfer graphs, independently of the exact count in each
state. If \playertwo\ wins the parity game then he also has a uniform
winning strategy in $m$-population games, for $m$ large enough, which
consists in splitting the agents evenly among all transitions of the
transfer graphs.

Last, we obtain that when the answer to the population control problem is positive, the controller built by our algorithm takes at most a polynomial number of steps to synchronize all agents in 
the winning state, where the polynomial is of order the number of agents power the number of states of the NFA.
We show that our algorithm is optimal, as there are 
systems which require at least this order of steps to synchronize all agents.

\medskip {\bf Related work.}  Parameterized verification of systems with many identical components started with the seminal work of German and Sistla in the early nineties~\cite{GS-jacm92}, and received recently quite some attention.  The decidability and complexity of these problems typically depend on the communication means, and on whether the system contains a leader (following a different template) as exposed in the recent survey~\cite{Esparza-stacs14}.  This framework has been extended to timed automata templates~\cite{AJ-tcs03,ADRST-formats11} and probabilistic systems with Markov decision processes templates~\cite{BF-fsttcs13,BFS-fossacs14}.  Another line of work considers population 
protocols~\cite{AADFP-podc04,EGLM-concur15,BEK18}. 
Close in spirit, are broadcast protocols~\cite{esparza-verif-99}, in which one action may move an arbitrary number of agents from one state to another. Our model can be modeled as a subclass of broadcast protocols, where broadcasts emissions are self loops at a unique state, and no other synchronisation allowed. The parameterized reachability question considered for broadcast protocols is trivial in our framework, while our parameterized control question would be undecidable for broadcast protocols.  In these different works, components interact directly, while in our work, the interaction is indirect via the common action of the controller. Further, the problems considered in related work are  verification questions, and do not tackle the difficult issue 
we address of synthesizing a controller for all instances of a parameterized system.

There are very few contributions pertaining to parameterized games
with more than one player.
The most related is \cite{Kouv}, which proves
decidability of control of mutual exclusion-like protocols
in the presence of an unbounded number of agents.
Another contribution in that domain is the one of broadcast networks of identical parity games~\cite{BFS-fossacs14}. However, the game is used to solve a verification (reachability) question rather than a parameterized control problem as in our case. Also the roles of the two players are quite different.

The winning condition we are considering is close to {\em
  synchronising words}. The original synchronising word problem asks
for the existence of a word $w$ and a state $q$ of a {\em
  deterministic} finite state automaton, such that no matter the
initial state $s$, reading $w$ from $s$ would lead to state $q$ (see
\cite{Volkov-lata08} for a survey). Lately, synchronising words have
been extended to NFAs~\cite{Martyugin-tocs14}. Compared to our
settings, the author assumes a possibly infinite population of agents.
%who could leak arbitrarily often from a state to another \fbox{leak
%  not clear at that point}. 
The setting is thus not parameterized, and
a usual support arena suffices to obtain a \PSPACE
algorithm. Synchronisation for probabilistic models~\cite{DMS12,DMS14}
have also been considered: the population of agents is not finite nor
discrete, but rather continuous, represented as a distribution.  The
distribution evolves deterministically with the choice of the
controller (the probability mass is split according to the
probabilities of the transitions), while in our setting, each agent
moves nondeterministically.
%This continuous model makes the parameterized verification question %moot. 
In \cite{DMS12}, the controller needs to apply the same action
whatever the state the agents are in (similarly to our setting), and
then the existence of a controller is undecidable. In \cite{DMS14},
the controller can choose the action depending on the state each agent
is in (unlike our setting), and the existence of a controller reaching
uniformly a set of states is \PSPACE-complete.

Last, our parameterized control problem can be encoded as a 2-player game on VASS~\cite{BJK-icalp10}, with one counter per state of the NFA: the opponent gets to choose the population size (a counter value), and the move of each agent corresponds to decrementing a counter and incrementing another. Such a reduction yields a symmetrical game on VASS in which both players are allowed to modify the counter values, in order to check that the other player did not cheat. Symmetrical games on VASS are undecidable~\cite{BJK-icalp10}, and their asymmetric variant (only one player is allowed to change the counter values) are decidable in 2\EXPTIME~\cite{JLS-icalp15}, thus with higher complexity than our specific parameterized control problem.

An extended abstract of this work appeared in the
  proceedings of the conference CONCUR 2017. In comparison, we provide
  here full proofs of our results, and added more intuitions and
  examples to better explain the various concepts introduced in this
  paper. Regarding contributions, we augmented our results with the
  study of the maximal time to synchronisation, 
  for which we show that 
  the controller built by our algorithm is optimal.
%  showing the optimality
%  in that matter of the controller built by our algorithm.
  % with more intuitions and running examples to
  % better explain the different concepts we introduce in the paper.
  % Compared to this extended abstract, we studied the maximal time to
  % synchronization of all agents and show the optimality of the
  % controller built by our algorithm with respect to it.  Also, we
  % provided the full proofs, with more intuitions and running examples
  % to better explain the different concepts we introduce in the paper.

\medskip {\bf Outline.} In Section~\ref{sec:setting} we define the
population control problem and announce our main
results. Section~\ref{sec:capacity-game} introduces the capacity game,
and shows its equivalence with the population control problem
problem. Section~\ref{sec:parity} details the resolution of the
capacity game in \EXPTIME, relying on a clever encoding into a parity
game. It also proves a doubly exponential bound on the
cut-off. Section~\ref{sec:timetosynch} studies the maximal time to
synchronisation. Section~\ref{sec:lowerbounds} provides matching lower
bounds on the complexity and on the cut-off. The paper ends with a
discussion in Section~\ref{sec:conclu}.
%Due to space constraints, most proofs are postponed to the Appendix.

%%% Local Variables:
%%% mode: latex
%%% TeX-master: "lmcs.tex"
%%% End:

% !TEX root = BDGG-concur17.tex
\section{The population control problem}
\label{sec:setting}
\subsection{The $m$-population game}
A nondeterministic finite automaton (NFA for short) is a tuple
$\nfa=(\states, \Sigma, \state_0,\Delta)$ with $\states$ a finite set
of states, $\Sigma$ a finite alphabet, $\state_0 \in Q$ an initial state, and $\Delta \subseteq Q\times \Sigma \times Q$ the transition
relation.  
We assume throughout the paper that NFAs are complete, that
is, $\forall \state \in \states, a \in \Sigma \;, \exists p \in Q:
\;(\state,a,p) \in \Delta$. In the following, incomplete NFAs,
especially in figures, have to be understood as completed with a sink
state.

For every integer $m$, we consider a system $\nfa^{m}$ with $m$
identical agents $\nfa_1, \ldots, \nfa_{m}$ of the NFA $\nfa$.
The system $\nfa^{m}$ is itself an NFA $(\states^m, \Sigma,
\state^{m}_0,\Delta^m)$ defined as follows.  
Formally, states of
$\nfa^{m}$ are called configurations, and they are tuples $\vstate =
(\state_1,\ldots,\state_{m})\in Q^{m}$ describing the current state of each agent in the population.  We use the shorthand $\vstateinit[m]$,
or simply $\vstateinit$ when $m$ is clear from context, to denote the
initial configuration $(\stateinit,\ldots,\stateinit)$ of $\nfa^{m}$.
Given a target state $\targetstate \in Q$, the
$\targetstate$-synchronizing configuration is $\targetstate^{m} =
(\targetstate,\ldots,\targetstate)$ in which each agent is in the target state.

The intuitive semantics of $\nfa^m$ is that at each step, the same
action from $\Sigma$ applies to all agents. The effect of the action
however may not be uniform given the nondeterminism present in $\nfa$:
we have $((\state_1,\ldots,\state_m),a,(\state'_1,\ldots,\state'_m))
\in \Delta^m$ iff $(\state_j,\action,\state'_j)\in \Delta$ for all
$j \leq m$. A (finite or infinite) play in $\nfa^m$ is an alternating
sequence of configurations and actions, starting in the initial
configuration: $\play =\vstateinit \action_0 \vstate_1 \action_1
\cdots$ such that $(\vstate_i,\action_i,\vstate_{i+1}) \in
\Delta^m$ for all $i$. %\in \nats$.

This is the \emph{$m$-population game} between \playerone\ and
\playertwo, where \playerone\ chooses the actions and \playertwo\
chooses how to resolve non-determinism. The objective for \playerone\
is to gather all agents synchronously in $\targetstate$ while
\playertwo\ seeks the opposite objective.

Our parameterized control problem asks whether \playerone\ can win the
$m$-population game for every $m \in \nats$. A strategy of \playerone\
in the $m$-population game is a function mapping finite plays to
actions, $\strat: (\states^m \times \actions)^* \times \states^m \to
\Sigma$.  A play $\play =\vstateinit \action_0 \vstate_1 \action_1
\vstate_2\cdots$ is said to \emph{respect $\sigma$}, or is a
\emph{play under $\strat$}, if it satisfies $\action_i = \strat(
\vstateinit \action_0 \vstate_1 \cdots \vstate_i)$ for all $i\in
\nats$. A play $\play =\vstateinit \action_0 \vstate_1 \action_1
\vstate_2\cdots$ is \emph{winning} if it hits the
$\targetstate$-synchronizing configuration, that is $\vstate_j =
\targetstate^m$ for some $j\in \nats$.  \playerone\ wins the
$m$-population game if he has a strategy such that all plays under
this strategy are winning. One can assume without loss of generality that $f$ is a sink state. If not, it suffices to add a new action leading tokens from $f$ to the new target sink state $\smiley$
and tokens from other states to a losing sink state $\frownie$.
%For a finite
%maximal play $\state^m_0 \action_0 \state^m_1 \cdots \state^m_i$, we
%say that it ends in configuration $\state^m_i$.
The goal of this paper is to study the following parameterized
control problem:

 \begin{fmpage}{0.99\textwidth}
\textsf{Population control problem}\\
  {\bf Input}: An NFA $\nfa = (\states,\stateinit,\actions,\trans)$ %\Red{(removed $\state_u$)}
  and a target state $\targetstate \in \states$.\\
  {\bf Output}: Yes iff for every integer $m$ \playerone\ wins the $m$-population game. \end{fmpage}

	% \begin{figure}[b!]
	% 	\centering
	% \begin{tikzpicture}[node distance=2.5cm,>=latex',->]
	
	% 	\node[initial left, initial text=, state] (q0) {$\state_0$};
	% 	\node[state,right of=q0,yshift=+1cm] (q1) {$\state_1$};
	% 	\node[state,right 	of=q0,yshift=-1cm] (q2) {$\state_2$};
		
	% 	\node[state, right of=q0, xshift=2cm, inner sep=0] (win) {$\targetstate$};
		
	% 	\path
	% 	(q0) edge node[pos=0.2,inner sep=0] (a1) {} (q1)
	% 	(q0) edge node[pos=0.2, inner sep=0] (a2) {} (q2)
	% 	(q2) edge[in=-70,out=-120] node[below] {$a$} (q0)
	% 	(q1) edge[in=70,out=120] node[above, inner sep=5] {$b$} (q0)
	% 	(q1) edge node[above] {$a$} (win)
	% 	(q2) edge node[below] {$b$} (win)
	% 	;
		
	% 	\path[-,shorten <=-1.5pt,shorten >=0mm] (a1) edge [bend left] node[right] {$\delta$} (a2);
		
	% 	\path[->] (q1) edge [loop above] node {$\delta$} ()
	% 			 (q2) edge [loop below] node {$\delta$} ()
	% 		(win) edge[loop right] node {$a,b,\delta$} ()
	% 		(q0) edge[loop below] node {$a,b$} ();  
	% \end{tikzpicture}
	% 	\caption{An example of NFA: The splitting gadget $\splitnfa$.}
	% 	\label{fig:splitgadget}			
	% \end{figure}

\begin{figure}[t!]
\centering
\begin{tikzpicture}
%\everymath{\scriptstyle}
\draw(-2,0) node [circle,draw,inner sep=2pt,minimum
size=12pt] (s1) {$\stateinit$};

\draw(0,1) node [circle,draw,inner sep=2pt,minimum size=12pt] (s2)
{$\state_1$};

\draw(0,-1) node [circle,draw,inner sep=2pt,minimum size=12pt] (s3)
{$\state_2$};

\draw(1.5,0) node [circle,draw,inner sep=2pt,minimum size=12pt] (s4) {$\targetstate$};

\draw [-latex'] (s1) -- (s2) node [pos=.5,below] {$\delta$};% node [pos=.5,below,sloped] {$b,1/2$};
\draw [-latex'] (s1) -- (s3) node [pos=.5,above] {$\delta$};% node [pos=.5,above,sloped] {$b,1/2$};

\draw [-latex']  (s2) .. controls +(60:30pt) and +(120:30pt) .. (s2)
node[midway,above]{$\delta$};
\draw [-latex']  (s3) .. controls +(240:30pt) and +(300:30pt) .. (s3)
node[midway,below]{$\delta$};

\draw [-latex'] (s2) .. controls +(160:1cm) and +(60:1cm)  .. (s1) node [pos=.5,above] {$b$};
\draw [-latex'] (s2) -- (s4) node [pos=.5,above] {\quad \, $a$};
\draw [-latex'] (s3) -- (s4) node [pos=.5,below] {\quad \, $b$};
\draw [-latex'] (s3) .. controls +(200:1cm) and +(300:1cm)  .. (s1) node [pos=.5,below] {$a$};

 \draw [-latex']  (s1) .. controls +(150:30pt) and +(210:30pt) .. (s1)
 node[midway,left]{$a,b$};
\draw [-latex']  (s4) .. controls +(30:30pt) and +(330:30pt) .. (s4)
node[midway,right]{$a,b,\delta$};
\end{tikzpicture}
	\caption{An example of NFA: The splitting gadget $\splitnfa$.}
		\label{fig:splitgadget}	
\end{figure}

\medskip

For a fixed $m$, the winner of the $m$-population game can be
determined by solving the underlying reachability game with $|Q|^m$
states, which is intractable for large values of $m$.  
On the other hand, the answer to
the population control problem gives the winner of the $m$-population
game for arbitrary large values of $m$.  To obtain a decision
procedure for this 
parameterized problem, 
new data structures and algorithmic tools 
need to be developed, much more elaborate 
than the standard algorithm solving reachability games.

\begin{exa}
\label{ex}
We illustrate the population control problem with the example
$\splitnfa$ on action alphabet $\actions = \{a,b,\delta\}$ in
Figure~\ref{fig:splitgadget}.  Here, to represent a configuration
$\vstate$, we use a counting abstraction, and identify $\vstate$ with
the vector $(n_0,n_1,n_2,n_3)$, where $n_0$ is the number of agents in
state $\state_0$, etc, and $n_3$ is the number of agents in
$\targetstate$.  \playerone\ has a way to gather all agents
synchronously to $\targetstate$. We can give a symbolic representation
of a memoryless winning strategy $\strat$:
$\forall k_0,k_1 >0,\ \forall k_2,k_3 \geq 0,\ \strat(k_0,0,0,k_3) =
\delta,\ \strat(0,k_1,k_2,k_3) = a,\ \strat(0,0,k_2,k_3)=b$. Under
this strategy indeed, the number of agents outside $\targetstate$
decreases by at least one at every other step. The properties of this
example will be detailed later and play a part in proving a lower
bound (see Proposition~\ref{prop:cutoff-lowerbound}).
\end{exa}

\begin{exa}
We provide another illustrating example,
requiring a more involved strategy. 
Consider the NFA from Figure~\ref{fig:tryretry}, with
  $\actions = \{{\tt try},{\tt retry},{\tt top},{\tt bot},{\tt
    keep},{\tt restart}\}$. This NFA is again a positive instance of
  the population control problem. Yet, in contrast with the previous
  example, there are unsafe moves for \playerone. Indeed, after
  playing {\tt try} from $\stateinit$, playing ${\tt bot}$ 
  is losing if there are agents in $q_\top$,
  and playing ${\tt top}$ 
  is losing if there are agents in $q_\bot$ 
  (recall that unspecified transitions lead to a sink losing
  state). However, alternating ${\tt try}$ and ${\tt keep}$ until
  either $q_\bot$ becomes empty - allowing to play {\tt top} or 
   $q_\top$ is empty - allowing to play {\tt bot}, and then
  ${\tt restart}$, yields a configuration with less agents in
  $\stateinit$, and at least one in $\targetstate$. Continuing in the
  same way provides a winning strategy for \playerone. This example
  will be used again in Section~\ref{sec:timetosynch}, regarding the
  worst-case time to synchronisation.

\begin{figure}[!ht]
  \begin{center}
    %[circle,draw,inner sep=2pt,minimum size=12pt]
  \begin{tikzpicture}[-latex',>=stealth',shorten >=1pt,auto,node
    distance=2.2cm]
    \tikzstyle{every state}=[draw=black,text=black, inner
    sep=2pt,minimum size=12pt]

  \node[state] (A)                    {$\stateinit$};
  \node[state]         (B) [right of=A,yshift=1cm] {$\state_{\top}$};
  \node[state]         (C) [right of=A,yshift=-1cm] {$\state_{\bot}$};
  \node[state]         (D) [left of=C,yshift=-.5cm]       {$k$};
  \node[state]         (E) [right of=A,xshift=2cm]       {$\targetstate$};
%  \node[state]         (F) [right of=C]       {$\frownie$};

  \path (A) edge  node [below] {\tt try} (B)
            edge              node [above]  {\tt try} (C)
        (B) edge [bend right] node [above left] {\tt keep} (A)
            edge              node  {\tt top} (E)
 %           edge              node [pos=0.1, sloped] {bottom} (F)
        (C) edge              node {\tt keep} (D)
  %          edge              node {\tt top} (F)
            edge              node [below right] {\tt bot} (E)
        (D) edge              node {\tt restart} (A)
            edge [loop below] node {$\actions \setminus \{{\tt restart}\}$} (D)
        (E) edge [loop right] node {$\actions$} (E);
            
\end{tikzpicture}
\end{center}

% {\includegraphics[scale=0.4]{matrix.pdf}}
\caption{A second example of NFA for the population control problem:
  $\cA_{\text{time}}$.}
\label{fig:tryretry}
\end{figure}
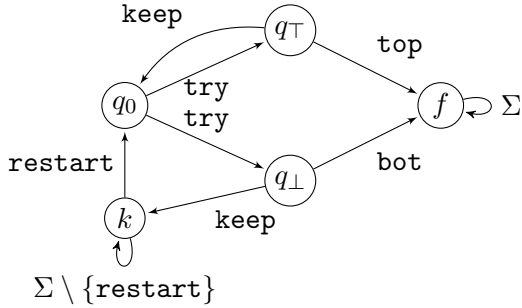
  \end{exa}

\subsection{Parameterized control and cut-off}
\label{sec:cutoff}
A first observation for the population control problem is that
$\vstateinit[m]$, $\targetstate^m$ and $Q^m$ are stable under a
permutation of coordinates.  A consequence is that the $m$-population
game is also symmetric under permutation, and thus the set of winning
configurations is symmetric and the winning strategy can be chosen
uniformly from symmetric winning configurations. Therefore, if
\playerone\ wins the $m$-population game then he has a positional
winning strategy which only counts the number of agents in each state
of $\nfa$ (the counting abstraction used in Example~\ref{ex}).

%A second observation is the following monotonicity result which can be
%summarized as: he who can do more, can do less.

\begin{prop}
\label{prop:monotony}
  Let $m \in \nats$. 
  If \playerone\ wins the $m$-population game, then he wins the $m'$-population game for every $m' \leq m$.
 \end{prop}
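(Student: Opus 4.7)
The plan is a padding argument: given that \playerone\ wins the $m$-population game, we show he can also win the $m'$-population game for any $m' \leq m$ by embedding the latter into the former via $m - m'$ simulated ``phantom'' agents whose transitions \playerone\ resolves himself.

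Concretely, let $\sigma$ be a winning strategy of \playerone\ in the $m$-population game. To play in the $m'$-population game, \playerone\ maintains a \emph{virtual configuration} in $\states^m$ initialised to $\vstateinit[m]$, whose first $m'$ coordinates mirror the real configuration of the $m'$-population game and whose last $m - m'$ coordinates track the phantom agents. At each round, \playerone\ plays the action prescribed by $\sigma$ on the virtual history. After \playertwo\ resolves the transitions of the $m'$ real agents in the actual game, \playerone\ picks, using completeness of $\Delta$, arbitrary successor states for the $m - m'$ phantoms and updates the virtual configuration accordingly.

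The virtual play thus produced is a legitimate play of the $m$-population game consistent with $\sigma$, the phantom moves being viewable as those of some strategy of \playertwo. By the winning property of $\sigma$, this virtual play eventually reaches $\targetstate^m$; in particular, the $m'$ coordinates corresponding to the real agents all reach $\targetstate$ simultaneously, which is exactly winning in the $m'$-population game.

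I do not expect a real obstacle here. The argument is a standard padding/simulation, and the completeness of $\Delta$ assumed at the outset of Section~\ref{sec:setting} is precisely what guarantees that \playerone\ can always pick phantom successors no matter the action played. The only minor subtlety to check is that the virtual history, and hence the action prescribed by $\sigma$, is well defined throughout the simulation, which is automatic since every step produces a genuine transition of $\nfa^m$.
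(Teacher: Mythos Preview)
Your proposal is correct and follows essentially the same padding argument as the paper's proof. The only cosmetic difference is that the paper fixes the phantom agents to copy the first real agent's transition at every step (so their moves are automatically valid without invoking completeness), whereas you let \playerone\ pick arbitrary successors for the phantoms using completeness of $\Delta$; both choices work and lead to the same conclusion.
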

\begin{proof}
  Let $m \in \nats$, and assume $\strat$ is a winning strategy for \playerone\ in $\nfa^m$. For $m' \leq m$ we define $\strat'$ as a strategy on $\nfa^{m'}$, inductively on the length of finite plays. Initially, $\strat'$ chooses the same first action as $\strat$: $\strat'(\stateinit^{m'}) = \strat(\stateinit^m)$. We then arbitrarily choose that the missing $m-m'$ 
agents would behave similarly as the first agent. This is indeed a possible move for the adversary in $\nfa^m$. Then, for any finite play under $\strat'$ in $\nfa^{m'}$, say $\play' = \vstateinit^{m'} \action_0 \vstate_1^{m'} \action_1 \vstate_2^{m'}\cdots \vstate_n^{m'}$, there must exist an extension $\play$ of $\play'$ obtained by adding $m-m'$ 
agents, all behaving as the first agent in $\nfa^{m'}$, that is consistent with $\strat$. Then, we let $\strat'(\play') = \strat(\play)$. Obviously, since $\strat$ is winning in $\nfa^m$, $\strat'$ is also winning in $\nfa^{m'}$.
\end{proof}

%  \medskip

 Hence, when the answer to the population control problem is negative,
 there exists a \emph{cut-off}, that is a value $\cutoff \in \nats$
 such that for every $m < \cutoff$, \playerone\ has a winning strategy
 in $\nfa^m$, and for every $m \geq \cutoff$, he has no winning
 strategy.

\begin{exa}
  To illustrate the notion of cut-off, consider the NFA on alphabet
  $\actions = A \cup \{b\}$ from Figure~\ref{fig:linear_cutoff}. Here
  again, unspecified transitions lead to a sink losing state
  $\frownie$.

  Let us prove that the cut-off is $\cutoff = |Q|-2$ in this case. On
  the one hand, for $m < \cutoff$, there is a winning strategy
  $\strat_m$ in $\nfa^m$ to reach $\targetstate^m$, in just two
  steps. It first plays $b$, and because $m<\cutoff$, in the next
  configuration, there is at least one state $q_i$ such that no agent
  is in $q_i$. It then suffices to play $a_i$ to win. On the other
  hand, if $m \geq \cutoff$, there is no winning strategy to
  synchronize in $\targetstate$, since after the first $b$, agents can
  be spread so that there is at least one agent in each state
  $q_i$. From there, \playerone\ can either play action $b$ and
  restart the whole game, or play any action $a_i$, leading at least
  one agent to the sink state $\frownie$.
\end{exa}

\begin{figure}[htbp]
\centering
\begin{tikzpicture}
%\everymath{\scriptstyle}
\draw(-2,0) node [circle,draw,inner sep=2pt,minimum
size=12pt] (s1) {$\stateinit$};

\draw(0,1) node [circle,draw,inner sep=2pt,minimum size=12pt] (s2)
{$\state_1$};
\draw(0,0) node [inner sep=2pt,minimum size=12pt] (s23) {$\vdots$};

\draw(0,-1) node [circle,draw,inner sep=2pt,minimum size=12pt] (s3)
{$\state_\cutoff$};

\draw(1.5,0) node [circle,draw,inner sep=2pt,minimum size=12pt] (s4) {$\targetstate$};

\draw [-latex'] (s1) -- (s2) node [pos=.5,above] {$b$};% node [pos=.5,below,sloped] {$b,1/2$};
\draw [-latex'] (s1) -- (s3) node [pos=.5,below] {$b$};% node [pos=.5,above,sloped] {$b,1/2$};

\draw [-latex'] (s2) .. controls +(160:1cm) and +(60:1cm)  .. (s1) node [pos=.5,above] {$b$};
\draw [-latex'] (s2) -- (s4) node [pos=.5,above] {\quad \, $A \setminus a_1$};
\draw [-latex'] (s3) -- (s4) node [pos=.5,below] {\quad \, $A \setminus a_\cutoff$};
\draw [-latex'] (s3) .. controls +(200:1cm) and +(300:1cm)  .. (s1) node [pos=.5,below] {$b$};

% \draw [-latex']  (s1) .. controls +(150:30pt) and +(210:30pt) .. (s1)
% node[midway,left]{$$};
\draw [-latex']  (s4) .. controls +(30:30pt) and +(330:30pt) .. (s4)
node[midway,right]{$A \cup \{b\}$};
\end{tikzpicture}
	\caption{An NFA with a linear cut-off.}
		\label{fig:linear_cutoff}	
\end{figure}
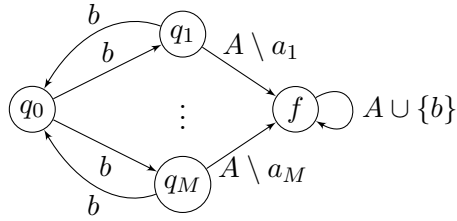

%\newpage

\subsection{Main results}

We are now in a position to state the contributions of this paper.
%Our main result is 
Most importantly, we establish 
the decidability 
and complexity 
of the population control problem, with matching upper and lower bounds on complexity:

\begin{thm}
\label{th1}
  The population control problem is \EXPTIME-complete.
\end{thm}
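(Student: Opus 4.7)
The plan is to prove EXPTIME-completeness by establishing matching upper and lower bounds, with the upper bound following the outline announced in the paper's introduction (transfer graphs, capacity game, parity game reduction).

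For the upper bound, my first step is to introduce a symbolic abstraction of configurations: rather than tracking the exact count of agents in each state (which is unbounded in the parameterized setting), I track only the \emph{support} — the set of states containing at least one agent. Since a single action produces only finitely many "patterns" of flow from a support, I define \emph{transfer graphs} as finite objects recording, for one action, which target states receive agents from which source states. There are at most exponentially many supports and polynomially many transfer graphs per action, so the resulting abstract arena has exponentially many positions. I then design the \emph{capacity game} on this abstract arena, in which \playerone\ picks actions and \playertwo\ picks a transfer graph compatible with the current support. The main equivalence to prove is: \playerone\ wins the population control problem iff \playerone\ wins the capacity game. The forward-easy direction is that a concrete winning strategy for every $m$ induces an abstract winning strategy; the harder converse — lifting a capacity game strategy uniformly to all $m$-population games — is done by having \playerone\ maintain a mental picture of the current support and transfer history, and showing by an invariant that the concrete configuration is always consistent with an abstract position from which he still wins.

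The second step is to reduce the capacity game to a parity game with exponentially many states and polynomially many priorities. The intuition is that \playerone's objective is not merely reachability of $\{\targetstate\}$ but requires progress in a well-founded order on supports/transfer graphs (essentially, the number of agents that can still escape $\targetstate$ must decrease), and such ordinal progress conditions translate naturally to a parity objective whose priorities encode the rank of the configuration being visited. Parity games of this size are solvable in $\EXPTIME$ (e.g.\ via standard $\mu$-calculus model-checking), delivering the $\EXPTIME$ upper bound. As a byproduct, the strategy extracted from the parity game is \emph{symbolic} (depends only on supports and transfer graphs), which is what makes it applicable uniformly to every $m$.

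For the lower bound, I would reduce from a canonical $\EXPTIME$-complete problem such as the acceptance problem for alternating polynomial-space Turing machines, or equivalently from the winner problem for a succinctly presented two-player reachability game with exponential state space. The encoding should use the population itself to represent exponentially many configurations: the states of the NFA would encode the cells of a polynomial-length tape together with a head and control state, and \playertwo's non-determinism would simulate the existential/universal branching of the alternating machine against \playerone's uniform action choices. The key gadgets must ensure that (i) \playerone\ cannot cheat by making an inconsistent move across agents, enforced by a "checker" mechanism that punishes any mismatch, and (ii) the synchronisation condition $\targetstate^{m}$ precisely corresponds to the accepting computation of the Turing machine.

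The main obstacle I anticipate is the abstract-to-concrete lifting in the equivalence between the capacity game and the population control problem, specifically the direction in which \playertwo's abstract winning strategy must be realised in some concrete $m$-population game. "Splitting agents evenly" along the transfer graphs chosen by the abstract strategy has to be done in such a way that the resulting concrete support matches the abstract successor support, which is only possible if the population is large enough — and controlling this size is precisely what governs the cut-off. Getting this lifting right, while keeping the resulting $m$ at most doubly exponential (to match the announced cut-off bound), is where the bulk of the technical work will lie.
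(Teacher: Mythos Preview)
Your high-level architecture (support arena with transfer graphs, a ``capacity game'' equivalence, a parity reduction, and an APSPACE-TM lower bound) matches the paper. But the central technical idea of the upper bound is missing, and without it the plan does not go through.

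The gap is the \emph{winning condition} of the capacity game. What you describe---\playerone\ picks an action, \playertwo\ picks a compatible transfer graph, and \playerone\ tries to reach $\{\targetstate\}$---is exactly the support game, which the paper shows is \emph{not} equivalent to the population control problem (Figure~\ref{fig:splitgadget}: \playerone\ wins every $m$-population game but loses the support game). Your proposed fix, a ``well-founded progress order on supports/transfer graphs,'' is not the right condition either, and your claimed easy direction (``winning every $m$ induces an abstract winning strategy'') is precisely the direction that fails for plain reachability. The actual winning condition is: \playerone\ wins a play if it reaches $\{\targetstate\}$ \emph{or} the play has \emph{infinite capacity}, where capacity is measured by \emph{accumulators}---successor-closed sequences $(T_j)_j$ of subsets of the support---and their \emph{entries} (edges crossing into $T_{j+1}$ from outside $T_j$). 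The key lemma (Lemma~\ref{lem:noleak}) is that a support-arena play is realisable in some $m$-population game iff the number of entries of every accumulator is \emph{bounded}. This is what makes the ``split evenly'' argument for \playertwo\ work (a bound $B$ on entries gives at most $|Q|^B$ dilution), and it is the contrapositive direction (\playertwo\ wins capacity $\Rightarrow$ \playertwo\ wins some $m$-population game, Proposition~\ref{th.2wins}) that is harder, not the one you flagged.

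For the parity reduction, ``ordinal progress'' is too vague and does not by itself yield polynomially many priorities. The paper's mechanism is a \emph{tracking list}: a list of at most $|Q|^2$ composed transfer graphs $G[i,n]$, maintained on the fly, whose \emph{separation sets} are strictly nested. Infinite capacity is shown (Lemmas~\ref{lemma.leaks}--\ref{lem:caracPG}) to be equivalent to some tracked index being \emph{remanent} and \emph{leaking infinitely often}; priorities $2r$/$2r{+}1$ encode ``level $r$ was filtered'' versus ``level $r$ leaked,'' giving $O(|Q|^2)$ priorities on an exponential state space. Your lower-bound sketch is in line with the paper's reduction from alternating PSPACE Turing machines; the nontrivial gadget there is the mechanism (Figure~\ref{fig:gadget-unique}) that lets \playerone\ recover when \playertwo\ places several agents in the ``choice'' state $C$.
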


To prove Theorem~\ref{th1}, we proceed as follows. First,
Theorem~\ref{th.ror} states the equivalence of the population control
problem with an involved but non-parametric control problem, called
the capacity game. A simple yet suboptimal 2\EXPTIME\ upper bound
derives from this equivalence. In Theorem~\ref{th.parity}, we reduce
the capacity game to an exponential-size parity game with polynomially
many parities, yielding an \EXPTIME\ upper bound. The matching
\EXPTIME-hard lower bound is proved in Theorem~\ref{th.lower}.

\medskip For positive instances of the population control problem, our
decision algorithm computes a symbolic strategy $\sigma$, applicable
to all instances $\nfa^m$, which, in particular, does not rely on the
number of agents in each state. This symbolic strategy requires
exponential memory. Further, it is optimal with respect to the
synchronisation time, \emph{i.e.} the maximal number of steps before
synchronisation, which is polynomial in the number of agents.
% this symbolic strategy can send all the agents
% synchronously in the target state in at most a polynomial in the
% number of agents, which is optimal.

\begin{thm}
  % Following the winning strategy $\sigma$, all agents synchronise in
  % $\targestate$ in at most polynomially many steps in the number of
  % agents (and exponentially many in the size of $\nfa$).
  The synchronisation time under the winning strategy $\sigma$ is
  polynomial in the number of agents (and exponential in the size of
  $\nfa$). There is a family of NFA $(\nfa_n)$ with $n$ states, such
  that $m^{\frac{n-2}{2}}$ steps are needed by any strategy to
  synchronise $m$ agents.
    %{\color{red} 
% After at most a polynomial number of steps in the number of agents (and exponential in the NFA), all agents are in the winning states when following strategy $\sigma$. Further, this is optimal as
% there exists NFAs with $|Q|$ states such that $m^\frac{|Q|-2}{2}$ 
% steps are required by any strategy to synchronize $m$ agents.}
\end{thm}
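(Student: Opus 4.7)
The proof splits into an upper bound on the synchronisation time under the strategy $\sigma$ constructed in Section~\ref{sec:parity}, and a matching lower bound via an explicit NFA family $(\nfa_n)_n$.

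For the upper bound, my plan is to exploit the symbolic nature of $\sigma$, which prescribes an action based only on the transfer graph currently chosen by \playerone\ in the parity arena. I would decompose any play under $\sigma$ into \emph{phases}, each associated with a single transfer graph $G$ and ending as soon as at least one additional agent is safely routed into $\targetstate$. A single phase follows a path in the parity arena, so its length is bounded by the number of vertices there, i.e. exponential in $|Q|$. To control the number of phases, I would design a lexicographic potential function $\Phi$ on configurations, measuring the counts of agents in each state weighted by its ``depth'' in the chosen transfer graph towards $\targetstate$. The central invariant is that $\Phi$ strictly decreases from one phase to the next; since $\Phi$ takes at most $m^{O(|Q|)}$ values, at most this many phases occur, yielding the claimed $m^{O(|Q|)}\cdot 2^{O(|Q|)}$ bound.

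For the lower bound, my plan is to construct $\nfa_n$ by nesting splitter gadgets in the spirit of $\splitnfa$ (Figure~\ref{fig:splitgadget}). Each extra level $\ell$ adds two fresh states $p_{\ell,0}, p_{\ell,1}$ reachable from the start state, with actions that only clear one of them when the lower-level splitter has been completely re-synchronised. With $k=(n-2)/2$ nested splitters, the adversary always splits the agents evenly between the two branches at every active level, so that neither branch becomes empty prematurely; this forces \playerone\ to fully re-process the lower sub-structure before being able to eliminate a single agent at the current level. A straightforward induction on $k$, using the fact that the outermost splitter already costs $\Theta(m)$ steps as in $\splitnfa$, yields the desired $m^{k}=m^{(n-2)/2}$ lower bound.

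The main technical obstacle on the upper bound side is identifying a potential function that simultaneously reflects progress through the exponentially-large parity arena and is bounded by a polynomial of degree only $O(|Q|)$ in $m$; a naive choice (e.g.\ counting states of the parity arena) would inflate the bound to exponential in $m$. On the lower bound side, the delicate point is verifying that the adversary's uniform splitting strategy really is optimal in the nested construction, ruling out any sequence of actions by \playerone\ that would allow shortcutting an inner level.
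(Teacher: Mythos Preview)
Your lower-bound plan has a concrete error. The splitting gadget $\splitnfa$ of Figure~\ref{fig:splitgadget} does \emph{not} cost $\Theta(m)$ steps: properties $(s1)$ and $(s2)$ established later in the paper show that \playerone\ wins $\splitnfa^m$ in $\Theta(\log m)$ steps, and the adversary cannot force more. Nesting $\splitnfa$-style splitters therefore yields only polylogarithmic lower bounds, not $m^{(n-2)/2}$. The paper's construction instead uses $\cA_{\text{time}}$ (Figure~\ref{fig:tryretry}) as the base gadget: there the adversary can always send exactly one agent to $q_\bot$ and the rest to $q_\top$, forcing $\Theta(m^2)$ steps before synchronisation (Lemma~\ref{slow}). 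Replacing the $\stateinit\to q_\top$ edge by a fresh copy of the gadget yields $\cA_4$ with $\Theta(m^4)$ steps, and iterating gives an $\ell$-layered NFA with $4\ell+2$ states requiring $\Theta(m^{2\ell})$ steps (Corollary~\ref{c.exp}). The choice of base gadget is not cosmetic: you need a gadget whose worst-case time is already polynomial in $m$, and $\splitnfa$ is logarithmic.

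Your upper-bound sketch is also quite far from the paper's argument, and in its current form is not a proof. The phase definition is internally inconsistent: if each phase ``ends as soon as at least one additional agent is safely routed into $\targetstate$'' there are at most $m$ phases, which contradicts your later claim that the number of phases is bounded by the $m^{O(|Q|)}$ values of $\Phi$. More importantly, you explicitly identify the potential function as the unresolved ``main technical obstacle,'' so the plan is really a statement of what would have to be proved rather than a proof. The paper takes a different route (Theorem~\ref{th.steps}): it argues by contradiction that if a play under $\sigma_\nfa$ exceeds $m^{|Q|}\cdot 2^{|Q|^4}$ steps then some memory state repeats $m^{|Q|}$ times, decomposing the run into that many loops. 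A structural lemma (Lemma~\ref{l.leak}) shows that any loop $H$ around a reachable $(\Memstate,S)$ that is consistent with $\sigma_\nfa$ admits a partition $S=T\uplus U$ with $H(U)\subseteq U$ and $H(T)\cap U\neq\emptyset$; since at most $m$ agents can cross from the $T$-side to the $U$-side, one finds a sub-interval of length $>m^{|Q|-1}$ with no crossing. Iterating this partitioning on the sub-interval produces, after $|Q|$ rounds, $|Q|+1$ pairwise disjoint non-empty subsets of $Q$, a contradiction. There is no phase decomposition and no global potential; the exponent $|Q|$ comes from the depth of this recursive partitioning.
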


%{\color{red} 
The upper bound is stated in Theorem~\ref{th.steps}, and the lower
bound in Corollary~\ref{c.exp}.
% }

\medskip For negative instances to the population control problem, the
cut-off is at most doubly exponential, which is asymptotically tight.
\begin{thm}
\label{th2}
In case the answer to the population control problem is negative, the
cut-off is at most $\leq 2^{2^{O(|Q|^4)}}$. There is a family of
NFA $(\nfa_n)$ of size $O(n)$ and whose cut-off is $2^{2^{n}}$.
\end{thm}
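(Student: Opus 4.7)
}

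Both the upper and lower bounds will be derived by exploiting the equivalence between the population control problem and the parity game constructed in Theorem~\ref{th.parity}. Recall that this parity game has $2^{O(|Q|^4)}$ states and polynomially many priorities, so positional determinacy applies. The overall plan is to convert quantitative information about this parity game (size, branching, cycle length) into bounds on the number of agents needed to translate \playertwo's strategy into the $m$-population game, and conversely to build a family witnessing that this translation blow-up is unavoidable.

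\textbf{Upper bound.} Suppose the answer to the population control problem is negative. By Theorem~\ref{th.parity}, \playertwo\ has a positional winning strategy $\tau$ in the parity game, described at each visited position by a \emph{transfer graph} prescribing how agents should flow from their current state to successor states. My plan is to show that $\tau$ can be faithfully realised in the $m$-population game as soon as $m$ is large enough for every prescribed ``split'' to leave at least one agent in each branch. Concretely, I would argue (i) that along any play conforming to $\tau$, a position of the parity game repeats after at most $N = 2^{O(|Q|^4)}$ steps, so it suffices to preserve the splitting invariant over this horizon; and (ii) that at each step the out-degree of a transfer graph is at most $|Q|$, so starting from $m$ agents in the relevant state, the ``worst branch'' still contains at least $m / |Q|^N$ agents after $N$ steps. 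Choosing $m \geq |Q|^N = 2^{2^{O(|Q|^4)}}$ therefore guarantees that \playertwo\ never has to empty a branch, which means his parity-game strategy is realisable and winning. The main technical difficulty here will be to handle carefully the interplay between the length of the parity game play and the effective branching performed along it, and to verify that cyclic re-entries into a parity-game state do not compound the splitting cost (this is what allows bounding $N$ by the number of states rather than by an unbounded unfolding).

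\textbf{Lower bound.} I would construct an NFA family $(\nfa_n)_{n \geq 1}$ of size $O(n)$ realising a cascade of $n$ ``doubling'' layers so that controlling the $m$-population game forces \playerone\ into a situation whose winning condition encodes a doubly-exponential counter. The base ingredient is the splitting gadget $\splitnfa$ of Figure~\ref{fig:splitgadget}: once \playerone\ enters the splitting part, \playertwo\ can split the incoming agents equally between $\state_1$ and $\state_2$, so each round of use effectively halves the useful population. My plan is to chain $n$ slightly enhanced copies of such a gadget in sequence, using a small number of fresh control states per layer, in such a way that (a) each layer must be traversed exponentially many times before \playerone\ can proceed to the next, and (b) each traversal of layer $i$ requires a successful traversal of layer $i-1$. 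Layer $i$ thus multiplies the number of agents consumed by a factor of roughly $2^{c_i}$ for some $c_i$, and chaining $n$ layers yields a total requirement of $2^{c_1 + c_2 + \dots + c_n}$; a careful balancing where $c_i$ grows doubly with $i$ (e.g.\ by having layer $i$ itself re-enter and reuse the previous layers) produces the desired $2^{2^n}$ bound. The hard part, and where I expect to spend most effort, is to design the combinators linking successive layers so that \playertwo\ can always force the doubling even when \playerone\ plays cleverly, while keeping the overall state count linear in $n$; one then verifies by induction on $n$ that \playertwo\ wins the $m$-population game whenever $m \geq 2^{2^n}$ but loses otherwise, giving the matching cut-off.
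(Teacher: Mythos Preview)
Your high-level plan is the right one and matches the paper's, but the step you flag as ``the main technical difficulty'' is in fact the whole argument, and your sketch does not resolve it. The claim that ``cyclic re-entries into a parity-game state do not compound the splitting cost'' is false as stated: even if \playertwo\ plays positionally in the parity game, \playerone\ in the $m$-population game sees the full configuration, not just the parity-game position, and can play differently on each visit. So the sequence of transfer graphs need not be periodic, and the number of agents in a given state can strictly decrease across repeated visits to the same parity-game position. The paper closes this gap differently: rather than arguing about repeated positions, it shows (Proposition~\ref{th.2wins}) that any play consistent with \playertwo's finite-memory winning strategy in the capacity game has \emph{bounded capacity}, meaning every accumulator has at most $B = |\MemSet|\cdot 4^{|Q|}$ entry times. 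This bound on entry times (not on cycle length) is what guarantees that splitting evenly never divides by more than $|Q|^{B}$, via the argument in Lemma~\ref{lem:noleak}. With $|\MemSet| = 2^{|Q|^4}$ from Theorem~\ref{th.parity}, one gets the stated cut-off.

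\paragraph{Lower bound.} Here your approach diverges from the paper's and has a genuine gap. A chain of $n$ splitting gadgets, each halving the population, yields a cut-off of order $2^{n}$, not $2^{2^{n}}$. To push this to doubly exponential you propose that ``each layer must be traversed exponentially many times before \playerone\ can proceed'', but enforcing exponentially many traversals with $O(1)$ fresh states per layer is precisely the hard part, and your sketch gives no mechanism for it; your accounting $2^{c_1+\cdots+c_n}$ with $c_i$ ``growing doubly'' already requires $\sum c_i = 2^n$, which cannot be achieved by $n$ constant-size combinators without an additional idea. The paper's construction is structurally different and much simpler: it takes the \emph{disjoint union} (not a cascade) of two gadgets running in parallel. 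One is the splitting gadget $\splitnfa$, where \playerone\ needs $\Theta(\log m)$ steps to win; the other is an $n$-bit binary counter $\countnfa{n}$ (with $2n{+}1$ states) in which \playerone\ can survive at most $2^{n}$ steps before an agent hits $\frownie$. \playerone\ must win in $\splitnfa$ before dying in $\countnfa{n}$, which forces $\log m \le 2^{n}$, i.e.\ a cut-off of $2^{2^{n}}$ with $O(n)$ states total. The doubly-exponential blow-up thus comes from the interaction $\log m \approx 2^{n}$ between a step-counting gadget and a log-time-to-win gadget, not from iterated splitting.
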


%{\color{red}
Concerning the cut-off, the upper bound derives from results of
Theorem \ref{th.parity} (about the size of \playertwo' winning
strategy) combined with Proposition \ref{th.2wins}.  The lower bound
is stated in Proposition \ref{prop:cutoff-lowerbound}
%}
%%% Local Variables:
%%% mode: latex
%%% TeX-master: "lmcs"
%%% End:

% !TEX root = BDGG-concur17.tex

\section{The \ror\ game}
\label{sec:capacity-game}
The objective of this section is to show that the population control
problem is equivalent to solving a game called the \emph{\ror\ game}.
To introduce useful notations, we first recall the population game
with infinitely many agents, as studied in \cite{Martyugin-tocs14} (see also \cite{Shi14} p.81).

\subsection{The $\infty$-population game}
\label{sec:resolution}
\label{subsec:game-support}
To study the $\infty$-population game, the behaviour of infinitely
many agents is abstracted into \emph{supports} which keep track of the
set of states in which at least one agent is. We thus introduce the
\emph{support game}, which relies on the notion of \emph{transfer
  graphs}. Formally, a transfer graph is a subset of $Q\times Q$
describing how agents are moved during one step.  The domain of a
transfer graph $G$ is $\dom(G) = \{ \state \in Q \mid \exists
(\state,r) \in G\}$ and its image is $\im(G) = \{r \in Q \mid \exists
(\state,r) \in G\}$.  Given an NFA $\nfa =(\states, \Sigma,
\state_0,\Delta)$ and $a \in \Sigma$, the transfer graph $G$ is
compatible with $a$ if for every edge $(q,r)$ of $G$, $(q,a,r) \in
\Delta$. We write $\GG$ for the set of transfer graphs.

The \emph{support game} of an NFA $\nfa$ is a two-player reachability
game played by \playerone\ and \playertwo\ on the \emph{support arena}
as follows. States are supports, \emph{i.e.}, non-empty subsets of
$\states$ and the play starts in $\{\state_0\}$. 
The goal support is $\{\targetstate\}$.
From a support $S$,
first \playerone\ chooses a letter $a \in \Sigma$, then \playertwo\
chooses a transfer graph $G$ compatible with $a$ and such that
$\dom(G)=S$, and the next support is $\im(G)$.  A play in the support
arena is described by the sequence $\rho = S_0 \arr{a_1,G_1} S_1
\arr{a_2,G_2}\ldots$ of supports and actions (letters and transfer
graphs) of the players.
%The goal of \playerone\ in the support game is
%to reach.
Here, \playertwo' best strategy is to play the maximal graph possible, 
% (this is not the case with discrete populations),
and we obtain a \PSPACE algorithm~\cite{Martyugin-tocs14}, and problem is
\PSPACE-complete:
\begin{prop}%[\cite{Martyugin-tocs14}]
\label{prop:support-implies-suresync}
%\cite{Martyugin-tocs14}
\playerone\ wins the $\infty$-population game 
iff he wins the support game.
\end{prop}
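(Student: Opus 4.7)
My plan is to prove the equivalence by a simulation in each direction, based on the natural \emph{support abstraction} that maps every $\infty$-population configuration to the set of states occupied by at least one agent. The bridge between the two games is the observation that a step in the $\infty$-population game in which \playerone\ plays $a$ and the support evolves from $S$ to $S'$ is witnessed by a transfer graph $G$ compatible with $a$ with $\dom(G) = S$ and $\im(G) = S'$; conversely, every such $G$ can be realized by \playertwo\ in the $\infty$-population game provided every $q \in S$ is occupied by sufficiently many agents. Since the support game is a finite reachability game it is determined, so it suffices to show that (a) a winning strategy of \playerone\ in the support game induces one in the $\infty$-population game, and (b) a winning strategy of \playertwo\ in the support game induces one for \playertwo\ in the $\infty$-population game.

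For (a), I lift \playerone's support-game winning strategy $\sigma$ to the $\infty$-population strategy $\hat\sigma$ which, given a finite play, computes the induced sequence of supports $S_0 S_1 \cdots S_n$ and replies with $\sigma(S_0 \cdots S_n)$. Any play of the $\infty$-population game under $\hat\sigma$ projects to a valid support-game play under $\sigma$: after \playerone\ plays $a$ from a configuration with support $S$, \playertwo's response induces the transfer graph $G = \{ (q,r) \mid \text{some agent at } q \text{ moves to } r \}$, which has $\dom(G) = S$, is compatible with $a$, and has image equal to the new support---exactly a legal \playertwo\ move in the support game. Because $\sigma$ is winning, the support sequence reaches $\{\targetstate\}$ in finitely many steps, and at that moment all agents sit in $\targetstate$, so \playerone\ wins the $\infty$-population game.

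For (b), I lift \playertwo's winning strategy $\tau$ in the support game to a strategy $\hat\tau$ in the $\infty$-population game defined inductively, maintaining the invariant that the current support equals the state reached in the support game under $\tau$ and, crucially, that every state of this support is occupied by infinitely many agents. The invariant holds initially, since $\vstateinit$ places all (infinitely many) agents in $\stateinit$. When \playerone\ plays $a$, let $G$ be the transfer graph prescribed by $\tau$ on the corresponding support-game history; since each $q$ in the current support hosts infinitely many agents and the set $\{ r : (q,r) \in G \}$ of targets of $q$ is nonempty and finite, \playertwo\ can route infinitely many agents of $q$ to each such $r$. The resulting configuration has support exactly $\im(G)$ with infinite multiplicity everywhere, preserving the invariant. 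Because $\tau$ is winning in the support game, the support never equals $\{\targetstate\}$ and \playerone\ never synchronises. The main subtlety lies precisely in this infinite-occupancy invariant: it is what separates the $\infty$-population setting from the finite-$m$ case, where \playertwo\ might lack enough agents to spread among all targets of $G$, and it is exactly the gap that motivates the parameterized problem studied in the remainder of the paper.
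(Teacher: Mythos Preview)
Your proof is correct and follows essentially the same approach as the paper: both directions use the support abstraction/projection, lift \playerone's winning strategy via this projection for the forward implication, and for the converse invoke determinacy of the (finite reachability) support game and lift \playertwo's winning strategy by routing infinitely many agents along each edge of the prescribed transfer graph, maintaining the invariant that every state of the current support is occupied by infinitely many agents. The only cosmetic difference is that the paper packages the projection as a map $\Phi$ and writes \playerone's lifted strategy as $\sigma\circ\Phi$, whereas you describe the same construction in words.
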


\begin{proof}
  Let $\play =\vstateinit \action_1 \vstate_1
  \ldots \vstate_{n-1}\action_{n} \vstate_{n} \ldots$ be an infinite (or a finite) play of the $\infty$-population
  game:
agent $i \in \nats$ is in state $\vstate_k[i]$ at step $k$.  
  By only observing the support of the states and the transfer graphs, we
  can project this play onto the support arena. More precisely, 
  denoting $S_k = \{\vstate_k[i] \mid i \in \nats \}$ and $G_{k+1}=\{
  (\vstate_k[i],\vstate_{k+1}[i]) \mid  i \in \nats \}$
 for every $k$, 
  we have  $\Phi(\play) =
  S_0 \arr{\action_1,G_1} S_1 \cdots S_{n-1}
  \arr{\action_{n},G_n} S_{n} \cdots$ is a valid play in the \emph{support arena}. 
  
Hence if \playerone\ can win the support game with strategy $\sigma$, then
\playerone\ can use the strategy $\sigma \circ \Phi$ in the $\infty$-population game. This
is a winning strategy since the projection in the support arena should eventually reach $\{ f \}$.

On the other hand if \playerone\ doesn't have a winning strategy in the support
game, then by determinacy of reachability games, \playertwo\ has a strategy in
the support game to avoid reaching $\{f\}$. This strategy can be extended to a
strategy in the $\infty$-population game by sending infinitely many
agents along each edge of the chosen transfer graph. This can always be done
because, inductively, there are infinitely many agents in each state.
\end{proof}

%\hugo 
%As a consequence, the winner of the $\infty$-population game  can be %computed in \EXPTIME.
%However, 
Perhaps surprisingly, when it comes to finite populations, the support
game cannot be exploited to solve the population control
problem. Indeed, \playerone\ might win every $m$-population game (with
$m <\infty$) and at the same time lose the $\infty$-population game.
The example from Figure~\ref{fig:splitgadget} witnesses this
situation.  As already shown, \playerone\ wins any $m$-population game
with $m <\infty$. However, \playertwo\ can win the $\infty$-population
game by splitting agents from $q_0$ to both $q_1$ and $q_2$ each time
\playerone\ plays $\delta$.  This way, the sequence of supports is
$\{q_0\} \{q_1,q_2\} (\{q_0,f\} \{q_1,q_2,f\})^*$, which never hits
$\{f\}$.

\subsection{Realisable plays}

%If \playerone\ wins the support game then he
%%If \playerone\ wins the $\infty$-population game, then he 
%wins every
%$m$-population game. % (with $m\in \mathbb{N}$).
%Thus, , then the answer to the
%population control problem is positive.  The converse however is not
%true as demonstrated by the example from Figure~\ref{fig:splitgadget}:
%As already shown, \playerone\ wins any $m$-population game (with
%$m <\infty$). However, he loses the $\infty$-population game. Indeed,
%when \playerone\ plays $(\delta \cdot (a \vee b))^*$, \playertwo\ has
%a counterstrategy which is to always split agents from $q_0$ to both
%$q_1$ and $q_2$. In this way, the sequence of supports is
%$\{q_0\} \{q_1,q_2\} (\{q_0,f\} \{q_1,q_2,f\})^*$, which never hits
%$\{f\}$.
  
Plays of the $m$-population game (for $m <\infty$) can be abstracted
as plays in the support game, forgetting the identity of agents and
keeping only track of edges that are used by at least one agent.
Formally, given a play
$\play =\vstateinit \action_0 \vstate_1 \action_1 \vstate_2\cdots$ of
the $m$-population game, define for every integer $n$,
$S_n = \{\vstate_n[i] \mid 1\leq i\leq m\}$ and
$G_{n+1}=\{ (\vstate_n[i],\vstate_{n+1}[i]) \mid 1\leq i\leq m\}$.
%and $G_n=\{ (s,t) \mid \exists 1\leq i\leq m, (\vstate_n[i],\vstate_{n+1}[i])=(s,t)\}$.
We denote   $\Phi_m(\play)$ the play 
$S_0 \arr{\action_1,G_1} S_1 \arr{\action_2,G_2}\ldots$ 
in the support arena, called the projection of $\play$.

Not every play in the support arena can be obtained by projection.
%, as 
%in the example from Figure~\ref{fig:splitgadget} where
%some plays of the support game use infinitely often the edge from 
%$q_1$ to $f$. However, in any $m$-population game each of the $m$ %agents might use this edge at most once. 
This is the reason for introducing the notion of realisable plays:

\begin{defi}[Realisable plays]
A play of the support game is \emph{realisable} if there exists $m
<\infty$ such that it is the projection by $\Phi_m$ of a play
in the $m$-population game.
\end{defi}

%A key observation is that 
To characterise realisability, we introduce entries of accumulators:

% In the next section
% we introduce another game,
% called the \ror\ game, which makes \playerone\
% wins when  \playertwo\ cheats by creating non-realisable plays.

\begin{defi}
  Let $\rho = S_0 \arr{a_1,G_1} S_1 \arr{a_2,G_2}\ldots$ be a play in
  the support arena.  An \emph{accumulator} of $\rho$ is a sequence
  $T=(T_j)_{j\in \NN}$ such that for every integer $j$,
  $T_j \subseteq S_j$, and which is \emph{successor-closed}
  \emph{i.e.}, for every $j \in \nats$,
  $(s \in T_j \land (s,t)\in G_{j+1} )\implies t \in T_{j+1}\enspace.$
  For every $j \in \nats$, an edge $(s,t)\in G_{j+1}$ is an
  \emph{entry} to $T$ if $s\not \in T_j$ and $t\in T_{j+1}$; such an
  index $j$ is called an \emph{entry time}.
\end{defi}

%HERE%
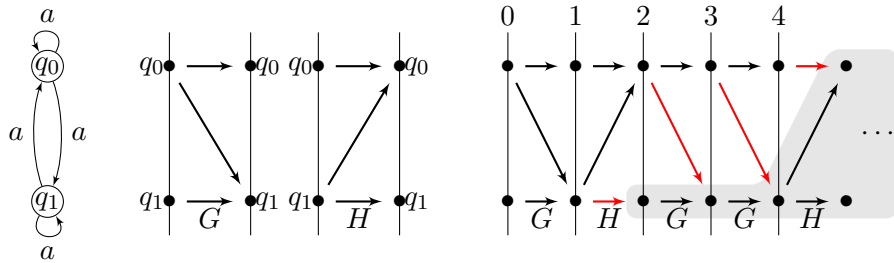
\begin{figure}[htbp]
\begin{center}
\begin{tikzpicture}[scale=0.9]
\draw (-1.8,0) node [circle,draw,inner sep=0pt,minimum size=12pt ]
  (s0) {$\state_0$} ;

  \draw(-1.8,-2) node [circle,draw,inner sep=0pt,minimum size=12pt ]
  (s1) {$\state_1$} ;

 \draw [-latex'] (s1) .. controls +(110:20pt) and +(250:20pt)  .. (s0)
 node [pos=.5,left] {$a$};
 \draw [-latex'] (s0) .. controls +(290:20pt) and +(70:20pt)  .. (s1)
 node [pos=.5,right] {$a$};

\draw [-latex']  (s0) .. controls +(60:20pt) and +(120:20pt) .. (s0)
node[midway,above]{$a$};

\draw [-latex']  (s1) .. controls +(240:20pt) and +(300:20pt) .. (s1)
node[midway,below]{$a$};

%%% transfer graph H %%%

\draw (0,.5) -- (0,-2.5);
\draw (1.2,.5) -- (1.2,-2.5);

\draw (0,0) node (q0Gs) {$\bullet$};
\node  at (q0Gs.180) {$\state_0$}; 
\draw (0,-2) node (q1Gs) {$\bullet$};
\node  at (q1Gs.180) {$\state_1$}; 
\draw(1.2,0) node (q0Gt) {$\bullet$};
\node  at (q0Gt.0) {$\state_0$}; 
\draw(1.2,-2) node (q1Gt) {$\bullet$};
\node  at (q1Gt.0) {$\state_1$}; 

\draw[-latex',thick] (q0Gs) -- (q0Gt);
\draw[-latex',thick] (q0Gs) -- (q1Gt);
\draw[-latex',thick] (q1Gs) -- (q1Gt);

\node at (.6, -2.25) (G) {$G$};

%%% transfer graph H %%%

\draw (2.2,.5) -- (2.2,-2.5);
\draw (3.4,.5) -- (3.4,-2.5);

\draw (2.2,0) node (q0Hs) {$\bullet$};
\node  at (q0Hs.180) {$\state_0$}; 
\draw (2.2,-2) node (q1Hs) {$\bullet$};
\node  at (q1Hs.180) {$\state_1$}; 
\draw(3.4,0) node (q0Ht) {$\bullet$};
\node  at (q0Ht.0) {$\state_0$}; 
\draw(3.4,-2) node (q1Ht) {$\bullet$};
\node  at (q1Ht.0) {$\state_1$};

\draw[-latex',thick] (q1Hs) -- (q0Ht);
\draw[-latex',thick] (q1Hs) -- (q1Ht);
\draw[-latex',thick] (q0Hs) -- (q0Ht);

\node at (2.8, -2.25) (H) {$H$};

%%% play %%%

\fill [fill=black!10,rounded corners] (6.75,-2.25) -- (10.75,-2.25)  -- (10.75,.25) --
(9.75,.25) -- (8.75,-1.75) -- (6.75,-1.75) -- cycle;

\node at (5,.75) {$0$};
\draw (5,.5) -- (5,-2.5);
\node at (6,.75) {$1$};
\draw (6,.5) -- (6,-2.5);

\node at (5.5,-2.25) (G1) {$G$};

\draw (5,0) node (q0Gs) {$\bullet$};
\draw (5,-2) node (q1Gs) {$\bullet$};
\draw(6,0) node (q0Gt) {$\bullet$};
\draw(6,-2) node (q1Gt) {$\bullet$};

\draw[-latex',thick] (q0Gs) -- (q0Gt);
\draw[-latex',thick] (q0Gs) -- (q1Gt);
\draw[-latex',thick] (q1Gs) -- (q1Gt);

\node at (7,.75) {$2$};
\draw (7,.5) -- (7,-2.5);

\draw (7,0) node (q0Hs) {$\bullet$};
\draw (7,-2) node (q1Hs) {$\bullet$};

\draw[-latex',thick] (q1Gt) -- (q0Hs);
\draw[-latex',thick,red] (q1Gt) -- (q1Hs);
\draw[-latex',thick] (q0Gt) -- (q0Hs);

\node at (6.5,-2.25) (H1) {$H$};
\node at (8,.75) {$3$};
\draw (8,.5) -- (8,-2.5);
\node at (9,.75) {$4$};
\draw (9,.5) -- (9,-2.5);
\draw (8,0) node (q0Gs) {$\bullet$};
\draw (8,-2) node (q1Gs) {$\bullet$};
\draw(9,0) node (q0Gi) {$\bullet$};
\draw(9,-2) node (q1Gi) {$\bullet$};

\draw[-latex',thick] (q0Hs) -- (q0Gs);
\draw[-latex',thick,red] (q0Hs) -- (q1Gs);
\draw[-latex',thick] (q1Hs) -- (q1Gs);
\draw[-latex',thick] (q0Gs) -- (q0Gi);
\draw[-latex',thick,red] (q0Gs) -- (q1Gi);
\draw[-latex',thick] (q1Gs) -- (q1Gi);

\node at (7.5,-2.25) (G2) {$G$};
\node at (8.5,-2.25) (G3) {$G$};

\draw(10,0) node (q0Hs) {$\bullet$};
\draw(10,-2) node (q1Hs) {$\bullet$};

\draw[-latex',thick,red] (q0Gi) -- (q0Hs);
\draw[-latex',thick] (q1Gi) -- (q0Hs);
\draw[-latex',thick] (q1Gi) -- (q1Hs);

\node at (9.5,-2.25) (G4) {$H$};

\node at (10.5,-1) (dots) {$\cdots$};
\end{tikzpicture}
\end{center}
\caption{An NFA, two transfer graphs, and a play with finite yet
  unbounded capacity.}
\label{fig:capacity}
\end{figure}

Figure~\ref{fig:capacity} illustrates the notions we just introduced:
it contains an NFA (left), two transfer graphs $G$ and $H$ (middle),
and $\rho = G H G^2 H G^3 \cdots$ a play in the support arena (right). The grey zone is an accumulator defined by
$T_0=T_1=\emptyset, T_2=T_3=T_4=\{q_1\}$ and $T_n = \{q_0,q_1\}$ for
all $n \geq 5$.

\begin{defi}[Plays with finite and bounded capacity]
A play has \emph{finite capacity} if all its accumulators have finitely many
entries (or entry times),
\emph{infinite capacity} otherwise, and  \emph{bounded capacity} if
the number of entries (or entry times) of its accumulators is bounded.
\end{defi}

Continuing with the example of Figure \ref{fig:capacity}, entries 
of the accumulator are depicted in red. 
% An example is given on Figure~\ref{fig:capacity} which presents the
% NFA, two transfer graphs $G$ and $H$, and a play
% $G H G^2 H G^3 \cdots$.
The play $\rho = G H G^2 H G^3 \cdots$ is not realisable in any $m$-population game, since at
least $n$ agents are needed to realise $n$ transfer graphs $G$ in a
row: at each $G$ step, at least one agent moves from $q_0$ to $q_1$,
and no new agent enters $q_0$. Moreover, let us argue that $\rho$ has
unbounded capacity. A simple analysis shows that there are only two
kinds of non-trivial accumulators $(T_j)_{j \in \nats}$ depending on
whether their first non-empty $T_j$ is $\{\state_0\}$ or
$\{\state_1\}$. We call these top and bottom accumulators,
respectively. All accumulators have finitely many entries, thus the
play has finite capacity.  However, for every $n \in \nats$ there is a
bottom accumulator with $2 n$ entries.  Therefore, $\rho$ has
unbounded capacity, and it is not realisable.

\medskip

We show that in general, realisability is equivalent to {\em bounded}
capacity:

\begin{lem}\label{lem:noleak}
A play is realisable iff it has bounded capacity.
\end{lem}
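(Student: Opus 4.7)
The proof splits along the biconditional.

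For $(\Rightarrow)$, suppose $\rho = \Phi_m(\pi)$ for some $m$-agent play $\pi$, and let $T = (T_j)_j$ be any accumulator. Successor-closure forces each agent's trajectory to enter $T$ at most once: once its state lies in some $T_j$, all its later states lie in the corresponding $T_{j'}$. Every entry edge of $T$ is, by definition of $G_{j+1}$, traversed by at least one agent, and distinct entry edges must be witnessed by distinct agents (no agent can enter $T$ twice). Hence the number of entries, and a fortiori the number of entry times, is at most $m$, showing capacity $\leq m$.

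For $(\Leftarrow)$, assume every accumulator has at most $N$ entries. My plan is (a) to show every finite prefix admits a cover by $N$ trajectories via a flow-duality argument, then (b) to stitch these into an infinite realisation via Koenig's lemma. For (a), covering all edges of $G_1, \ldots, G_n$ by the minimum number of $\state_0$-rooted trajectories is an integer min-flow problem on the layered DAG whose nodes are the pairs $(s,j)$ for $s \in S_j$, $0\leq j\leq n$, with a super-source feeding $(\state_0,0)$ and an arc $(s,j) \to (t,j+1)$ of unit lower bound and infinite upper capacity for each $(s,t) \in G_{j+1}$. By the standard min-flow/max-cut duality for networks with lower bounds, with integrality secured by total unimodularity of the incidence matrix of this layered DAG, the minimum flow equals the maximum, over cuts with no backward arcs, of the number of forward arcs. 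Such cuts are precisely successor-closed families $(T_j)_{0\leq j\leq n}$ (i.e.\ accumulators of the prefix), and the cut value is exactly the number of entries. The bounded-capacity hypothesis then gives min flow $\leq N$, providing the required cover.

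For (b), consider the tree whose depth-$n$ nodes are the $N$-tuples of length-$n$ trajectories covering the first $n$ transfer graphs, with parents given by truncating the last edge of each trajectory. This tree is finitely branching (at most $|Q|^N$ one-step extensions per node) and non-empty at every depth by (a), so Koenig's lemma yields an infinite branch, i.e.\ $N$ infinite trajectories whose union realises $\rho$.

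The main obstacle is the duality step (a): verifying (i) the bijection between cuts with no backward arcs in this specific layered network and successor-closed families on the prefix, (ii) that the cut value equals the number of entry edges of the corresponding accumulator, and (iii) that the LP optimum is attained by an integer flow. These are standard but warrant careful spelling out; the compactness step (b) is then routine.
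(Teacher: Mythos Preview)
Your forward direction is essentially the paper's: both exploit that an agent, once inside an accumulator, never leaves it, so the $m$ agents bound the number of entries (you count entry \emph{edges}, the paper counts entry \emph{times} via the monotone sequence $n_j=|\{i:\vstate_j[i]\in T_j\}|$; these are equivalent up to a $|Q|^2$ factor).

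Your converse is correct but takes a genuinely different route. The paper gives an explicit, \emph{online} construction: start with $|S_0|\cdot|Q|^{m+1}$ agents and, at every step, spread the agents in each state evenly among the outgoing edges of the current transfer graph. To show each state $t\in S_n$ stays populated, the paper forms the backward-reachability sets $U_j$ from $t$, observes that $S_j\setminus U_j$ is an accumulator with at most $m$ entry times, and deduces that the agent count in $U_j$ drops by at most a factor $|Q|$ at each such time. Your min-flow/max-cut argument is sound---the backward-arc-free cuts in the layered network are exactly the successor-closed families, the cut value is the number of entry edges, and integrality follows from total unimodularity---and it yields the sharper bound that $N$ agents suffice, versus the paper's exponential $|S_0|\cdot|Q|^{m+1}$. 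The trade-off is constructivity: the paper's ``split evenly'' rule is a step-by-step strategy for \playertwo, and precisely this strategy is reused verbatim in the proof of Proposition~\ref{th.2wins} to build \playertwo's winning strategy in the $m$-population game. Your K\"onig-based realisation is retrospective (it picks trajectories after seeing the whole play) and would not directly supply that online strategy, so if you adopt this proof you would need a separate argument for Proposition~\ref{th.2wins}.
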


\begin{proof}
  Let $\rho = S_0\arr{a_1,G_1} S_1 \arr{a_2,G_2}\cdots$ be a
  realisable play in the support arena and
  $\play=\vstate_0a_1\vstate_1a_2\vstate_2 \cdots$ a play in the
  $m$-population game for some $m$, such that $\Phi_m(\play) =
  \rho$. For any accumulator $T=(T_j)_{j\in \NN}$ of
  $\rho$, let us show that $T$ has less than $m$ entries.  For every
  $j\in\NN$, we define
  $n_j=\mid \{ 1 \leq k \leq m \mid \vstate_j(k) \in T_j\}\mid $ as
  the number of agents in the accumulator at index $j$.  By definition
  of the projection, every edge $(s,t)$ in $G_j$ corresponds to the
  move of at least one agent from state $s$ in $\vstate_j$ to state
  $t$ in $\vstate_{j{+}1}$.  Thus, since the accumulator is
  successor-closed, the sequence $(n_j)_{j\in\NN}$ is non-decreasing
  and it increases at each entry time. The number of entry times is
  thus bounded by $m$ the number of agents.

  Conversely, assume that a play
  $\rho = S_0\arr{a_1,G_1} S_1 \arr{a_2,G_2}\cdots$ has bounded
  capacity, and let $m$ be an upper bound on the number of entry times
  of its accumulators. Let us show that $\rho$ is the projection of a
  play $\play=\vstate_0a_1\vstate_1a_2\vstate_2 \cdots$ in the
  $(|S_0||Q|^{m+1})$-population game. In the initial configuration
  $\vstate_0$, every state in $S_0$ contains $|Q|^{m+1}$ agents.
  Then, configuration $\vstate_{n+1}$ is obtained from $\vstate_{n}$
  by spreading the agents evenly among all edges of $G_{n+1}$. As a
  consequence, for every edge $(s,t)\in G_{n+1}$ at least a fraction
  $\frac{1}{|Q|}$ of the agents in state $s$ in $\vstate_{n}$ moves to
  state $t$ in $\vstate_{n+1}$. By induction,
  $\play=\vstate_0 a_1 \vstate_1 a_2 \vstate_2 \cdots$ projects to
  some play $\rho' = S'_0\arr{a_1,G'_1} S'_1 \arr{a_2,G'_2}\cdots$
  such that for every $n\in \NN$, $S'_n\subseteq S_n$ and
  $G'_n\subseteq G_n$. To prove that $\rho'=\rho$, we show that for
  every $n\in \NN$ and state $t\in S_n$, at least $|Q|$ agents are in
  state $t$ in $\vstate_{n}$.  For that let
  $(U_j)_{j\in \{0\ldots n\}}$ be the sequence of subsets of $Q$
  defined by $U_n= \{t\}$, and for $0 < j < n$,
  \[
U_{j-1} = \{ s \in Q \mid \exists t' \in U_j, (s,t')\in G_j\}\enspace.
  \]

  In particular, $U_0=S_0$.  Let $(T_j)_{j\in \NN}$ be the sequence of
  subsets of states defined by $T_j= S_j \setminus U_j$ if $j\leq n$
  and $T_j=S_j$ otherwise.  Then $(T_j)_{j\in \NN}$ is an accumulator:
  if $s\not \in U_j$ and $(s,s')\in G_j$ then $s'\not \in U_{j{+}1}$.
  As a consequence, $(T_j)_{j\in \NN}$ has at most $m$ entry times,
  thus, there are at most $m$ indices $j\in \{0\ldots n-1\}$ such that
  some agents in the states of $S_{j} \setminus T_{j}=U_{j}$ in
  configuration $\vstate_{j}$ may move to states of $T_{j{+}1}$ in
  configuration $\vstate_{j{+}1}$.  In other words, if we denote $M_j$
  the number of agents in the states of $U_{j}$ in configuration
  $\vstate_{j}$ then there are at most $m$ indices where the sequence
  $(M_j)_{j\in \{0\ldots n\}}$ decreases.  By definition of $\play$,
  even when $M_j > M_{j{+}1}$, at least a fraction $\frac{1}{|Q|}$ of
  the agents moves from $U_{j}$ to $U_{j{+}1}$ along the edges of
  $G_{j{+}1}$, thus $M_{j{+}1} \geq \frac{M_j}{|Q|}$. Finally, the
  number of agents $M_n$ in state $t$ in $\vstate_n$ satisfies
  $M_n \geq \frac{|S_0||Q|^{m+1}}{|Q|^m} \geq |Q|$.  Hence $\rho$ and
  $\rho'$ coincide, so that $\rho$ is realisable.
\end{proof}

%In order to solve the population control problem,
%we design a game on the support arena
%where the non-regular bounded capacity condition is relaxed
%to a regular finite-capacity condition.
%which ensures that the existence of winning strategies with
%finite-memory.

\subsection{The \ror\ game}
An idea to obtain a game on the support arena equivalent with
the population control problem is to 
make \playertwo\
%have a tougher winning condition
%for \playertwo, making him 
lose whenever the play is not
realisable, \emph{i.e.} whenever the play has unbounded capacity.
%Capacity unboundedness does not seem to be a regular property.
%Towards algorithms, we thus relax this notion into the infinite
%capacity condition. The latter is easily recognized by a
%non-deterministic B\"uchi automaton which guesses on-the-fly an
%accumulator with infinitely many entries.
One issue with (un)bounded capacity is however that it is not a regular property for runs. Hence, it is not easy to use it as a winning condition. On the contrary, \emph{finite} capacity is a regular property.
%(the language of plays with infinite capacity can
%be recognized by a small non-deterministic B\"uchi automaton.
%which guesses on-the-fly an accumulator, and checks that it has %infinitely many
%entries).
We thus relax (un)bounded capacity by using (in)finite capacity
and define the corresponding 
abstraction of the population game:

\begin{defi}[\Ror\ game]
  The \emph{\ror\ game} is the game played on the support arena, where
  \playerone\ wins a play iff either the play reaches
  $\{\targetstate\}$ or the play has infinite capacity.
  %does not have finite capacity.
  A player \emph{wins the \ror\ game} if he has a winning strategy in this game.
\end{defi}

We show that this relaxation can be used to decide the population control problem.

\begin{thm}
\label{th.ror}
The answer to the population control problem is positive iff
\playerone\ wins the \ror\ game.
%\textcolor{cyan}{the
%  $m$-population games for all $m < \infty$}.   
\end{thm}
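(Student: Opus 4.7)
The plan is to prove both implications of the theorem, each time exploiting the projection $\Phi_m$ from $m$-population plays onto support-arena plays together with the realisability characterisation of Lemma~\ref{lem:noleak}.

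For the direction from the \ror\ game to the population control problem, suppose \playerone\ wins the \ror\ game with strategy $\sigma$. For any $m$, I would define a strategy $\sigma_m$ in the $m$-population game by having \playerone\ maintain the projected support play and play whichever action $\sigma$ prescribes on that projection. Any response by \playertwo\ in the $m$-population game induces a transfer graph at the support level, so the projection is a play under $\sigma$ in the support arena. Coming from an $m$-population play it is realisable, hence by Lemma~\ref{lem:noleak} it has bounded, and in particular finite, capacity. Since $\sigma$ is winning in the \ror\ game, the projected play must then reach $\{\targetstate\}$, forcing the $m$-population play to reach $\targetstate^m$. So $\sigma_m$ is winning for every $m$.

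For the converse, I would argue contrapositively: if \playerone\ does not win the \ror\ game, then some $m$-population game is lost. The \ror\ winning condition is $\omega$-regular (``infinite capacity'' is a B\"uchi condition recognised by a nondeterministic automaton guessing an accumulator and its entries, so its negation is co-B\"uchi), so the game is determined and admits finite-memory optimal strategies; fix a finite-memory winning strategy $\tau$ for \playertwo. The central step is to prove a uniform bound $N$ on the number of entries of any accumulator of any play under $\tau$. I would build a finite product arena whose states record the current support, the memory state of $\tau$ and the current layer of an accumulator, with one auxiliary player choosing \playerone's actions and another choosing the accumulator layers; edges witnessing an entry are distinguished. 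If this arena contained a reachable cycle with an entry-edge, the two auxiliary players could iterate it to produce a play under $\tau$ carrying an accumulator with infinitely many entries, contradicting the fact that $\tau$ wins the \ror\ game. Hence no such cycle exists and the number of entries along any path is bounded by the size of the product arena.

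Given the uniform bound $N$, I would invoke the construction in the proof of Lemma~\ref{lem:noleak} to lift $\tau$ to a strategy for \playertwo\ in the $m$-population game with $m = |S_0|\,|Q|^{N+1}$: on each move, \playertwo\ plays the transfer graph prescribed by $\tau$ and spreads agents evenly among its edges. The inductive computation used in the proof of Lemma~\ref{lem:noleak} guarantees that at every step each state of the current support carries at least $|Q|$ agents, so the chosen transfer graph is realised exactly and the resulting $m$-population play projects to a play under $\tau$. That projected play never reaches $\{\targetstate\}$, hence the $m$-population play never reaches $\targetstate^m$ and \playerone\ loses the $m$-population game. The hardest step will be the uniform-bound argument: one must justify both that $\tau$ can be taken finite-memory for the ``finite capacity'' condition and that an entry-bearing cycle in the product arena can actually be exhibited as a single play under $\tau$ with infinitely many entries in one of its accumulators, thus reconciling the universal quantifier over accumulators with the game-theoretic formalism.
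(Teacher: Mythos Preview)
Your proposal is correct and follows essentially the same route as the paper, which decomposes the theorem into three propositions: finite-memory determinacy of the \ror\ game (via $\omega$-regularity of the infinite-capacity condition and a parity-game reduction), the forward implication by projection and Lemma~\ref{lem:noleak}, and the backward implication by bounding the capacity of every play under a finite-memory strategy $\tau$ for \playertwo\ by $|\MemSet|\cdot 4^{|Q|}$ before lifting $\tau$ via the even-split construction of Lemma~\ref{lem:noleak}. Your product-arena cycle argument for the uniform bound is exactly the paper's pigeonhole: between two indices where support, memory state and accumulator layer all coincide there can be no entry, since otherwise looping that segment would yield a $\tau$-consistent play carrying an accumulator with infinitely many entries.
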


This theorem is a direct corollary of the following propositions
(\ref{th.determinacy} - \ref{th.2wins}):

\begin{prop}
  \label{th.determinacy}
  Either \playerone\  
  or \playertwo\ wins the capacity game, and
  the winner has a winning strategy with
  \emph{finite memory}.
\end{prop}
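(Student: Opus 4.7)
The plan is to show that the \ror\ game is a two-player game with $\omega$-regular winning condition on a finite arena, so that both determinacy and the existence of finite-memory winning strategies for the winner follow from the Büchi--Landweber theorem. First, I would note that the support arena is indeed finite: supports range over the (non-empty) subsets of $Q$, \playerone's moves over the finite alphabet $\Sigma$, and \playertwo's moves over the finite set $\GG$ of transfer graphs.

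The crux is to express \playerone's winning condition, ``the play reaches $\{\targetstate\}$, or some accumulator has infinitely many entries'', in $\omega$-regular form. Reachability of $\{\targetstate\}$ is immediate. For the infinite-capacity part, I would build a nondeterministic Büchi automaton $\mathcal{B}$ reading plays $S_0\arr{a_1,G_1} S_1 \arr{a_2,G_2}\cdots$, whose state at step $j$ stores a guess $T_j \subseteq S_j$ for an accumulator. On input $(a_{j+1},G_{j+1})$ (leading to support $S_{j+1}=\im(G_{j+1})$), $\mathcal{B}$ nondeterministically picks $T_{j+1} \subseteq S_{j+1}$ subject to the successor-closedness constraint, and marks the transition as accepting whenever $G_{j+1}$ contains an entry edge from $S_j \setminus T_j$ to $T_{j+1}$. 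By construction, a play is accepted by $\mathcal{B}$ iff it admits an accumulator with infinitely many entry times, i.e., iff it has infinite capacity; the disjunction with reachability of $\{\targetstate\}$ is therefore $\omega$-regular.

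The final step is then standard: determinizing $\mathcal{B}$ via Safra's construction produces a deterministic parity automaton of exponential size, whose product with the support arena is a finite parity game equivalent to the \ror\ game. Positional determinacy of parity games yields a positional winning strategy for the winner in the product, which pulls back to a finite-memory winning strategy in the \ror\ game, and determinacy in the product transports directly to determinacy in the \ror\ game.

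The main difficulty is the existential quantifier over accumulators in the definition of infinite capacity, which prevents a direct deterministic encoding of the winning condition. The key observation is that this quantifier ranges over sequences over the finite set $2^Q$, so it is naturally captured by the nondeterminism of $\mathcal{B}$; determinization then eliminates the nondeterminism at the cost of increasing the memory of winning strategies, which nonetheless remains finite. A more refined parity-game reduction, needed for the sharper \EXPTIME\ bound, is postponed to Section~\ref{sec:parity}.
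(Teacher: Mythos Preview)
Your proposal is correct and follows essentially the same approach as the paper: build a nondeterministic B\"uchi automaton that guesses an accumulator on the fly and accepts when it witnesses infinitely many entries, determinise it via Safra into a deterministic parity automaton, take the product with the support arena, and invoke positional determinacy of parity games to conclude. The paper's proof is terser but otherwise identical in structure and content.
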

\begin{proof}
Whether a play has {infinite} capacity
% the finite capacity condition does \emph{not} hold
can be verified by a non-deterministic B\"uchi automaton of size
$2^{|Q|}$ on the alphabet of transfer graphs, which guesses an
accumulator on the fly and checks that it has infinitely many entries.
This B\"uchi automaton can be determinised into a parity automaton
(\emph{e.g.}\ using Safra's construction) with state space $\MemSet$
of size $\mathcal{O}\left(2^{2^{|Q|}}\right)$.  The synchronized
product of this deterministic parity automaton with the support game
produces a parity game which is equivalent with the capacity game, in
the sense that, up to unambiguous synchronization with the
deterministic automaton, plays and strategies in both games are the
same and the synchronization preserves winning plays and strategies.
Since parity games are determined and positional~\cite{zielonka},
either \playerone\ or \playertwo\ has a positional winning strategy in
the parity game, thus either \playerone\ or \playertwo\ has a winning
strategy with finite memory $\MemSet$ in the capacity game.
\end{proof}

\begin{prop}
If \playerone\ wins the \ror\ game,
then \playerone\ has a winning strategy in the
$m$-population game for all $m$. 
%If \playerone\ wins the \ror\ game,
%then for all integer $m$ \playerone\ wins the $m$-population game.
\label{th.1wins}
\end{prop}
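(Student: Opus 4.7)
The plan is to take a winning strategy $\sigma$ for \playerone\ in the \ror\ game and lift it to each $m$-population game by composing with the support-projection $\Phi_m$. Concretely, define a strategy $\sigma_m$ for \playerone\ in $\nfa^m$ by $\sigma_m(\play) = \sigma(\Phi_m(\play))$ for every finite play $\play$ in the $m$-population game. This is well-defined because \playerone\ has full information of the configurations in $\nfa^m$, and can compute on the fly both the current support and the transfer graphs induced by \playertwo's moves, which is exactly what $\Phi_m$ records.

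Next I would verify that every play $\play$ in $\nfa^m$ consistent with $\sigma_m$ is winning for \playerone. By the very definition of $\sigma_m$, the projected play $\Phi_m(\play)$ in the support arena is a play consistent with $\sigma$. Since $\sigma$ is winning in the \ror\ game, $\Phi_m(\play)$ either reaches the target support $\{\targetstate\}$ or has infinite capacity. However, as $\Phi_m(\play)$ is the projection of an actual play in the $m$-population game, it is realisable, and by Lemma~\ref{lem:noleak} realisability is equivalent to bounded capacity; in particular $\Phi_m(\play)$ has finite capacity. This rules out the infinite-capacity winning condition, so $\Phi_m(\play)$ must eventually reach the support $\{\targetstate\}$. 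At that moment, all $m$ agents are necessarily in state $\targetstate$, so $\play$ hits the synchronizing configuration $\targetstate^m$, which shows that $\sigma_m$ is a winning strategy for \playerone\ in $\nfa^m$.

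There is no real obstacle here: the entire argument rests on the key bridge provided by Lemma~\ref{lem:noleak}, which forbids the ``infinite capacity'' escape for any realisable projection, together with the trivial observation that support $\{\targetstate\}$ in the support arena corresponds to full synchronisation in the concrete game. Note also that the construction is uniform in $m$: the same abstract strategy $\sigma$ is reused for every population size, the only $m$-dependent component being the projection $\Phi_m$.
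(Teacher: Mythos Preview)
Your proof is correct and essentially identical to the paper's own argument: both define $\sigma_m=\sigma\circ\Phi_m$, note that $\Phi_m(\pi)$ is realisable hence has bounded (in particular finite) capacity by Lemma~\ref{lem:noleak}, and conclude that the winning condition for $\sigma$ forces $\Phi_m(\pi)$ to reach $\{\targetstate\}$, so $\pi$ reaches $\targetstate^m$.
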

\begin{proof}
Assuming that \playerone\ wins the \ror\ game with a strategy $\sigma$,
he can win any
$m$-population game, $m <\infty$, with the strategy $\sigma_m=\sigma\circ \Phi_m$.
%,
%which projects the play of the $m$-population game
%into the support arena and apply $\sigma$.
The projection $\Phi_m(\pi)$ of every infinite play $\pi$ respecting $\sigma_m$ is realisable,
thus $\Phi_m(\pi)$ has bounded, hence finite, capacity (Lemma~\ref{lem:noleak}).
Moreover $\Phi_m(\pi)$ respects $\sigma$,
and since $\sigma$ wins the \ror\ game, 
$\Phi_m(\pi)$ reaches $\{\targetstate\}$.
Thus $\pi$ reaches $\targetstate^{m}$ and $\sigma_m$ is winning.
\end{proof}

We now prove the more challenging reverse implication.  Recall by
Proposition \ref{th.determinacy} that if \playertwo\ has a winning
strategy in the \ror\ game, then he has a \emph{finite-memory}
strategy.

\begin{prop}
  \label{th.2wins}
  If \playertwo\ has a winning strategy in the \ror\ game
  using \emph{ finite memory} $M$,   then he has a winning strategy in the ${|\stnb|}^{ 1+ |\MemSet| \cdot 4^{|\stnb|}}$-population game.
\end{prop}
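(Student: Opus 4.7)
The plan is to show that Player 2 can use his finite-memory winning strategy $\tau$ from the \ror\ game to play in the $m$-population game by (i) choosing transfer graphs according to $\tau$ and (ii) spreading the $m$ agents evenly along each chosen transfer graph, exactly as in the converse direction of Lemma~\ref{lem:noleak}. The obstacle is that $\tau$ only guarantees \emph{finite} capacity, while Lemma~\ref{lem:noleak} requires \emph{bounded} capacity to realise a play. The heart of the argument is therefore to establish a uniform bound, depending only on $|\MemSet|$ and $|Q|$, on the capacity of every play consistent with $\tau$.

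First I would set $K = |\MemSet|\cdot 4^{|Q|}$ and prove that every play $\pi$ in the support arena consistent with $\tau$ has capacity at most $K$. To each index $j$ of such a play $\pi$ and each accumulator $T=(T_j)_j$ of $\pi$ I associate the triple $(S_j, m_j, T_j)$, where $m_j\in \MemSet$ is the current memory state of $\tau$. These triples live in a space of size at most $|\MemSet|\cdot 3^{|Q|}\leq K$, since $T_j\subseteq S_j\subseteq Q$. If some accumulator had strictly more than $K$ entry times, the pigeonhole principle would furnish two entry times $j_1<j_2$ with $(S_{j_1},m_{j_1},T_{j_1})=(S_{j_2},m_{j_2},T_{j_2})$. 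Since $\tau$ is a finite-memory strategy, its response depends only on the current support and memory, so Player 1 can define a strategy $\sigma'$ that plays as in $\pi$ up to step $j_1$ and then iterates the block $a_{j_1+1}\cdots a_{j_2}$ ad infinitum; the corresponding play $\pi'$ consistent with $\tau$ will loop through the same supports, memory states and transfer graphs between indices $j_1$ and $j_2$. Replicating the finite pattern $T_{j_1+1},\ldots,T_{j_2}$ on every iteration yields a successor-closed accumulator of $\pi'$ with infinitely many entry times, contradicting the fact that $\tau$ wins the \ror\ game.

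Once the capacity bound $K$ is in hand, I would define Player 2's strategy in the $m$-population game (for $m=|Q|^{K+1}=|Q|^{1+|\MemSet|\cdot 4^{|Q|}}$) as follows: maintain a parallel simulation of the support arena under $\tau$; when Player 1 plays action $a$, let $G$ be the transfer graph prescribed by $\tau$ and send, from each source state $s$, an even share of its agents along each edge of $G$ out of $s$. Replicating the computation of Lemma~\ref{lem:noleak}'s converse direction, an easy induction using a backward-defined accumulator ending at a given state $t\in S_n$ shows that every state in the abstract support $S_n$ contains at least $m/|Q|^K = |Q|$ agents of the actual configuration. Consequently, the realised support and transfer graphs coincide exactly with those produced by $\tau$ in the support arena, and since $\tau$ wins the \ror\ game, the abstract support never reduces to $\{\targetstate\}$. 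In particular the configuration is never $\targetstate^m$, so Player 2's strategy is winning in the $m$-population game, proving the proposition. The main delicate point will be ensuring that the pumping construction in the second step preserves successor-closedness of the replicated accumulator across every loop, which hinges on the transfer graphs being identical on each iteration of the pumped block.
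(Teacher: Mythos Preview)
Your proposal is correct and follows essentially the same approach as the paper. Both arguments first bound the capacity of every play consistent with $\tau$ by $|\MemSet|\cdot 4^{|Q|}$ via a pumping argument on the triples $(S_j,m_j,T_j)$, and then invoke the ``split evenly'' construction from the proof of Lemma~\ref{lem:noleak} to realise $\tau$ in the $|Q|^{1+|\MemSet|\cdot 4^{|Q|}}$-population game; your observation that the triple space actually has size $|\MemSet|\cdot 3^{|Q|}$ (since $T_j\subseteq S_j$) is a harmless sharpening, and your identification of the seam condition for successor-closedness of the replicated accumulator is exactly the point the paper leaves implicit.
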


\begin{proof}
  Let $\tau$ be a winning strategy for \playertwo\ in the \ror\ game
  with finite-memory $\MemSet$. First we show that any play
  $\pi=S_0\arr{a_1,G_1} S_1 \arr{a_2,G_2}\ldots$ compatible with
  $\tau$ should have capacity (\emph{i.e.} count of entry times of any
  of its accumulator) bounded by $ B = |\MemSet| \times 4^{|\stnb|}$.

Let $ \{T_i\}_{i \in \NN}$ be any accumulator of $\pi$. If there are two integers $0\leq i < j \leq n$
  such that at times $i$ and $j$:
  \begin{itemize}
\item the memory state of $\tau$ coincide: $\mathsf{m}_i=\mathsf{m}_j$;
\item the supports coincide: $S_i=S_j$; and
\item the supports in the accumulator $T$ coincide: $T_i=T_j$.
\end{itemize}
then we show that there is no entry in the accumulator between indices
$i$ and $j$.  The play $\play_*$ identical to $\play$ up to date $i$
and which repeats ad infinitum the subplay of $\play$ between times
$i$ and $j$, is consistent with $\tau$, because
$\mathsf{m}_i=\mathsf{m}_j$ and $S_i=S_j$.  The corresponding sequence
of transfer graphs is $G_0,\ldots , G_{i-1} (G_i, \ldots, G_{j-1})
^\omega$, and $T_0,\ldots ,T_{i-1}(T_i\ldots T_{j-1})^\omega$ is a
``periodic" accumulator of $\play_*$.  By periodicity, this accumulator
has either no entry or infinitely many entries after date $i-1$.
Since $\tau$ is winning, $\play_*$ has finite capacity, thus the
periodic accumulator has no entry after date $i-1$, and hence there are no
entries in the accumulator $(T_j)_{j\in \NN}$ between indices $i$ and
$j$.

Let $I$ be the set of entry times for the accumulator
$(T_j)_{j\in \NN}$.  According to the above, for all pairs of distinct
indices $(i,j)$ in $I$, we have
$m_i\neq m_j\lor S_i\neq S_j \lor T_i\neq T_j$.  As a consequence,
\[
 |I| \leq B=|\MemSet|\cdot 4^{|\stnb|}\enspace.
\]

Now following the proof of Lemma \ref{lem:noleak}, for $m =
|\stnb|^{B + 1}$, \playertwo\ has a strategy
$\tau_m$ in the $m$-population game of following the transfer graphs suggested
by $\tau$. In other words, when it is \playertwo's turn to play in the
$m$-population game, and the play so far $\play=\vstate_0\arr{a_1}\vstate_1 \cdots \vstate_n
\arr{a_{n+1}}$ is projected via $\Phi_m$ to a play $\rho = S_0\arr{a_1,G_1} S_1 \cdots S_n \arr{a_{n+1}}$ in the capacity game,
let $G_{n+1}=\tau(\rho)$ be the decision of \playertwo\ at this point in the capacity game.  Then, to determine $\vstate_{n+1}$,
$\tau_m$ splits evenly the agents in $\vstate_n$ along every edge of $G_{n+1}$.
Since the capacity of $\rho$ is bounded by $B$, the argument in the proof of
Lemma \ref{lem:noleak} shows that $\vstate_n$ has at least $|Q|$ agents in each state
and thus $\{ (\vstate_n[i], \vstate_{n+1}[i]) \mid 1 \leq i \leq m \} =
G_{n+1}$. This means that the projected play $\Phi_m(\pi \vstate_{n+1})$ continues to be
consistent with $\tau$ and its support will never reach $\{ f \}$. Thus $\tau_m$
guarantees that not all agents will be in target state $m$ simultaneously, and
hence is a winning strategy for \playertwo\ in the $m$-population game.
\end{proof}

As consequence of Proposition~\ref{th.determinacy}, the population control
problem can be decided by explicitly computing the parity game and
solving it, in 2\EXPTIME.
%which requires at least doubly-exponential time.
%2\EXPTIME.
%Moreover there is a triply exponential upper bound on the cut-off using Proposition~\ref{th.2wins}.
In the next section we will improve this complexity bound to \EXPTIME.

\medskip

  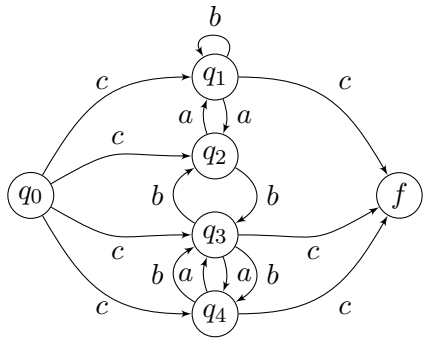
\begin{figure}[h!]
\centering
\begin{tikzpicture}[scale=0.7]
\draw (-3.5,.75) node [circle,draw,inner sep=2pt,minimum size=12pt ]
  (s0) {$\state_0$} ;
\draw (3.5,.75) node [circle,draw,inner sep=2pt,minimum size=12pt ]
  (sf) {$\targetstate$} ;

  \draw(0,3) node [circle,draw,inner sep=2pt,minimum size=12pt ]
  (s1) {$\state_1$} ;

\draw(0,1.5) node [circle,draw,inner sep=2pt,minimum size=12pt] (s2) {$\state_2$}
;

\draw(0,0) node [circle,draw,inner sep=2pt,minimum size=12pt] (s3) {$\state_3$};

\draw(0,-1.5) node [circle,draw,inner sep=2pt,minimum size=12pt] (s4){$\state_4$};

 \draw [-latex'] (s2) .. controls +(110:20pt) and +(250:20pt)  .. (s1)
 node [pos=.5,left] {$a$};
 \draw [-latex'] (s1) .. controls +(290:20pt) and +(70:20pt)  .. (s2)
 node [pos=.5,right] {$a$};

 \draw [-latex'] (s4) .. controls +(110:20pt) and +(250:20pt)  .. (s3)
 node [pos=.5,left] {$a$};
 \draw [-latex'] (s3) .. controls +(290:20pt) and +(70:20pt)  .. (s4)
 node [pos=.5,right] {$a$};

 \draw [-latex'] (s4) .. controls +(150:30pt) and +(210:30pt)  .. (s3)
 node [pos=.5,left] {$b$};
 \draw [-latex'] (s3) .. controls +(330:30pt) and +(30:30pt)  .. (s4)
 node [pos=.5,right] {$b$};

 \draw [-latex'] (s3) .. controls +(150:30pt) and +(210:30pt)  .. (s2)
 node [pos=.5,left] {$b$};
 \draw [-latex'] (s2) .. controls +(330:30pt) and +(30:30pt)  .. (s3)
 node [pos=.5,right] {$b$};

\draw [-latex']  (s1) .. controls +(60:30pt) and +(120:30pt) .. (s1)
node[midway,above]{$b$};

\draw [-latex'] (s0) .. controls +(60:2cm) and +(180:2cm)  .. (s1)
 node [pos=.5,above] {$c$};
\draw [-latex'] (s0) .. controls +(30:2cm) and +(180:2cm)  .. (s2)
 node [pos=.5,above] {$c$};
\draw [-latex'] (s0) .. controls +(330:2cm) and +(180:2cm)  .. (s3)
 node [pos=.5,below] {$c$};
\draw [-latex'] (s0) .. controls +(300:2cm) and +(180:2cm)  .. (s4)
 node [pos=.5,below] {$c$};

\draw [-latex'] (s1) .. controls +(0:2cm) and +(120:2cm)  .. (sf)
 node [pos=.5,above right] {$c$};
%\draw [-latex'] (s2) .. controls +(0:2cm) and +(150:2cm)  .. (sf)
% node [pos=.5,above] {$a$};
\draw [-latex'] (s3) .. controls +(0:2cm) and +(210:2cm)  .. (sf)
 node [pos=.5,below] {$c$};
\draw [-latex'] (s4) .. controls +(0:2cm) and +(240:2cm)  .. (sf)
 node [pos=.5,below right] {$c$};
\end{tikzpicture}
\caption{NFA where \playerone\ needs memory to win the associated \ror\ game.}
\label{fig:leak}
\end{figure}

We conclude with an example showing  that, in general, positional strategies are not sufficient to win the \ror\ game.
%
%In this subsection, we prove that memoryless strategies are not sufficient to win the capacity game, but finite memory strategies are sufficient. 
Consider the example of Figure~\ref{fig:leak}, where the only way for
\playerone\ to win is to reach a support without $\state_2$ and play
$c$.  With a memoryless strategy, \playerone\ cannot win the capacity
game.  There are only two memoryless strategies from support
$S=\{\state_1,\state_2,\state_3,\state_4\}$.  If \playerone\ only
plays $a$ from $S$, the support remains $S$ and the play has bounded
capacity.  If he only plays $b$'s from $S$, then \playertwo\ can split
agents from $\state_3$ to both $\state_2,\state_4$ and the play
remains in support $S$, with bounded capacity. In both cases, the play
has finite capacity
% (even bounded) 
and \playerone\ loses.

However, \playerone\ can win the capacity game. His (finite-memory)
winning strategy $\strat$ consists in first playing $c$, and then
playing alternatively $a$ and $b$, until the support does not contain
$\{\state_2\}$, in which case he plays $c$ to win.  Two consecutive
steps $ab$ send $\state_2$ to $\state_1$, $\state_1$ to $\state_3$,
$\state_3$ to $\state_3$, and $\state_4$ to either $\state_4$ or
$\state_2$.  To prevent \playerone\ from playing $c$ and win,
\playertwo\ needs to spread from $\state_4$ to both $\state_4$ and
$\state_2$ every time $ab$ is played. Consider the accumulator $T$
defined by $T_{2i}=\{\state_1,\state_2,\state_3\}$ and
$T_{2i-1}=\{\state_1,\state_2,\state_4\}$ for every $i >0$. It has an
infinite number of entries (from $\state_4$ to $T_{2i}$).  Hence
\playerone\ wins if this play is executed.  Else, \playertwo\
eventually keeps all agents from $\state_4$ in $\state_4$ when $ab$ is
played, implying the next support does not contain
$\state_2$. Strategy $\strat$ is thus a winning strategy for
\playerone.

\section{Solving the \ror\ game in \EXPTIME}
\label{sec:parity}
To solve efficiently the \ror\ game, we build an equivalent
exponential size parity game with a polynomial number of parities.  To
do so, we enrich the support arena with a \emph{tracking list}
responsible of checking whether the play has finite capacity.  The
tracking list is a list of transfer graphs, %from different indices,
which are used to detect certain patterns called \emph{leaks}.

\subsection{Leaking graphs}
%\label{subsec:tl}
In order to detect whether a play $\rho = S_0 \arr{a_1,G_1} S_1
\arr{a_2,G_2}\ldots$ has finite capacity, it is enough to detect
\emph{leaking} graphs (characterising entries of accumulators). Further, leaking graphs have special
\emph{separation} properties which will allow us to 
track a small number of graphs. For $G,H$ two graphs, we denote $(a,b) \in G \cdot H$ iff there exists $z$ with $(a,z) \in G,$ and $(z,b) \in H$.

\begin{defi}[Leaks and separations]
  Let $G,H$ be two transfer graphs.  We say that $G$ \emph{leaks at
    $H$} if there exist states $q,x,y$ with $(q,y) \in G \cdot H$,
  $(x,y) \in H$ and $(q,x) \notin G$.  We say that $G$
  \emph{separates} a pair of states $(r,t)$ if there exists
  $q \in \states$ with $(q,r)\in G$ and $(q,t)\not\in G$. Denote by
  $\Sep(G)$ the set of all pairs $(r,t)$ which are separated by $G$.
\end{defi}
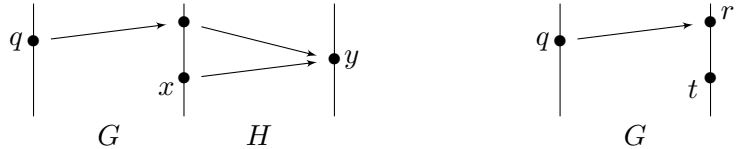
\begin{figure}[htbp]
%\begin{subfigure}{.5\textwidth}
\begin{center}
\begin{tikzpicture}
\draw[thin] (0,.5) -- (0,-1);
\draw (2,.5) -- (2,-1);
\draw (4,.5) -- (4,-1);
\draw (0,0) node (q) {$\bullet$};
\node  at (q.180) {$q$}; 
\draw(2,-.5) node (x) {$\bullet$};
\node  at (x.210) {$x$}; 
\draw(2,.25) node (x') {$\bullet$};
%\node  at ([shift={(-.5:.5)}]x'.110)  {$x'$}; 
\draw(4,-.25) node (y) {$\bullet$};
\node  at (y.0) {$y$}; 
\draw[-latex'] (q) -- (x');
\draw[-latex'] (x') -- (y);
\draw[-latex'] (x) -- (y);

\node at (1, -1.25) (G) {$G$};
\node at (3, -1.25) (H) {$H$};

%%% separation

\draw[thin] (7,.5) -- (7,-1);
\draw (9,.5) -- (9,-1);
\draw (7,0) node (q) {$\bullet$};
\node  at (q.180) {$q$}; 
\draw(9,-.5) node (x) {$\bullet$};
\node  at (x.210) {$t$}; 
\draw(9,.25) node (x') {$\bullet$};
\node  at (x'.30)  {$r$}; 
\draw[-latex'] (q) -- (x');
\node at (8, -1.25) (G) {$G$};

\end{tikzpicture}
%\end{center}
%\end{figure}
%\begin{figure}[htbp]
%\end{subfigure}
%\begin{subfigure}{.4\textwidth}
% \begin{center}
% \begin{tikzpicture}
% \draw[thin] (0,.5) -- (0,-1);
% \draw (2,.5) -- (2,-1);
% \draw (0,0) node (q) {$\bullet$};
% \node  at (q.180) {$q$}; 
% \draw(2,-.5) node (x) {$\bullet$};
% \node  at (x.210) {$t$}; 
% \draw(2,.25) node (x') {$\bullet$};
% \node  at (x'.30)  {$r$}; 
% \draw[-latex'] (q) -- (x');
% \node at (1, -1.25) (G) {$G$};
% \end{tikzpicture}
 \end{center}
%\end{subfigure}
\caption{Left: $G$ leaks at $H$; Right: $G$ separates $(r,t)$.}
\end{figure}

The tracking list will be composed of concatenated graphs {\em tracking i} of the form $G[i,j]= G_{i+1} \cdots G_j$ relating $S_i$ with $S_j$:
$(s_i,s_j) \in G[i,j]$ if there exists $(s_k)_{i < k < j}$
with $(s_k,s_{k+1}) \in G_{k+1}$ for all $i \leq k < j$. 
Infinite capacity relates to leaks in the following way:
\begin{lem}
\label{lemma.leaks}
A play has infinite capacity iff there exists an index $i$ 
such that $G[i,j]$ leaks at $G_{j+1}$ for infinitely many indices $j$.
\end{lem}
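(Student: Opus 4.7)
My plan is to prove the two directions of the iff separately. The reverse direction is a direct construction; the forward direction needs a Ramsey-style refinement, whose main obstacle is handling accumulators with ``freshly introduced'' elements not coming from early seeds.

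For the reverse direction, suppose there is $i$ such that $G[i, j]$ leaks at $G_{j+1}$ for infinitely many $j$. Each leak is witnessed by some state in the finite set $S_i$, so by pigeonhole a single $q \in S_i$ witnesses leaks at infinitely many $j$. I then define a sequence $T$ by $T_n = \emptyset$ for $n < i$ and $T_n = \{s \in S_n : (q, s) \in G[i, n]\}$ for $n \geq i$. Successor-closedness of $T$ follows from concatenation of transfer graphs, so $T$ is an accumulator. For each leak witnessed by $q$ with triple $(q, x, y)$ at index $j$, we have $y \in T_{j+1}$ from $(q, y) \in G[i, j+1]$, $x \notin T_j$ from $(q, x) \notin G[i, j]$, and $(x, y) \in G_{j+1}$: this is an entry of $T$ at time $j$. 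Hence $T$ has infinitely many entries, so the play has infinite capacity.

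For the forward direction, let $T$ be an accumulator with infinitely many entries $(s_k, t_k) \in G_{j_k+1}$ at times $j_1 < j_2 < \cdots$. Since $Q \times Q$ and the monoid of transfer graphs on $Q$ are finite, by repeated pigeonhole I first extract an infinite subsequence on which $(s_k, t_k) = (s^*, t^*)$ and $G_{j_k+1} = G^*$ are constant. Applying the infinite Ramsey theorem to pairs $k < l$ with colouring $(G[j_k+1, j_l],\, G[j_k+1, j_l+1])$ refines to a further subsequence where these take constant values $F$ and $E = F \cdot G^*$, with $E$ idempotent (since composing these fixed graphs preserves the constant value). Set $i = j_1 + 1$, so that $q := t^* = t_1 \in T_i$.

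I claim that $G[i, j_l] = F$ leaks at $G_{j_l+1} = G^*$ in the subsequence, with witnesses $q = t^*$, $x = s^*$, $y = t^*$. Two of the three conditions are immediate: $(s^*, t^*) \in G^*$ by construction, and $(t^*, s^*) \notin F$ since $\{s \in S_{j_l} : (t^*, s) \in F\} \subseteq T_{j_l}$ by successor-closure of $T$, while $s^* \notin T_{j_l}$. The main obstacle is verifying $(t^*, t^*) \in E$, which need not hold a priori. If it fails, the ``no-leak'' case-analysis of the identity $E = F \cdot G^*$ forces every element of $\{q \in S_i : (q, t^*) \in E\}$ to lie outside $T_i$, so $t^*$ must be freshly introduced into $T$ at some later time $i^* > i$ via a seed $q' \in T_{i^*}$ satisfying $(q', t^*) \in G[i^*, j_l+1]$. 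The entry $(s^*, t^*)$ at time $j_l$ is then also an entry of the descendant-accumulator of $q'$ starting at $i^*$. I iterate the argument with the new seed $(q', i^*)$, and use the finiteness of the monoid of transfer graphs together with pigeonhole on the fresh-seed states encountered to guarantee that the iteration terminates at a fixed pair $(q, i)$ whose descendant-accumulator has infinitely many entries, yielding the required leaks at index $i$.
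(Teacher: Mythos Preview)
Your reverse direction is correct and coincides with the paper's argument.

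Your forward direction has a genuine gap in the case $(t^*,t^*)\notin E$. First, this case cannot be excluded: for instance $E=\{(a,b),(b,b),(c,a),(c,b),(c,c)\}$ on $\{a,b,c\}$ is idempotent with $a\in\dom(E)\cap\im(E)$ yet $(a,a)\notin E$. Second, your fallback does not work as written. You assert that when no $q\in T_i$ satisfies $(q,t^*)\in E$, the state $t^*$ ``must be freshly introduced into $T$ at some later time $i^*>i$ via a seed $q'\in T_{i^*}$ with $(q',t^*)\in G[i^*,j_l+1]$''. But $t^*$ enters $T_{j_l+1}$ precisely through the entry edge $(s^*,t^*)\in G_{j_l+1}$ with $s^*\notin T_{j_l}$; it need not be a descendant of \emph{any} element of $T$ at \emph{any} time strictly before $j_l+1$. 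The only seed your backward search is guaranteed to produce is $(q',i^*)=(t^*,\,j_l{+}1)$ itself, which gives no progress. Even granting some nontrivial seed for each $l$, you provide no decreasing measure for the iteration: ``finiteness of the monoid together with pigeonhole on the fresh-seed states'' is not an argument, since revisiting the same seed state at a later index neither yields a contradiction nor a fixed pair $(q,i)$ whose descendant-accumulator has infinitely many entries. Note also that by your Ramsey normalisation $G[i,j_l]=F$ and $G_{j_l+1}=G^*$ are constant, so either $F$ leaks at $G^*$ (and you are done for all $l$ at once) or it never does; in the latter case you really do need a new idea, not a restart.

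The paper avoids Ramsey entirely and closes exactly this gap with a short extremal argument. Define the \emph{width} of an accumulator $X$ as $\limsup_n |X_n|$, and pick $X$ of \emph{minimal width} among accumulators with infinitely many entries. One first records the inequality $(\dagger)$: if $\emptyset\neq Y\subseteq X$ and $Z\subseteq X$ are disjoint accumulators then $\mathrm{width}(Z)<\mathrm{width}(X)$. Now take $v$ in the earliest nonempty slice $X_r$, let $Y^v$ be the accumulator generated by $\{v\}$ at time $r$, and let $T^v$ be the predecessor-closure of $Y^v$. Then $Z=X\setminus T^v$ is an accumulator disjoint from $Y^v$, so by $(\dagger)$ and minimality $Z$ has only finitely many entries; hence infinitely many of the entries of $X$ are entries of $Y^v$. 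Since $Y^v$ is exactly the set of descendants of a single vertex, every entry of $Y^v$ at time $j$ is literally a leak of $G[r,j]$ at $G_{j+1}$, and the proof ends with no case distinction. This minimal-width trick is the missing ingredient in your argument; plugging it in would let you bypass the problematic iteration altogether.
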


\begin{proof}
  To prove the right-to-left implication, assume that there exists an
  index $i$ such that $G[i,j]$ leaks at $G_{j{+}1}$ for an infinite
  number of indices $j$.  As the number of states is finite, there
  exists a state $q$ with an infinite number of indices $j$ such that
  we have some $(x_j,y_{j+1}) \in G_{j{+}1}$ with $(q,y_{j+1}) \in G[i,j{+}1]$, $(q,x_j) \notin G[i,j]$.  The accumulator generated by $T_i = \{q\}$ has an infinite number of entries, and we are done with this direction.

\begin{center}
\begin{tikzpicture}
\node at (0,.75) {$i$};
\node at (3,.75) {$j$};
\node at (5,.75) {$j{+}1$};

\draw (0,.5) -- (0,-1);
\draw (3,.5) -- (3,-1);
\draw (5,.5) -- (5,-1);

\draw (0,-.5) node (q) {$\bullet$};
\node  at (q.180) {$q$}; 

\draw(3,-.5) node (x) {$\bullet$};
\draw(3,0) node (x') {$\bullet$};
\node  at (x'.210) {$x$}; 
%\node  at ([shift={(-.5:.5)}]x'.110)  {$x'$}; 
\draw(5,-.25) node (y) {$\bullet$};
\node  at (y.0) {$y$}; 
\draw[-latex',thick] (q) -- (x);
\draw[-latex',thick] (x') -- (y);
\draw[-latex',thick] (x) -- (y);

\node at (4,-1) {$G_{j{+}1}$};
\node at (1.5,-1) {$G[i,j]$};
\end{tikzpicture}
\end{center}

\medskip

For the left-to-right implication, assume that there is an accumulator
$(T_j)_{j \geq 0}$ with an infinite number of entries.  

\noindent 
For $X=(X_n)_{n \in\nats}$ an accumulator, we denote $|X_n|$ the number of states in $X_n$, and we define the \emph{width} of $X$ as $\width(X) = \limsup_n |X_n|$. We first prove
%several times 
the following property:

\medskip

$(\dagger)$ If $\emptyset \neq Y \subseteq X$ and $Z \subseteq X$ are two disjoint accumulators, then
$\width(Z) < \width(X)$.

\medskip

Let us prove property $(\dagger)$.  Let $r$ be the minimal index
s.t. $Y_r \neq \emptyset$. Thus, for every $n\geq r$, $Y_n$ contains at least one vertex.
Because $Y$ and $Z$ are disjoint, we derive
$|Z_n| + 1 \leq |X_n|$. Taking the limsup of this inequality we
obtain $(\dagger)$.

We pick $X$ an accumulator 
of minimal width  
with infinitely many
incoming edges.
Let $r$ minimal such that $X_r \neq \emptyset$.
Let $v \in X_r$.
We denote $Y^v$ the smallest accumulator containing $v$. 
We have $Y^v \neq \emptyset$.
Let us show that $Y^v$ has infinitely many
incoming edges. 
Define 
$Y^v \subseteq T^v = (T^v_n)_{n \in \nats}$ the set of predecessors of vertices
in $Y^v$. We let $Z_n = X_n \setminus T^v_n$ for all $n$. 
We have $Z=(Z_n)$ is an accumulator,
because $T^v$ is predecessor-closed and $X$ is successor-closed.
Applying property $(\dagger)$ to $0 \neq Y^v\subseteq X$ and
$Z \subseteq X$, we obtain $\width(Z) < \width(X)$.  By width
minimality of $X$ among successor-closed sets with infinitely many
incoming edges, $Z = X \setminus T^v$ must have finitely many incoming edges only.
Since $X$ has infinitely many incoming
edges, then $T^v$ has infinitely many incoming edges.  Thus there are
infinitely many edges connecting a vertex outside $Y^v$ to a vertex
of $Y^v$, so that $Y^v$ has infinitely many incoming edges. 
We have just shown that $G[r, j]$ leaks at infinitely many indices $j$.
\end{proof}

Indices $i$ such that $G[i,j]$ leaks at $G_{j+1}$ for infinitely many indices $j$ are said to {\em leak infinitely often}.
Note that if $G$ separates $(r,t)$, and $r, t$ have a common successor
in $H$, then $G$ leaks at $H$.  To link leaks with separations, we
consider for each index $k$, the pairs of states that have a common
successor, in possibly several steps, as expressed by the symmetric
relation $R_k$: $(r,t) \in R_k$ iff there exists $j\geq k+1$ and
$y \in \states$ such that $(r,y) \in G[k,j]$ and $(t,y) \in G[k,j]$.
%We need
%separation at step $n$ only for pairs of states $(r,t)$ which have a
%common successor, as defined by relation $R_n$.
%
% {\color{red}
% \begin{definition}
% For all $n$, we define the symmetric relation $R_n$ with $(r,t) \in R_n$ if:
%
% $$\exists j,y, (r,y) \in G[n,j] \wedge (t,y) \in G[n,j]$$
% \end{definition}
% }
%
% We now show the relationship between leaks and separations of pairs in $R_n$:

\begin{lem}
\label{lemma.separation}
For $i<n$ two indices, the following properties hold: 
%Let $i<n$ be two indices. We have the following three properties:
\begin{enumerate}
\item If $G[i,n]$ separates $(r,t)\in R_{n}$, then there exists
  $m \geq n$ such that $G[i,m]$ leaks at $G_{m+1}$.\label{lem:sep.leak}
\item If index $i$ does not leak infinitely often, then the number of
  indices $j$ such that $G[i,j]$ separates some $(r,t)\in R_{j}$ is 
  finite.\label{lem:sep.finite}
\item If index $i$ leaks infinitely often, then for all $j>i$, 
  $G[i,j]$ separates some $(r,t)\in R_{j}$.\label{lem:sep.always}
\item If $i < j < n$ then $\Sep(G[i, n]) \subseteq \Sep(G[j, n])$.\label{lem:sep.monotone}
\end{enumerate}
\end{lem}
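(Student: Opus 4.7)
I will treat the four claims in the order (4), (1), (2), (3): item (4) is elementary, (1) is the technical heart of the lemma, and (2) and (3) fall out from (1) by contrapositive and a symmetric argument respectively. Throughout I use the convention that $G[n,n]$ is the identity on $S_n$, so that composition behaves well.

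For (4), given a witness $q$ separating $(r,t)$ via $G[i,n]$, I would pick the intermediate vertex $q'$ at time $j$ on some $G[i,n]$-path from $q$ to $r$. Then $(q',r)\in G[j,n]$, and $(q',t)\notin G[j,n]$: otherwise composing with $(q,q')\in G[i,j]$ would yield $(q,t)\in G[i,n]$, contradicting separation. Hence $q'$ witnesses $(r,t)\in\Sep(G[j,n])$.

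For (1), take $q$ separating $(r,t)$ via $G[i,n]$, and let $m\ge n$ be the \emph{smallest} index admitting some $y$ with $(q,y)\in G[i,m]$ and $(t,y)\in G[n,m]$. Such an $m$ exists because the common successor supplied by $(r,t)\in R_n$ at some time $\ell\ge n$ satisfies $(q,y)\in G[i,\ell]$ via $r$. The case $m=n$ collapses to $y=t$, forcing $(q,t)\in G[i,n]$, a contradiction; so $m>n$. Decomposing the last step of $(t,y)\in G[n,m]$, fix $x$ with $(t,x)\in G[n,m-1]$ and $(x,y)\in G_m$. Minimality of $m$ forbids $(q,x)\in G[i,m-1]$, otherwise $x$ would be a common successor at time $m-1$. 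Combined with $(q,y)\in G[i,m-1]\cdot G_m$ and $(x,y)\in G_m$, this exhibits a leak of $G[i,m-1]$ at $G_m$, and $m-1\ge n$ is the claimed index.

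Item (2) is then the contrapositive: infinitely many separating indices $j$ yield, through (1), leak indices $m(j)\ge j$ which tend to infinity, so $i$ leaks infinitely often. For (3), I fix $j>i$ and select $m\ge j$ for which $G[i,m]$ leaks at $G_{m+1}$ with witnesses $q,x,y$: $(q,y)\in G[i,m+1]$, $(x,y)\in G_{m+1}$, $(q,x)\notin G[i,m]$. Let $r$ be the state at time $j$ on some $G[i,m+1]$-path realising $(q,y)$. Since $x\in S_m=\dom(G_{m+1})$, backward traversal of supports provides $t\in S_j$ with $(t,x)\in G[j,m]$; then $(t,y)\in G[j,m+1]$ and $(r,y)\in G[j,m+1]$ give $(r,t)\in R_j$. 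If $(q,t)$ were in $G[i,j]$, composing with $(t,x)\in G[j,m]$ would give $(q,x)\in G[i,m]$, contradicting the leak. Hence $G[i,j]$ separates $(r,t)$. The main obstacle is the bookkeeping in (1): one must pick $m$ minimally so that the leak is pinpointed at the very last step $[m-1,m]$ rather than somewhere in the interior of $[i,m]$, and the same minimality must also force $m>n$ so that the decomposition into $G[i,m-1]\cdot G_m$ is non-trivial.
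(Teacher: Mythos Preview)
Your proof is correct and follows essentially the same approach as the paper: item~(4) is identical, item~(2) is the same contrapositive, and items~(1) and~(3) use the same ``first time a common successor appears'' and ``pull back a leak witness to time $j$'' arguments. Your treatment is in fact slightly cleaner than the paper's: in~(1) you minimise over all common successors of $q$ and $t$ rather than tracking the first reachable vertex along one fixed path $t=t_n,\ldots,t_k=y$, and in~(3) you avoid the paper's pigeonhole step that fixes a single $q$ across infinitely many leak indices---as your argument shows, a fresh leak index $m\ge j$ with its own witness $q$ suffices for each $j$.
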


\begin{proof}
  We start with the proof of the first item.  Assume that $G[i,n]$
  separates a pair $(r,t) \in R_n$.  Hence there exists $q$ such that
  $(q,r) \in G[i,n]$, $(q,t) \notin G[i,n]$.  Since $(r,t)\in R_n$, there is an index $k > n$ and a state
  $y$ such that $(r,y) \in G[n,k]$ and $(t,y) \in G[n,k]$.  Hence,
  there exists a path $(t_j)_{n \leq j \leq k}$ with $t_n=t$, $t_{k}=y$,
  and $(t_j,t_{j{+}1}) \in G_{j+1}$ for all $n \leq j < k$.  Moreover,
  there is a path from $q$ to $y$ because there are paths from $q$ to
  $r$ and from $r$ to $y$.  Let $\ell \leq k$ be the minimum index
  such that there is a path from $q$ to $t_\ell$.  As there is no path
  from $q$ to $t_n=t$, necessarily $\ell \geq n+1$. Obviously,
  $(t_{\ell-1},t_{\ell}) \in G_{\ell}$, and by definition and
  minimality of $\ell$, $(q,t_{\ell-1}) \notin G[i,\ell-1]$ and
  $(q,t_{\ell}) \in G[i,\ell]$. That is, $G[i,\ell-1]$ leaks at
  $G_{\ell}$.

\medskip

Let us now prove the second item, using the first one. Assume that
$i$ does not leak infinitely often, and towards a contradiction
suppose that there are infinitely many $j$'s such that $G[i,j]$
separates some $(r,t) \in R_{j}$. To each of these separations, we
can apply item {\bf \emph{1.}} to obtain infinitely many indices
$m$ such that $G[i,m]$ leaks at $G_{m+1}$, a contradiction.

\medskip

We now prove the third item. Since there are finitely many states in
$\states$, there exists $\state \in \states$ and an infinite set $J$
of indices such that for every $j \in J$, $(q,y_{j{+}1}) \in G[i,j{+}1]$,
$(q,x_j) \notin G[i,j]$, and $(x_j,y_{j{+}1}) \in G_{j{+}1}$ for some
$x_j,y_{j{+}1}$. The path from $q$ to $y_{j{+}1}$ implies the existence of
$y_j$ with $(q,y_j) \in G[i,j]$, and $(y_j,y_{j{+}1}) \in G_{j{+}1}$. We have thus found separated pairs $(x_j,y_j) \in R_j$ for every $j \in J$. To exhibit
separations at other indices $k >j$ with $k \notin J$, the natural
idea is to consider predecessors of the $x_j$'s and $y_j$'s.

\begin{center}
\begin{tikzpicture}
\node at (0,.75) {$i$};
\node at (3,.75) {$k$};
\node at (6,.75) {$j$};
\node at (8,.75) {$j{+}1$};

\draw (0,.5) -- (0,-1);
\draw (3,.5) -- (3,-1);
\draw (6,.5) -- (6,-1);
\draw (8,.5) -- (8,-1);

\draw (0,0) node (q) {$\bullet$};
\node  at (q.180) {$q$}; 

\draw (3,0) node (q') {$\bullet$};
\node  at (q'.35) {$r_k$}; 

\draw (3,-.5) node (tk) {$\bullet$};
\node  at (tk.325) {$t_k$}; 

\draw(6,-.5) node (x) {$\bullet$};
\node  at ([shift={(.5:.2)}]x.300) {$x_j$}; 
\draw(6,0) node (x') {$\bullet$};
\node  at ([shift={(-.5:.4)}]x'.110)  {$y_j$}; 
\draw(8,-.25) node (y) {$\bullet$};
\node  at ([shift={(.5:.5)}]y.0) {$y_{j{+}1}$}; 
\draw[-latex',thick] (q) -- (q');
\draw[-latex',thick] (q') -- (x');
\draw[-latex',thick] (x') -- (y);
\draw[-latex',thick] (x) -- (y);
\draw[-latex',thick] (tk) -- (x);

\node at (1.5,-1) {$G[i,k]$};
\node at (4.5,-1) {$G[k,j]$};
\node at (7,-1) {$G_{j+1}$};
\end{tikzpicture}
\end{center}

\noindent We define sequences $(r_k,t_k)_{k \geq i}$ inductively as follows.  To
define $r_k$, we take a $j \geq k+1$ such that $j \in J$; this is
always possible as $J$ is infinite.  There exists a state $r_k$
such that $(q,r_k) \in G[i,k]$ and $(r_k,y_j) \in G[k,j]$.

\noindent Also, as $x_j$ belongs to $\im(G[1,j])$, there must
exist a state $t_k$ such that $(t_k,x_j)\in G[k,j]$.  Clearly, $(q,
t_k) \notin G[i,k]$, else $(q,x_j) \in G[i,j]$, which is not
true.  Last, $y_{j{+}1}$ is a common successor of $t_k$ and
$r_k$, that is $(t_k,y_{j{+}1}) \in G[k,j+1]$ and
$(r_k,y_{j{+}1}) \in G[k,j+1]$. Hence $G[i,k]$ separates
$(r_{k},t_{k}) \in R_{k}$.

\medskip

For the last item, let $(r, t) \in \Sep(G[i, n])$ and pick
$q \in \states$ such that $(q, r) \in G[i, n]$ but
$(q, t) \notin G[i, n]$. Since $(q, r) \in G[i, n]$ there exists
$q' \in \states$ so that $(q, q') \in G[i, j]$ and
$(q', r) \in G[j, n]$. It also follows that $(q', t) \notin G[j, n]$
since $(q,q') \in G[i,j]$ but $(q, t) \notin G[i, n]$. Thus we have
shown $(r, t) \in \Sep(G[j, n])$.
\end{proof}

\subsection{The tracking list}
\label{subsec:tl}

Given a fixed index $i$, figuring out whether $i$ is leaking or not
can be done using a deterministic automaton.  However, when one wants
to decide the existence of \emph{some} index that leaks, naively, one
would have to keep track of runs starting from all possible indices
$i \in \nats$. The tracking list will allow us to track only
quadratically many indices at once. The \emph{tracking list} exploits
the relationship between leaks and separations. It is a list of
transfer graphs which altogether separate all possible pairs of
states\footnote{It is sufficient to consider pairs in $R_j$.
However, as $R_j$ is not known \emph{a priori}, 
we consider all  pairs in $Q^2$.}, and are sufficient to detect when leaks occur.
% {We overapproximate the set of pairs $R_j$ by $Q^2$ as
%   $R_j$ is not known \emph{a priori}}. \fbox{last sentence useful?}

By item (\ref{lem:sep.monotone}) in Lemma~\ref{lemma.leaks}, for any
$n$, $\Sep(G[1, n]) \subseteq \Sep(G[2, n]) \subseteq \cdots
%\Sep(G[n-1,n])
%\subseteq
\Sep(G[n,n])$. The {\em exact} tracking list $\mathcal{L}_n$ at step
$n$ is defined as a list of $k \leq |\states|^2$ graphs
$G[i_1,n], \cdots, G[i_k,n]$, where
$1 \leq i_1 < i_2 < \cdots < i_k \leq n$ is the list of indices for
which $\Sep(G[i-1, n]) \neq \Sep(G[i, n])$ (with the convention that
$\Sep(G[0, n]) = \emptyset$). 

\medskip

Consider the sequence of graphs in Figure~\ref{f.ex}, obtained from
alternating {\tt try} and {\tt retry} in the example from
Figure~\ref{fig:tryretry} where \playertwo\ splits agents whenever
possible.
\begin{figure}[htbp]
\centering
\begin{tikzpicture}
    \node at (0,.75) {$0$};
    \node at (2,.75) {$1$};
    \node at (4,.75) {$2$};
    \node at (6,.75) {$3$};
    \node at (8,.75) {$4$};
    \node at (10,.75) {$5$};
    
    \node at (1,.75) {\tt try};
    \node at (5,.75) {\tt try};
    \node at (9,.75) {\tt try};
    \node at (3,.75) {\tt retry};
    \node at (7,.75) {\tt retry};
    
    \node at (1,-2.15) {$G_1$};
    \node at (3,-2.15) {$G_2$};
    \node at (5,-2.15) {$G_3$};
    \node at (7,-2.15) {$G_4$};
    \node at (9,-2.15) {$G_5$};
    
    \draw (0,.5) -- (0,-2);
    \draw (2,.5) -- (2,-2);
    \draw (4,.5) -- (4,-2);
    \draw (6,.5) -- (6,-2);
    \draw (8,.5) -- (8,-2);
    \draw (10,.5) -- (10,-2);
    
    \draw (0,-0.5) node (q1) {$\bullet$};
    \node  at (q1.225) {$q_0$}; 
    
    \draw (4,-0.5) node (q2) {$\bullet$};
    \node  at (q2.225) {$q_0$}; 
    
    \draw (8,-0.5) node (q3) {$\bullet$};
    \node  at (q3.225) {$q_0$}; 
    
    \draw (2,0) node (t1) {$\bullet$};
    \node  at (t1.35) {$q_\top$}; 
    
    \draw (6,0) node (t2) {$\bullet$};
    \node  at (t2.35) {$q_\top$}; 
    
    \draw (10,0) node (t3) {$\bullet$};
    \node  at (t3.35) {$q_\top$}; 

    \draw (2,-1) node (b1) {$\bullet$};
    \node  at (b1.35) {$q_\bot$}; 
    
    \draw (6,-1) node (b2) {$\bullet$};
    \node  at (b2.35) {$q_\bot$}; 
    
    \draw (10,-1) node (b3) {$\bullet$};
    \node  at (b3.35) {$q_\bot$};
    
    \draw (4,-1.5) node (s1) {$\bullet$};
    \node  at (s1.300) {$k$};
    
    \draw (6,-1.5) node (s2) {$\bullet$};
    \node  at (s2.300) {$k$};
    
    \draw (8,-1.5) node (s3) {$\bullet$};
    \node  at (s3.300) {$k$};
    
    \draw (10,-1.5) node (s4) {$\bullet$};
    \node  at (s4.300) {$k$};

    \draw[-latex',thick] (q1) -- (t1);
    \draw[-latex',thick] (q1) -- (b1);
    \draw[-latex',thick] (t1) -- (q2);
    \draw[-latex',thick] (b1) -- (s1);
    \draw[-latex',thick] (q2) -- (t2);
    \draw[-latex',thick] (q2) -- (b2);
    \draw[-latex',thick] (s1) -- (s2);
    \draw[-latex',thick] (t2) -- (q3);
    \draw[-latex',thick] (b2) -- (s3);
    \draw[-latex',thick] (s2) -- (s3);
    \draw[-latex',thick] (q3) -- (t3);
    \draw[-latex',thick] (q3) -- (b3);
    \draw[-latex',thick] (s3) -- (s4);

\end{tikzpicture}
\caption{Sequence of graphs associated with a run.}
\label{f.ex}
\end{figure}
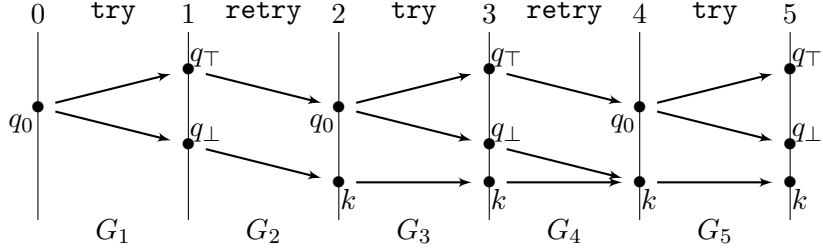

Let us compute $\Sep(G[i,5])$ on that example for $0 \leq i\leq
4$. 
The graph $G[0,5]=G_1 \cdots G_5$ has the following edges:
$(\stateinit,q_\top),(\stateinit,q_\bot), (\stateinit,k)$, and thus
$\Sep(G[0,5]) = \emptyset$.  In comparison,
$\Sep(G[1,5]) = \{ (k,q_\top), (k, q_\bot) \}$ because $(q_\bot,k)$ is
an edge of $G[1,5]$ and $(q_\bot,q_\top)$ and $(q_\bot,q_\bot)$ are
not.  Also, $\Sep(G[2,5]) = \{ (k,q_\top), (k,q_\bot) \}$.  Finally
$\Sep(G[3,5]) = \Sep(G[4,5]) = \{ (k,q_\top), (k,q_\bot), (q_\top,k),
(q_\bot,k)\}$. Thus, $\mathcal{L}_5 = (G[1,5];G[3,5])$.  Notice that
$G[1,5]= \{(q_\top,q_\top),(q_\top,q_\bot), (q_\top,k),(q_\bot,k)\}$
and $G[3,5]= \{(q_\top,q_\top),(q_\top,q_\bot), (q_\bot,k),(k,k)\}$.

\medskip

The {\em exact} tracking list ${\mathcal{L}}_n$ allows one to test for
infinite leaks, but computing it with polynomial memory seems
hard. Instead, we propose to approximate the {\em exact} tracking list into a list,
namely the tracking list $\overline{\mathcal{L}}_n$,
which needs only polynomial memory to be computed, and which is
sufficient for finding infinite leaks.

\medskip

The tracking list $\overline{\mathcal{L}}_n$ is also of the form
$\{ G[i_1, n], G[i_2, n] , \ldots G[i_k, n] \}$ where
$0 \leq i_1 < i_2 \ldots i_k < n$ with
$\emptyset \neq \Sep(G[i_r, n]) \subsetneq \Sep(G[i_{r+1}, n])$. It is
is computed inductively in the following way:
$\overline{\mathcal{L}}_0$ is the empty list.
 For $n > 0$, 
%if $\overline{\mathcal{L}}_{n-1} = \{ G[i_1, n-1], \ldots, G[i_k, 
% n-1]\}$, then 
the list $\overline{\mathcal{L}}_{n}$ is computed 
from $\overline{\mathcal{L}}_{n-1}$ and $G_n$
in three stages by the following
update$\_$list algorithm:
\begin{enumerate}
\item First, every graph $G[i,n-1]$ in the list $\overline{\mathcal{L}}_{n-1}$ is
  concatenated with $G_{n}$, yielding $G[i,n]$.% $=G[i,n-1] \cdot G_{n}$.

\item Second, $G_{n}=G[n-1,n]$ is added at the end of the list.
 
\item Lastly, the list is filtered: a graph $H$ is kept if and only if
  it separates a pair of states $(p,q)\in \states^2$ which is not separated by
  any graph that appears earlier in the list.\footnote{This algorithm can be performed without knowing the indices $(i_j)_{j \leq k}$, but just the graphs $(G[i_j, n])_{j \leq k}$.}
\end{enumerate}
Under this definition,
$\overline{\mathcal{L}}_{n} = \{ G[i_j, n] \mid 1 \leq j \leq k, \,
\Sep(G[i_{j-1}, n]) \neq \Sep(G[i_j, n]) \}$, with the convention
that $\Sep(G[i_0, n]) = \emptyset$.

\medskip

Notice that the tracking list $\overline{\mathcal{L}}_n$ may differ from  the exact tracking list $\mathcal{L}_n$, as shown with 
the example on Figure \ref{f.tracking}. 
We have $\mathcal{L}_3 = (G[1,3],G[2,3])$,
as $G[0,3]=G_1 \cdots G_3$ does not
separate any pair of states,
$G[1,3]= G_2 \cdot G_3$ separates $(q_1,q_2)$ and 
$G[2,3]= G_3$ separates $(q_1,q_2)$ and $(q_2,q_1)$.
However, 
$\overline{\mathcal{L}}_3 = (G[2,3]) \neq \mathcal{L}_3$.
Indeed, 
$\mathcal{L}_2 = \overline{\mathcal{L}}_2 = (G[0,2])$
as $G[0,2]=G_1 \cdot G_2$ and $G[1,2]=G_2$
separate exactly the same pairs $(q_1,q_2);(q_1,q_3);(q_2,q_3);(q_3,q_2)$.
Applying the update$\_$list algorithm, we obtain the 
intermediate list 
(G[0,3],G[2,3])
after stage 2. As $G[0,3]$ separates no pair of states, it is filtered out in stage 3. We obtain 
$\overline{\mathcal{L}}_3 = (G[2,3]) \neq \mathcal{L}_3$.

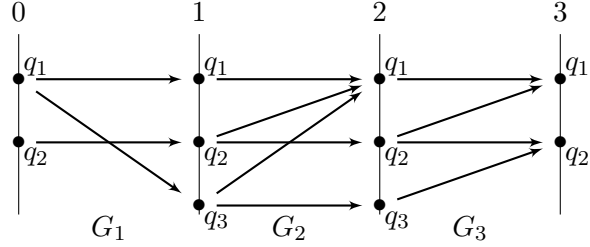
\begin{figure}[t!]
\begin{center}
\begin{tikzpicture}[scale=1.2]
    \node at (0,.75) {$0$};
    \node at (2,.75) {$1$};
    \node at (4,.75) {$2$};
    \node at (6,.75) {$3$};
    
    \draw (0,.5) -- (0,-1.5);
    \draw (2,.5) -- (2,-1.5);
    \draw (4,.5) -- (4,-1.5);
    \draw (6,.5) -- (6,-1.5);
    
    \draw (0,0) node (t1) {$\bullet$};
    \node  at (t1.35) {$q_1$}; 
    
    \draw (2,0) node (t2) {$\bullet$};
    \node  at (t2.35) {$q_1$}; 
    
    \draw (4,0) node (t3) {$\bullet$};
    \node  at (t3.35) {$q_1$}; 

    \draw (6,0) node (t4) {$\bullet$};
    \node  at (t4.35) {$q_1$}; 

    \draw (0,-0.7) node (u1) {$\bullet$};
    \node  at (u1.320) {$q_2$}; 
    
    \draw (2,-0.7) node (u2) {$\bullet$};
    \node  at (u2.320) {$q_2$}; 
    
    \draw (4,-0.7) node (u3) {$\bullet$};
    \node  at (u3.320) {$q_2$}; 

    \draw (6,-0.7) node (u4) {$\bullet$};
    \node  at (u4.320) {$q_2$};

    \draw (2,-1.4) node (v1) {$\bullet$};
    \node  at (v1.320) {$q_3$}; 

    \draw (4,-1.4) node (v2) {$\bullet$};
    \node  at (v2.320) {$q_3$};

    \draw[-latex',thick] (t1) -- (t2);
    \draw[-latex',thick] (t2) -- (t3);
    \draw[-latex',thick] (t3) -- (t4);
    
    \draw[-latex',thick] (u1) -- (u2);
    \draw[-latex',thick] (u2) -- (u3);
    \draw[-latex',thick] (u3) -- (u4);
    
    \draw[-latex',thick] (v1) -- (v2);
    
    \draw[-latex',thick] (t1) -- (v1);
    \draw[-latex',thick] (v1) -- (t3);
    \draw[-latex',thick] (v2) -- (u4);
    \draw[-latex',thick] (u2) -- (t3);
    \draw[-latex',thick] (u3) -- (t4);

    \node at (1,-1.65) {$G_1$};
    \node at (3,-1.65) {$G_2$};
    \node at (5,-1.65) {$G_3$};
 \end{tikzpicture}
\caption{Example where the tracking list $\overline{\mathcal{L}}_3$
  differs from the exact tracking list ${\mathcal{L}}_3$.}
\label{f.tracking}
\end{center}
\end{figure}

%Because of the third item, there are at most $|\states|^2$ graphs in
%the tracking list. The list may become empty if no pair of states is
%separated by any graph, for example if all the graphs are
%complete. 
Let $\overline{\mathcal{L}}_n = \{H_1 ,\cdots ,H_\ell\}$ be the tracking list at step $n$. Each transfer graph $H_r \in \overline{\mathcal{L}}_n$ is of
the form $H_r = G[t_r,n]$. We say that $r$ is the \emph{level} of
$H_r$, and $t_r$ the \emph{index tracked} by $H_r$. Observe that the
lower the level of a graph in the list, the smaller the index it
tracks.

% \begin{figure}
%  \includegraphics[scale=0.4]{cex1.pdf}
%  \caption{Example where the tracking list $\overline{\mathcal{L}}_4$ is different from the exact tracking list ${\mathcal{L}}_4$}
%  \label{f.tracking}
% \end{figure}

%
% The tracking
% list $\mathcal{L}_n$ at step $n$ has several straightforward properties:
% \begin{itemize}
% \item The length of $\mathcal{L}_n$ is at most $|\states|^2$;
% \item Let $H_r$ be the graph at level $r$ in $\mathcal{L}_n$.
% Then there exist $t_r$, called {\em the index tracked by $H_r$},
% such that $H_r=G[t_r,n]$.
% \item The lower the level of a graph in the list, the smaller the index it tracks.
% \end{itemize}
%
% Remark that the list may be emptied in case no pair of states is
% separated by any graph of the list, for example if all the graphs
% are complete. 

When we consider the sequence of tracking lists
$(\overline{\mathcal{L}}_n)_{n \in \mathbb{N}}$, for every index $i$,
either it eventually stops to be tracked or it is tracked forever from
step $i$, \emph{i.e.}\ for every $n \geq i$, $G[i,n]$ is not filtered
out from $\overline{\mathcal{L}}_n$. In the latter case, $i$ is said
to be \emph{remanent} (it will never disappear).
%\begin{definition}
%An index i is {\em remanent} if for every $j \geq i, G[i,j]$ is in the tracking list.
%\end{definition}
%Using Lemmas~\ref{lemma.leaks} and~\ref{lemma.separation}, we obtain:

\begin{lem}
\label{lem:caracPG}
A play has infinite capacity iff there exists an index $i$ such that
$i$ is remanent and leaks infinitely often.
\end{lem}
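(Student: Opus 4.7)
The forward direction is immediate from Lemma~\ref{lemma.leaks}: a remanent index leaking infinitely often is, in particular, an index leaking infinitely often, yielding infinite capacity. For the converse, suppose the play has infinite capacity; by Lemma~\ref{lemma.leaks} the set $I^*$ of indices leaking infinitely often is non-empty. The plan is to exhibit a remanent index in $I^*$.

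The argument rests on four observations. \emph{(i)~Leak characterisation.} Unfolding the definition, $k$ leaks at step $n$ iff $\Sep(G[k,n]) \cap \mathsf{CS}(G_{n+1}) \neq \emptyset$, where $\mathsf{CS}(G) := \{(z,x)\in Q^2 : \exists y,\,(z,y),(x,y)\in G\}$ is the common-successor relation. \emph{(ii)~Monotonicity of leaks.} By Lemma~\ref{lemma.separation}(\ref{lem:sep.monotone}), $\Sep(G[j,n]) \subseteq \Sep(G[i,n])$ whenever $j \leq i$, so the set of $k$ leaking at a fixed $n$ is upward-closed in $k$. \emph{(iii)~Maximal Sep in the list.} The last index $\ell(n) := \max \overline{\mathcal{L}}_n$ satisfies $\Sep(G[\ell(n),n]) = \Sep(G_n)$, which is the largest Sep over all start indices, so $\ell(n)$ leaks at $n$ whenever any index does. \emph{(iv)~Persistence.} Because the update only appends the newest index and never reinstates a filtered one, any index belonging to $\overline{\mathcal{L}}_n$ at infinitely many $n$ must be remanent.

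Let $R$ denote the set of remanent indices, set $i^* := \min I^*$ and $r^\star := \max R$. If $i^* \leq r^\star$, observation~(ii) gives $\Sep(G[r^\star,n]) \supseteq \Sep(G[i^*,n])$ for every $n$, so $r^\star$ leaks at every leak time of $i^*$, placing $r^\star \in I^* \cap R$ and finishing the proof. Henceforth assume $i^* > r^\star$ (in particular $i^*$ is non-remanent) and, for contradiction, that $I^* \cap R = \emptyset$; finiteness of $R$ then provides a step $N$ beyond which no remanent index leaks.

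The final step is a double pigeonhole. Since the leak pairs form the finite set $Q^2$, fix a pair $(z^*, x^*)$ witnessing the leak of $i^*$ at an infinite set $N^*$ of times. For $n \in N^*$ with $n > N$, no remanent leaks, so $(z^*, x^*) \notin \Sep(G[r^\star, n])$, and the threshold $k_0(n) := \min\{k : (z^*, x^*) \in \Sep(G[k, n])\}$ lies in the finite range $(r^\star, i^*]$. Pigeonholing on $k_0(n)$ extracts a value $k_0^\star$ attained on an infinite subset $N^{**} \subseteq N^*$; by~(i), $k_0^\star$ leaks at each $n \in N^{**}$, so $k_0^\star \in I^*$, and minimality of $i^*$ forces $k_0^\star = i^*$. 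Now define $\mu(n) := \min\{k \in \overline{\mathcal{L}}_n : (z^*, x^*) \in \Sep(G[k, n])\}$, well-defined by~(iii); by~(i), $\mu(n)$ leaks at $n$. Since $i^* \notin \overline{\mathcal{L}}_n$ at large $n$ while $k_0(n) = i^*$, we get $\mu(n) > i^*$ on $N^{**}$. If $(\mu(n))_{n \in N^{**}}$ took only finitely many values, observation~(iv) would promote one of them to a remanent member of $I^*$, contradicting $I^* \cap R = \emptyset$. The main obstacle I anticipate is ruling out the remaining possibility of infinitely many distinct $\mu(n) > i^*$: the intended approach is to follow, at each eventual filtering event of $\mu(n)$, the strictly smaller replacing index that by construction inherits the same Sep at that filtering step; the pair $(z^*, x^*)$ descends along this chain, and since $\mathbb{N}$ is well-ordered and filtering is irreversible by~(iv), the descent must ultimately stabilise at a remanent index that still tracks $(z^*, x^*)$ at infinitely many $n$, again contradicting $I^* \cap R = \emptyset$.
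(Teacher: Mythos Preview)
Your backward direction and your Case~1 ($i^*\leq r^\star$) are correct. The real content of the lemma is precisely to show that Case~2 never occurs, and that is where your argument has a genuine gap.

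The descent sketch does not work as stated. The pair $(z^*,x^*)$ is a witness for $\Sep(G[\mu(n),n])\cap\mathsf{CS}(G_{n+1})$ at the specific step $n\in N^{**}$, but the filtering of $\mu(n)$ happens at some later step $m>n$, and there is no monotonicity of $\Sep(G[i,\cdot])$ in the second argument: $(z^*,x^*)$ need not belong to $\Sep(G[\mu(n),m])$. Hence the ``smaller replacing index'' $j<\mu(n)$, which satisfies $\Sep(G[j,m])=\Sep(G[\mu(n),m])$, has no reason to separate $(z^*,x^*)$ at step $m$, let alone at the steps in $N^{**}$. So nothing descends, and the well-ordering argument has no foothold. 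The root of the problem is that your observation~(i) is a step-by-step characterisation using the one-step common-successor relation $\mathsf{CS}(G_{n+1})$; this only yields ``infinitely many $n$'' statements, which is too weak to pin down a single index that persists in the tracking list.

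The paper's proof avoids this by working with the relation $R_n$ (pairs with a common successor at \emph{some} step $>n$) rather than $\mathsf{CS}(G_{n+1})$. The crucial leverage is Lemma~\ref{lemma.separation}(\ref{lem:sep.always}): if $i^*$ leaks infinitely often then for \emph{every} $n>i^*$ one has $\Sep(G[i^*,n])\cap R_n\neq\emptyset$, a ``for all $n$'' statement rather than ``for infinitely many $n$''. Coupled with Lemma~\ref{lemma.separation}(\ref{lem:sep.finite}), which says that every $j<i^*$ eventually stops separating pairs in $R_n$, one obtains a threshold $N$ beyond which $\Sep(G[j,n])\subsetneq\Sep(G[i^*,n])$ for all $j<i^*$ and all $n\geq N$. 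Taking $j^*$ to be the smallest index $\geq i^*$ present in $\overline{\mathcal{L}}_N$, the strict inclusion shows that $G[j^*,n]$ can never be filtered for $n\geq N$: its predecessor in the list always has index $<i^*$, hence strictly smaller Sep. Thus $j^*$ is remanent, and since $\Sep(G[j^*,n])\supseteq\Sep(G[i^*,n])$ meets $R_n$ for all $n\geq N$, Lemma~\ref{lemma.separation}(\ref{lem:sep.finite}) (contrapositive) gives that $j^*$ leaks infinitely often. In particular $r^\star\geq j^*\geq i^*$, so your Case~2 is in fact vacuous---but establishing this vacuity is the whole proof.
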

\begin{proof}
Because of Lemma~\ref{lemma.leaks} we only need to show that if there is an
index $i$ that leaks infinitely often, then there is an index which is
remanent and leaks infinitely often.

Let $i$ be the smallest index that leaks infinitely often. By
Lemma~\ref{lemma.separation} (\ref{lem:sep.finite}) there is an $N > i$ so
that whenever $j < i < N \leq n$, $\Sep(G[j, n]) \cap R_n = \emptyset$.
Similarly, by Lemma~\ref{lemma.separation} (\ref{lem:sep.always}), for every $n > i$, $\Sep(G[i, n]) \cap R_n \neq
\emptyset$. Combined with Lemma~\ref{lemma.separation} (\ref{lem:sep.monotone}), this implies that there is some index 
$j^* \geq i$ which is remanent.
Indeed, let $\overline{\mathcal{L}}_N = \{ G[i_1, N],
G[i_2, N] \ldots G[i_k, N]\}$ and let $j^* = \min \{ i_r \mid i_r \geq i\}$.
Index $j^*$ exists because for any $i_r < i$, $\Sep(G[i_r, N])  \subsetneq \Sep(G[i, N]) \subseteq \Sep(G[N-1, N])$. The strict inequality arises because $\Sep(G[i_r, N]) \cap R_N = \emptyset$ but $\Sep(G[i, N]) \cap R_N \neq \emptyset$. By the same argument, for every $n \geq N$ and $i_r < i$, $\Sep(G[i_r, n]) \cap R_n = \emptyset$
and $\Sep(G[j^*, n]) \cap R_n \supseteq \Sep(G[i, n]) \cap R_n \neq \emptyset$.
This shows that $j^*$ is remanent. By Lemma~\ref{lemma.separation}
(\ref{lem:sep.finite}), index $j^*$ also leaks infinitely often.
\end{proof}

\subsection{The parity game}
We now describe a parity game $\PG$, which extends the support arena with 
on-the-fly computation of the tracking list.
%the current tracking list is stoblack in the state space of the parity game.

{\medskip \noindent \bf Priorities.} By convention, lowest priorities are the most important
and the odd parity is good for \playerone, so \playerone\ wins iff the $\liminf$ of the priorities
is odd. With each level $1\leq r \leq |\stnb|^2$ of the tracking list are associated two priorities
$2r$ (graph $G[i_r,n]$ non-remanent) and $2r+1$ (graph $G[i_r,n]$ leaking), and on top of that are added priorities $1$ (goal reached) and $2 |\stnb|^2+2$ (nothing),
hence the set of all priorities is $\{1,\ldots,2 |\stnb|^2+2\}$. 

When \playertwo\ chooses a transition labelled by a transfer graph
$G$, the tracking list is updated with $G$ and the priority of the
transition is determined as the smallest among: priority 1 if the
support  $\{f\}$ has ever been visited, priority $2r+1$ for the
smallest $r$ such that $H_r$ (from level $r$) leaks at $G$,
%(see there exist $q,x,y$ with $(x,y) \in G$, $(q,x) \notin H_i$ and %$(q,y) \in H_i\cdot G$
%(see Lemma \ref{lemma.track} for a definition).
priority $2r$ for the smallest level $r$ where graph $H_r$ was removed
from ${\mathcal L}$,
and in all other cases priority $2 |\stnb|^2+2$.
%
%\begin{itemize}
%\item 
%\item
%\item
%\item
%\end{itemize}
%
%Parity $2 |\stnb|^2+2$ means that no leak at $G$ was seen and no level removed.
%\playerone\ wins if the minimal parity seen infinitely often over the path is odd.
%For a graph $G$, we denote by $dom(G)=\{x \mid \exists y, (x,y)\in %G\}$
%and $out(G)=\{y \mid \exists x, (x,y)\in G\}$.

{\medskip \noindent \bf  States and transitions.}
$\GG^{\leq |\stnb|^2}$ denotes the set of list of at most $|\stnb|^2$ transfer graphs.
\begin{itemize}
\item States of $\PG$ form a subset of
  $\{0,1\} \times 2^Q\times \GG^{\leq |\stnb|^2}$, each state being of
  the form $(b,S,H_1,\ldots,H_\ell)$ with $b \in \{0,1\}$ a bit
  indicating whether the support $\{f\}$ has been seen, $S$ the
  current support and $(H_1,\ldots,H_\ell)$ the tracking list.
%and all the transfer graphs $H_1,\ldots,H_l$ have the same image,
%denoted $S$.
%a sequence of transfer graphs of length $l\leq {|\stnb|^2}$.
%We assume that a current support $S$ can be defined, that is
%for all $i$, $\im(H_i)=S$.
%
%\item 
%The i
The initial state is $(0,\{\state_0\},\varepsilon)$, with $\varepsilon$ the empty list.

\item Transitions in $\PG$ are all $(b,S,H_1,\ldots,H_{\ell})
  \arr{\priority,a,G} (b',S',H'_1,\ldots,H'_{\ell'})$ where $\priority$
  is the priority, and such that $S \arr{a,G} S'$ is a transition of
  the support arena, and
  \begin{enumerate}
  \item $(H'_1,\ldots,H'_{\ell'})$ is the tracking list obtained by
    updating the tracking list $(H_1,\ldots,H_{\ell})$ with $G$, as
    explained in subsection~\ref{subsec:tl};
  \item if $b=1$ or if $S' = \{\targetstate\}$, then $\priority =1$ and
    $b'=1$;
\item otherwise $b'=0$.
In order to compute the priority $\priority$, we let $\priority'$ be the smallest level $1\leq r \leq \ell$ such that $H_r$ 
leaks at $G$ and $\priority'=\ell+1$ if there is no such level, and we
also let
%, that is there exists $(x_\ell,y_{\ell}) \in G$
%and $q \in Q$ with $(q,x_\ell) \notin G_{\ell}$ and 
%$(q,y_{\ell}) \in Z_{\ell}'$.
%\item 
%Then $Z_1,\ldots,Z_{|\stnb|^2}$ is obtained from $Z'_1, \ldots,Z'_{|\stnb|^2},Z'$ 
%by removing graphs $Z'_i$ 
%such that for all $(x,y)$ separated
%by $Z'_i$, there exists $j<i$ with 
%$(x,y)$ separated by $Z'_j$.
%with  $x \notin Z'_{i}(q)$ and 
%$y \in Z'_{i}(q)$, there exists 
%$j>i$ and a state $q'$ with 
%$x \notin Z'_{j}(q')$ and 
%$y \in Z'_{j}(q')$.
$\priority''$ as the minimal level $1\leq r \leq \ell$ such that $H'_{r}
\neq H_{r} \cdot G$ and $\priority''=\ell+1$ if there is no such
level. Then $\priority=\min( 2 \priority'+1, 2 
\priority'')$.  
\end{enumerate}
\end{itemize}

%Notice that $PG$ arena has $0(2^{1+|\stnb|^4})$ states. 

We are ready to state the main result of this paper, which yields an
\EXPTIME complexity for the population control problem. This entails
the first statement of Theorem~\ref{th1}, and together with
Proposition~\ref{th.2wins}, also the first statement of Theorem~\ref{th2}.

\begin{thm}
\label{th.parity}
  \playerone\ wins the game $\PG$ if and only if \playerone\ wins the
  \ror\ game.
  Solving these games can be done in time
  $O(2^{(1+|\stnb|+|\stnb|^4)(2|\stnb|^2+2)})$. 
Strategies with $2^{|\stnb|^4}$ 
memory states are sufficient to both \playerone\ and \playertwo.
\end{thm}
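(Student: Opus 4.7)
The plan is to establish three things in sequence: (i) the bijection between plays of $\PG$ and plays of the support arena, which lets strategies transfer verbatim; (ii) the equivalence of the parity winning condition in $\PG$ with the capacity winning condition, using Lemma~\ref{lem:caracPG}; (iii) the complexity and memory bounds from standard parity-game results.

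For (i), I would observe that given a play $S_0 \xrightarrow{a_1,G_1} S_1 \xrightarrow{a_2,G_2} \cdots$ in the support arena, the bit $b$ and the tracking list $(H_1,\ldots,H_\ell)$ at each step are determined deterministically by the history via the \texttt{update\_list} procedure. Projecting out these components yields a one-to-one correspondence between plays of $\PG$ and plays of the support arena, under which strategies and their consistent plays correspond. Hence winning strategies transfer between the two games once the winning conditions are shown to coincide.

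For (ii), the argument hinges on analysing when $\liminf_n p_n$ is odd. The case $\liminf_n p_n = 1$ matches exactly ``reach $\{\targetstate\}$'' since $b$ flips to $1$ precisely when $\{\targetstate\}$ is visited. Otherwise $\liminf_n p_n = 2r+1$ for some level $r \geq 1$. Unpacking the definition $p_n = \min(2p'_n+1,\, 2p''_n)$: having $p_n \geq 2r+1$ eventually forces both $p'_n \geq r$ (no leak at levels $< r$) and $p''_n > r$ (no graph at levels $\leq r$ changes), so the graphs at levels $1,\ldots,r$ stabilise past some step $n_0$, in particular level $r$ tracks a remanent index $i^*$. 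Then $p_n = 2r+1$ occurring infinitely often forces $p'_n = r$ infinitely often, meaning $G[i^*,n]$ leaks infinitely often. For the converse direction, starting from a remanent index $i$ that leaks infinitely often, the smallest such index $i^*$ eventually occupies a fixed level $r$ in the tracking list (since all strictly smaller tracked indices are either non-remanent or, by minimality of $i^*$, leak only finitely often, so priorities below $2r$ occur only finitely often from some point on), and its infinite leaks produce priority $2r+1$ infinitely often, giving $\liminf_n p_n = 2r+1$. By Lemma~\ref{lem:caracPG}, this is exactly the characterisation of infinite capacity, so the two winning conditions match. The main obstacle I expect lies precisely in this converse: the level of a remanent index $i^*$ in the tracking list can fluctuate, and I must carefully combine Lemma~\ref{lemma.separation}(\ref{lem:sep.finite},\ref{lem:sep.always},\ref{lem:sep.monotone}) with the minimality of $i^*$ to show stabilisation.

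For (iii), I count the states of $\PG$ as a subset of $\{0,1\} \times 2^Q \times \GG^{\leq |Q|^2}$, yielding at most $2 \cdot 2^{|Q|} \cdot (2^{|Q|^2})^{|Q|^2} = 2^{1+|Q|+|Q|^4}$ states, with $2|Q|^2+2$ priorities. A standard parity-game algorithm runs in time $O(n^d)$ with $n$ the number of states and $d$ the number of priorities, yielding the announced bound $O(2^{(1+|Q|+|Q|^4)(2|Q|^2+2)})$. Finally, since parity games admit positional winning strategies, and since the support $S$ is already a component of any configuration in the \ror\ game, a memoryless winning strategy in $\PG$ translates to a strategy in the \ror\ game using only the pair $(b, H_1,\ldots,H_\ell)$ as memory, which has size at most $2 \cdot 2^{|Q|^4} = O(2^{|Q|^4})$, for both players.
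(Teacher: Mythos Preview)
Your proposal is correct and follows essentially the same approach as the paper. The paper's own proof is considerably more terse in part (ii): it simply asserts that ``by design of $\PG$, [the priority being $2r+1$] occurs iff $r$ is remanent and leaks infinitely often'' and then invokes Lemma~\ref{lem:caracPG}, whereas you unpack the priority mechanism and argue both directions explicitly. Your expanded argument is sound (the level of the minimal remanent, infinitely-leaking index $i^*$ can only decrease and hence stabilises; the indices at smaller levels are then remanent and, by minimality, leak only finitely often). One small point on the memory bound: the paper observes that the bit $b$ can be dropped from the memory, since once $\{\targetstate\}$ is visited \playerone\ has already won regardless of what is played afterwards; this yields exactly $2^{|Q|^4}$ memory states rather than your $2\cdot 2^{|Q|^4}$, matching the theorem statement precisely.
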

 %to win the \ror\ game.

\begin{proof}
%[Sketch of proof]
{\color{black} 
The state space of parity game $\PG$ is the product of the 
set of supports  
%state space of the \ror\ game
with a deterministic automaton computing the tracking list.
%As the state space of the \ror\ game is also 
%the set of supports, 
There is a natural correspondence between plays and strategies in the parity game $\PG$
and in the \ror\ game.

\playerone\ can win the parity game $\PG$ in two ways: either the play
visits the support $\{f\}$, or the priority of the play is $2r+1$ for
some level $1\leq r \leq |Q|^2$. By design of $\PG$, this second
possibility occurs iff $r$ is remanent and leaks infinitely often.
According to Lemma~\ref{lem:caracPG}, this occurs if and only if the
corresponding play of the \ror\ game has infinite capacity.  Thus
\playerone\ wins $\PG$ iff he wins the capacity game.

In the parity game $\PG$, there are at most $2^{1 +
  |\stnb|}\left(2^{|\stnb|^2}\right)^{|Q|^2}=2^{1 + |\stnb| +
  |\stnb|^4}$ states and $2|Q|^2+2$ priorities, implying the
complexity bound using state-of-the-art
algorithms~\cite{JurdzinskiStacs2000}.  % Actually the complexity may
% even be pseudo-polynomial if the algorithms in~\cite{sanjay} are
% correct.
Actually the complexity is even quasi-polynomial according to the
algorithms in~\cite{sanjay}.  Notice however that this has little impact on the complexity of the population control problem, as the number of priorities is logarithmic in the number of states of our parity game.

Further, it is well known that the winner of a parity game has a
positional winning strategy~\cite{JurdzinskiStacs2000}. 
%: the
%choice of the move depends only on the current state of the play. 
A \emph{positional} winning strategy $\sigma$ in the game $\PG$
corresponds to a \emph{finite-memory} winning strategy $\sigma'$ in
the \ror\ game, whose memory states are the states of $\PG$. Actually
in order to play $\sigma'$, it is enough to remember the tracking list,
\emph{i.e.}\ the third component of the state space of $\PG$. Indeed,
the second component, in $2^Q$, is redundant with the actual state of
the \ror\ game and the bit in the first component is set to $1$ when
the play visits $\{f\}$ but in this case the \ror\ game is won by
\playerone\ whatever is played afterwards. Since there are at most
$2^{|Q|^4}$ different tracking lists, we get the upper bound on the
memory.} { \iffalse According to the analysis performed in the
previous section, the play has finite capacity if and only if there is
a level $r \leq |\states|^2$ such that: 1) ultimately no graph of
level $\leq r$ is removed from the tracking list; 2) infinitely often
the graph at level $r$ of the tracking list leaks at the transfer
graph chosen by Agents.  In such a situation, the priority of the play
is either $1$ (if $\{f\}$ has been visited) or an odd priority $\leq
2r +1$ and \playerone\ wins.  \fi }
\end{proof}

\section{Number of steps before Synchronization}
\label{sec:timetosynch}
In this section, we will restrict ourselves to {\em controllable NFAs}, that is positive instances of the population control problem.
To be useful in the presence of many agents, the controller should be able to gather all agents in the target state in a reasonable time (counted as the number of actions played before synchronization).
Notice that this notion is similar to the termination time used for population protocols in \cite{BEK18}.

%Convergence time logarithmic in the number of agents would be extremely good. Polynomial time would be reasonable, while controller taking exponential time would be mostly unusable because too slow.

\subsection{Dependency with respect to the number of agents}
\label{s.number}

We first show that there are controllable NFAs for which \playerone\ requires a
quadratic number of steps (in the number of agents) against the best
strategy of \playertwo. 
Consider again the NFA 
$\cA_{\text{time}}$  
from Figure~\ref{fig:tryretry} (see also Figure~\ref{f.gadget}, left).
%,where omitted transitions lead to a sink losing state $\frownie$).
Recall that $\cA_{\text{time}}$ is controllable.

% \begin{figure}[!ht]
% \begin{center}
% \begin{tikzpicture}[->,>=stealth',shorten >=1pt,auto,node distance=2.8cm,scale=0.75, every node/.style={scale=0.75},
%                     semithick]
%   \tikzstyle{every state}=[fill=none,draw=black,text=black]
%
%   \node[state] (A)                    {$\stateinit$};
%   \node[state]         (B) [above right of=A] {$\state_{\top}$};
%   \node[state]         (C) [below right of=A] {$\state_{\bot}$};
%   \node[state]         (D) [left of=C]       {$k$};
%   \node[state]         (E) [below right of=B]       {$\targetstate$};
% %  \node[state]         (F) [right of=C]       {$\frownie$};
%
%   \path (A) edge  node [below right] {\tt try} (B)
%             edge              node [above right]  {\tt try} (C)
%         (B) edge [bend right] node [above left] {\tt keep} (A)
%             edge              node  {\tt top} (E)
%  %           edge              node [pos=0.1, sloped] {bottom} (F)
%         (C) edge              node {\tt keep} (D)
%   %          edge              node {\tt top} (F)
%             edge              node [below right] {\tt bot} (E)
%         (D) edge              node {\tt restart} (A)
%             edge [loop below] node {$\actions \setminus \{{\tt restart}\}$} (D)
%         (E) edge [loop right] node {$\actions$} (E);
%            
% \end{tikzpicture}
% \end{center}
%
% % {\includegraphics[scale=0.4]{matrix.pdf}}
% \caption{A system $\cA_{\text{time}}$ requiring 
% at worse a quadratic number of steps
% in the number of agents before synchronization}
% \label{f.time}
% \end{figure}
%
%
%\fbox{please adapt text to new notations (see fig)}

\begin{lem}
\label{slow}
For the NFA $\cA_{\text{time}}$, \playerone\ requires $\Theta(m^2)$
steps to win in the worst case.
\end{lem}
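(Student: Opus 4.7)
The plan is to prove matching upper and lower bounds by exhibiting a concrete winning strategy achieving $O(m^2)$ steps, and a concrete adversary forcing $\Omega(m^2)$ against every winning strategy. For the upper bound I would describe the following strategy $\sigma$ for \playerone: while $\stateinit$ contains at least two agents, alternate the actions ${\tt try}$ and ${\tt keep}$; once $\stateinit$ contains a single agent, play ${\tt try}$ followed by ${\tt top}$ or ${\tt bot}$ (depending on where \playertwo\ sends that lone agent); then play ${\tt restart}$ to bring the agents accumulated in $k$ back to $\stateinit$, and iterate. A case analysis on the current support verifies that every chosen action is safe, i.e., leaks no token to the implicit sink. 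The $i$-th cycle starts with $m-i+1$ agents in $\stateinit$ and delivers exactly one agent to $\targetstate$ at cost $2(m-i)+3$ steps, so the total is $\sum_{i=1}^{m}\bigl(2(m-i)+3\bigr) - 1 = m^2+2m-1$.

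For the lower bound I would fix the adversarial strategy $\tau$: whenever ${\tt try}$ is played with $n\geq 2$ agents in $\stateinit$, send $n-1$ tokens to $q_\top$ and exactly one to $q_\bot$. The central claim is that against $\tau$, at most one agent is delivered to $\targetstate$ per step. Only ${\tt top}$ and ${\tt bot}$ can route agents to $\targetstate$, and ${\tt top}$ is safe only when $q_\bot$ is empty. After a ${\tt keep}$, both $q_\top$ and $q_\bot$ are empty so ${\tt top}$ is a no-op; after a ${\tt try}$, $q_\bot$ is non-empty unless $\stateinit\leq 1$ at that ${\tt try}$, in which case $q_\top$ itself holds at most one agent. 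The symmetric statement for ${\tt bot}$ follows. Hence any winning strategy of \playerone\ must contain at least $m$ productive steps. To finish, I would show that the $i$-th productive step is preceded by an $\Omega(m-i)$ overhead: just after the $(i-1)$-th productive step the support satisfies $\stateinit = q_\top = q_\bot = \emptyset$ and all $m-(i-1)$ non-$\targetstate$ agents reside in $k$; because ${\tt restart}$ is the only action exiting $k$ and is safe only from $\stateinit = \emptyset$, \playerone\ must play ${\tt restart}$ once, then alternate ${\tt try}$ and ${\tt keep}$ a total of $m-i$ times (each such pair reducing $\stateinit$ by exactly one under $\tau$), then the productive ${\tt try}$ followed by ${\tt top}$ or ${\tt bot}$. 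Summation yields $\Omega(m^2)$.

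The main obstacle is justifying the rigidity of \playerone's strategy in the lower bound, i.e., showing that no creative sequence of actions can bypass the ``reduce to one, then produce, then restart'' scheme. This reduces to a careful enumeration of safe actions at each support. Since every state of $\cA_{\text{time}}$ admits very few outgoing transitions, any attempted shortcut --- playing ${\tt top}$ with $q_\bot$ non-empty, playing ${\tt restart}$ with $\stateinit$ non-empty, or playing ${\tt try}$ while agents remain in $q_\top \cup q_\bot$ --- sinks at least one token and therefore forfeits the winning condition.
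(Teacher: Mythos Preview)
Your lower-bound argument is sound and actually more detailed than the paper's own proof, which only asserts that the specific adversary is \playertwo's ``best answer'' without arguing that no strategy of \playerone\ can beat $\Omega(m^2)$ against it. Your analysis that productive steps deliver at most one agent under $\tau$, and that the forced alternation of safe actions between productive steps costs $\Omega(m-i)$, is the right way to make this rigorous.

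The upper bound, however, has a real gap: the strategy $\sigma$ you describe is not winning. You test the count in $\stateinit$ \emph{before} playing ${\tt try}$, and keep alternating ${\tt try};{\tt keep}$ while that count is at least two. But \playertwo\ is not obliged to split. If \playertwo\ sends all agents to $q_\top$ on every ${\tt try}$, then ${\tt keep}$ returns them all to $\stateinit$ and the loop never terminates. Symmetrically, if \playertwo\ sends all agents to $q_\bot$, then after ${\tt keep}$ the count in $\stateinit$ drops to zero and your strategy is undefined (and any natural completion leads, after ${\tt restart}$, back to the starting configuration with no progress). Your cost calculation ``the $i$-th cycle \ldots\ at cost $2(m-i)+3$ steps'' silently assumes the adversary sends exactly one agent to $q_\bot$ each round, which is precisely the behaviour you should be proving is the worst case rather than presupposing.

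The fix is the one the paper uses: test \emph{after} ${\tt try}$ whether one of $q_\top,q_\bot$ is empty, and if so play the corresponding ${\tt top}$ or ${\tt bot}$ immediately (sending all currently split agents to $\targetstate$); otherwise play ${\tt keep}$. Under this rule a non-splitting adversary loses instantly, and any genuine split moves at least one agent to $k$, so the inner loop terminates in at most $n-1$ iterations when started with $n$ agents in $\stateinit$. The worst case is then exactly the adversary $\tau$ you already analysed, and your summation $\sum_i (2(m-i)+O(1))=\Theta(m^2)$ goes through for both bounds.
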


\begin{proof}
  %positive instance of the population control problem.
  A winning strategy $\sigma$ for
  \playerone\ is to play {\tt try} followed by {\tt keep} until only
  one of $\state_\top$ or $\state_\bot$ is filled, in which case
  action {\tt top} or {\tt bottom} can be played to move the
  associated agents to the target state.%$\targetstate$.  
  This will eventually happen,
  as the number of agents in $\state_{\top}$ and in $\state_{\bot}$ is
  decreasing while the number of agents in state $k$ increases upon
  ({\tt try};{\tt keep}) and when $\state_{\bot}$ is not empty. This
  is the strategy generated from our algorithm.
%
% \begin{itemize}
% \item
% for a configuration in $(\omega,0,0,*,*,0)$, 
% meaning that there are some agents ($\omega$) in state $s$,
% no agents (0) in state $s_{top},s_{bot},\frownie$, 
% and there may be agents ($*$) in state $k,\smiley$,
% action \emph{try} needs to be played (other actions would lead at least one token from $s$ to $\frownie$).
% \item for a configuration in $(0,\omega,0,*,*,0)$, 
% action \emph{top} can be played.
% \item for a configuration in $(0,0,\omega,*,*,0)$, 
% action \emph{bot} can be played.
% \item for a configuration in $(0,\omega,\omega,*,*,0)$, 
% action \emph{keep} needs to be played.
% \item for a configuration in $(0,0,0,\omega,*,0)$, 
% action \emph{retry} needs to be played.
% \end{itemize}

  We now argue on the number of steps $\sigma$ needed to send all
  agents to the target state. Observe that the only 
  non-deterministic action is {\tt try} from state $\stateinit$.
  Clearly enough, regarding the number of steps before
  synchronisation, \playertwo' best answer is to move one agent in
  $\state_{\bot}$ and the remaining agents in $\state_{\top}$ upon
  each {\tt try} action. Letting $m$ be the number of agents, the run
  associated with $\sigma$ and the best counterstrategy for
  \playertwo\ is
  \[({\tt try} ; {\tt keep})^{m-1}; {\tt bot} ; {\tt restart }; ({\tt
      try} ; {\tt keep})^{m-2}; {\tt bot}; {\tt restart } \cdots {\tt
      try} ; {\tt keep} ; {\tt bot}\] and its size is
  $\sum_{i=1}^{m-1} (2i+2) -1 = O(m^2)$.  The system thus requires a
  quadratic number of steps before synchronisation, and this is the
  worst case.
\end{proof}

Notice that the above result only needs a fixed number of states,
namely 6.

\tikzset{elliptic state/.style={draw,rectangle}}
\begin{figure}[htbp]
%\begin{minipage}{.2\textwidth}
  \begin{tikzpicture}
[-latex',>=stealth',shorten >=1pt,auto,node
    distance=2.2cm]
    \tikzstyle{every state}=[draw=black,text=black, inner
    sep=2pt,minimum size=12pt]
%    [->,>=stealth',shorten >=1pt,auto,node distance=2.8cm,%scale=0.75, every node/.style={scale=0.75},
 %                   semithick]
 % \tikzstyle{every state}=[fill=none,draw=black,text=black]

  % \node[state] (A)                    {$\stateinit$};
  % \node[state]         (B) [above right of=A] {$\state_{\top}$};
  % \node[state]         (C) [below right of=A] {$\state_{\bot}$};
  % \node[state]         (D) [below of=A]       {$k$};
  % \node[state]         (E) [fill=green, right of=B]       {$dest$};
  % \node[state]         (F) [right of=C]       {$\frownie$};
 %  \node [text width=7cm, anchor=west, right] at (2,-4)
% {};

  % \path (A) edge [bend right] node [pos=0.2, below] {} (B)
  %           edge              node [pos=0.1] {try} (C)
  %       (B) edge [bend right] node [above, sloped] {keep} (A)
  %           edge              node {top} (E)
  %           edge              node [pos=0.1, sloped] {bottom} (F)
  %       (C) edge              node {keep} (D)
  %           edge              node {top} (F)
  %           edge              node [pos=0.4, sloped] {bottom} (E)
  %       (D) edge              node {restart} (A)
  %           edge [loop below] node {*/restart} (D);
  %       (E) edge [loop above] node {*} (E);

  \node[initial, initial text={},state] (A)                    {$\stateinit$};
  \node[state]         (B) [above right of=A] {$\state_\top$};
  \node[state]         (C) [below right of=A] {$\state_\bot$};
  \node[state]         (D) [left of=C]       {$k$};
  \node[state]         (E) [fill=green!80!black,below right of=B]       {};
%  \node[state]         (F) [right of=C]       {$\frownie$};

  \path (A) edge  node [below right] {$\overline{\tt try}$} (B)
            edge              node [above right]  {$\overline{\tt try}$} (C)
        (B) edge [bend right] node [above left] {$\overline{\tt keep}$} (A)
            edge              node  {$\overline{\tt top}$} (E)
 %           edge              node [pos=0.1, sloped] {bottom} (F)
        (C) edge              node {$\overline{\tt keep}$} (D)
  %          edge              node {$\overline{\tt top}$} (F)
            edge              node [below right] {$\overline{\tt bot}$} (E)
        (D) edge              node {$\overline{\tt restart}$} (A)
            edge [loop below] node {$\overline{\actions} \setminus \{\overline{\tt restart}\}$} (D)
        (E) edge [loop right] node {$\overline{\actions}$} (E);

% \end{tikzpicture}\begin{tikzpicture}[scale=0.6,->,>=stealth',shorten >=1pt,auto,node distance=2.8cm,
%                     semithick]
%   \tikzstyle{every state}=[fill=none,draw=black,text=black]

  \node[initial, initial text={},elliptic state, anchor=west, right] at (8,0) (X) [minimum height = 1.5cm, minimum width = 2cm]  {$m^2$-\textsf{Gadget}};
  \node[state]         (Y) [fill=green!80!black, right of=X]       {};
%  \node[state]         (Z) [below right of=X]       {$\frownie$};

  \path (X) edge [line width = 0.5mm] node [pos=0.2, below] {} (Y)
  % edge [line width = 0.5mm]             node [pos=0.1] {} (Z)
  ;
        
\end{tikzpicture}
\caption{NFA $\overline{\cA_\text{time}}$ on alphabet $\overline{\actions}$ and its abstraction into the gadget $m^2$-\textsf{Gadget}. 
  %The bold arrow represents
  %multiple transitions.
  }
\label{f.gadget}
\end{figure}

\subsection{Polynomial bound on the number of steps for synchronization}
We now show that we can build an NFA with $n$ states such that the
system requires order $m^{0(n)}$ steps before synchronisation.  For
that, we turn the system $\cA_{\text{time}}$ into a gadget, as shown
on Figure~\ref{f.gadget}.  This gadget will be used in an inductive
manner to obtain 
an NFA for which  
$m^{0(n)}$ steps before 
synchronisation are required against the best strategy of \playertwo.

First, we construct an NFA which requires $O(m^{4})$ steps before
synchronisation. Essentially we replace the edge from $\stateinit$ to
$q_{\top}$ in the NFA $\cA_{\text{time}}$ by the $m^2$-\textsf{Gadget}
to obtain the NFA $\cA_{4}$ on Figure \ref{f.quantic}.
The alphabet $\overline{\actions}$ of actions of the $m^2$-\textsf{Gadget} is a disjoint copy 
of the alphabet $\actions$ of actions of $\cA_{\text{time}}$.
In particular, playing any action of $\Sigma$ when any token is in 
the $m^2$-\textsf{Gadget} leads to the losing sink state $\frownie$.

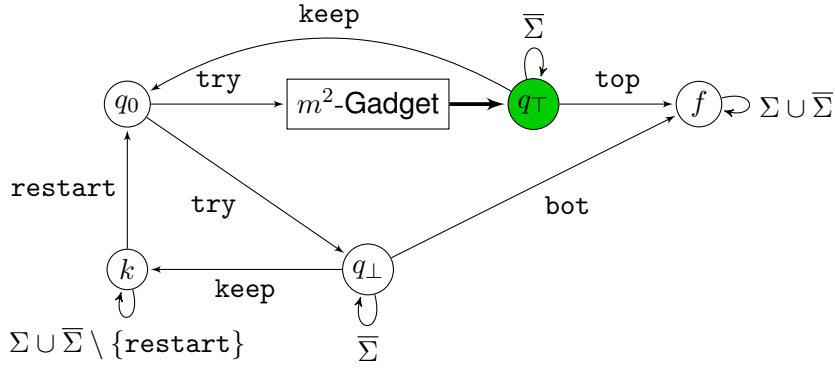
\begin{figure}[htbp]
\centering
\begin{tikzpicture}
[-latex',>=stealth',shorten >=1pt,auto,node
    distance=2.2cm]
    \tikzstyle{every state}=[draw=black,text=black, inner
    sep=2pt,minimum size=12pt]
%   [->,>=stealth',shorten >=1pt,auto,node distance=3cm,
% scale=0.75, every node/.style={scale=0.75},                    semithick]
%   \tikzstyle{every state}=[fill=none,draw=black,text=black]

  \node[state] (A)                    {$\stateinit$};
  \node[elliptic state] (G) [right of=A,xshift=1cm]     {$m^2$-\textsf{Gadget}};
  \node[state]         (B) [fill=green!80!black,right of=G] {$q_{\top}$};
  \node[state]         (C) [below of=G] {$q_{\bot}$};
  \node[state]         (D) [below of=A]       {$k$};
  \node[state]         (E) [right of=B]       {$\targetstate$};

  \path (A) edge 				   node  {\tt try} (G)
 		  (A) edge 				   node [below left] {\tt try} (C)
        (G) edge      [line width = 0.5mm]        node  {} (B)
        (B) edge [bend right] node [above] {\tt keep} (A)
        (B)    edge   node [pos=.5,above] {\tt top} (E)
		         edge [loop above] node {$\overline{\actions}$} (B)

        (C) edge              node  {\tt keep} (D)
            edge 				   node [below right] {\tt bot} (E)
	         edge [loop below] node {$\overline{\actions}$} (C)

        (D) edge              node {\tt restart} (A)
            edge [loop below] node {$\actions \cup \overline{\actions} \setminus \{{\tt restart}\}$} (D)
        (E) edge [loop right] node {$\actions \cup \overline{\actions}$} (E);

\end{tikzpicture}
\caption{The NFA $\cA_4$ which requires $\Theta(m^4)$ steps to
  synchronize all agents into the final state. State $q_{\top}$ is the
  output state of the $m^2$-\textsf{Gadget}.}
\label{f.quantic}
\end{figure}

Consider the strategy of \playertwo\ which is to place 1 agent in 
$q_\bot$ (resp. $\overline{q_\bot}$)
and the rest to $q_\top$ (resp. $\overline{q_\bot}$)
when action {\tt try} (resp. $\overline{\tt try}$)
is played.
Relying on Lemma~\ref{slow}, 
any strategy of \playerone\ needs $m^2$ steps to place 1 agent in $k$, then $(m-1)^2$ steps for the next agent, etc.
Thus, any strategy needs $O(m^3)$ steps to place 1 agent in the target state $\targetstate$. 
Finally, any strategy needs $O(m^4)$ steps to place all the agents in the target state $\targetstate$.  We thus obtain:

\begin{lem}
For the NFA $\cA_4$, \playerone\ requires $\Theta(m^4)$
steps to win in the worst case.
\end{lem}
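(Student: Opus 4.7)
The proof splits into matching $O(m^4)$ upper and $\Omega(m^4)$ lower bounds, each obtained by a recursive composition of the bounds for $\cA_{\text{time}}$ established in Lemma~\ref{slow}. The recursive structure is crucial and is enforced by the alphabet separation: $\actions$ and $\overline{\actions}$ are disjoint, and any $\actions$-action while tokens sit inside the gadget, or $\overline{\actions}$-action while tokens sit outside, routes those tokens to the losing sink.

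For the upper bound, I would lift the $O(m^2)$ winning strategy $\sigma$ for $\cA_{\text{time}}$ to a strategy $\sigma'$ for $\cA_4$ by replacing each occurrence of $\tt try$ in $\sigma$ by the following macro-step: first play $\tt try$ (which in $\cA_4$ sends tokens nondeterministically into either the entrance of the $m^2$-\textsf{Gadget} or into $q_\bot$), then invoke the winning strategy of Lemma~\ref{slow} inside the gadget to push the up-to-$m$ tokens through to $q_\top$ in $O(m^2)$ steps. During this inner phase the $\overline{\actions}$-actions act as self-loops on $q_\bot$ and on $k$, so no token outside the gadget is lost; then $\sigma'$ resumes with the next action prescribed by $\sigma$. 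Since $\sigma$ contains $O(m^2)$ actions, and each ${\tt try}$ is replaced by $O(m^2)$ actions in $\sigma'$, the total length of $\sigma'$ is $O(m^4)$.

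For the lower bound, I would specify an adversarial strategy $\tau'$ for \playertwo\ that composes the worst-case strategies of Lemma~\ref{slow} at both levels: upon ${\tt try}$, send one agent to $q_\bot$ and the rest into the gadget; inside the gadget, play the quadratic worst-case response from Lemma~\ref{slow}. Against any strategy of \playerone, every invocation of the gadget with $k$ tokens then costs $\Omega(k^2)$ steps, and \playerone\ cannot skip it without losing tokens to $\frownie$. Combining this with the outer $\cA_{\text{time}}$ structure, moving the $i$-th token into $k$ costs $\Omega(i^2)$ steps, so accumulating all tokens into $k$ costs $\sum_{i=1}^{m} \Omega(i^2) = \Omega(m^3)$ steps per agent transferred to $\targetstate$. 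Iterating over the $m$ agents (with progressively smaller populations) then yields $\sum_{j=1}^{m} \Omega(j^3) = \Omega(m^4)$.

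The main obstacle is showing that \playerone\ cannot amortize gadget-processing across distinct outer rounds in a way that breaks the recursion --- in other words, that the two levels must be processed strictly sequentially. This follows from the disjointness of $\actions$ and $\overline{\actions}$ together with the absorbing losing sink: any attempt to interleave an outer action with a non-empty gadget, or an inner action with tokens scattered among $\stateinit,q_\bot,k$ outside the gadget, immediately loses. Once this is established, the rest of the argument is a routine telescoping computation analogous to the one in the proof of Lemma~\ref{slow}.
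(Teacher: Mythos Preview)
Your approach matches the paper's. The paper gives only a brief sketch (in the paragraph preceding the lemma) of the lower bound: the adversary sends one agent to $q_\bot$ (resp.\ $\overline{q_\bot}$) and the rest into the gadget (resp.\ $\overline{q_\top}$) on each {\tt try} (resp.\ $\overline{\tt try}$), so by Lemma~\ref{slow} each gadget invocation costs $\Theta(k^2)$; summing gives $\Theta(m^3)$ to send one agent to $\targetstate$ and $\Theta(m^4)$ in total. The upper bound is left implicit. Your proposal follows the same recursive decomposition and additionally spells out the upper bound and the sequentiality argument.

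One small inaccuracy in your ``obstacle'' paragraph: you claim that an inner ($\overline{\actions}$) action with tokens at $q_\bot$ or $k$ immediately loses. This is false --- by construction $q_\bot$, $q_\top$ and $k$ all carry $\overline{\actions}$-self-loops (see Figure~\ref{f.quantic}); only $\stateinit$ lacks them, and no token can sit at $\stateinit$ while the gadget is non-empty anyway (the sole outgoing action from $\stateinit$ is {\tt try}, which empties it). The sequential-processing argument therefore rests entirely on the forward direction you state first: any $\actions$-action while a token is inside the gadget routes that token to $\frownie$. That alone forces \playerone\ to fully flush the gadget to $q_\top$ before any outer step, which is what both you and the paper use.
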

% \begin{proof}
%   We denote a configuration as
%   $(s, s_{top}, s_{bot}, m^2$-\textsf{Gadget}$, k, \smiley)$. Consider
%   the strategy played by \playertwo\ as: Move all but one agents from
%   state $s$ into $m^2$-\textsf{Gadget} and one agent into $s_{bot}$.

%   As we have shown above, the \playerone\ now must play only inside
%   $m^2$-\textsf{Gadget} else the agents inside $m^2$-\textsf{Gadget}
%   will move into the losing state. We have already shown above that a
%   successful play inside $m^2$-\textsf{Gadget} requires $\Theta(m^2)$
%   time, where $m$ is number of agents. This operation essentially
%   takes one agent from $s$ and moves it into $k$. Hence to reduce the
%   number of agents in $s$ to one will require $\Theta(m^3)$ number of
%   steps. Now this one agent in $s$ can be moved to $\smiley$ (in
%   constant number of steps).

%   Hence starting with $N$ agents in $s$, reducing the number of agents
%   in $s$ by one and putting this agent in $\smiley$ requires
%   $\Theta(m^3)$ steps in worst case.

%   Thus putting all agents in $\smiley$ requires $\Theta(m^4)$ steps in
%   the worst case.
% \end{proof}

One can repeat this construction, nesting copies of the
$m^2$-\textsf{Gadget}. At each new gadget, the number of states in the
NFA increases by a constant amount, namely $4$. The $\ell$-layered
NFA, consisting of $\ell-1$ nested gadgets, has $4\ell+2$ states and
requires $\Theta(m^{2\ell})$ steps before synchronisation. We thus
derive the following upper-bound on the time to synchronisation:
\begin{cor}
\label{c.exp}
There exist NFAs with $|Q|$ states such that $m^\frac{|Q|-2}{2}$
steps are required by any strategy to synchronise $m$ agents.
\end{cor}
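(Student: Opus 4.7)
The plan is to iterate the gadget construction that turns $\cA_{\text{time}}$ into $\cA_4$. Define inductively a family $(\cA_{2\ell})_{\ell \geq 1}$ by setting $\cA_2 = \cA_{\text{time}}$ and obtaining $\cA_{2\ell}$ from $\cA_{\text{time}}$ by replacing the edge $\stateinit \xrightarrow{\tt try} \state_\top$ with a fresh copy of $\cA_{2(\ell-1)}$, identifying its initial state with the endpoint of the outer {\tt try} and its target with $q_\top$ in the outer copy (exactly as in Figure~\ref{f.quantic} for $\ell = 2$). The alphabet gains, at each layer, a disjoint copy of $\overline{\actions}$ so that playing an outer action while any agent sits in the inner gadget sends that agent to the losing sink $\frownie$. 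By the same bookkeeping as in the $\ell = 2$ case, each nesting adds four new states, so $\cA_{2\ell}$ has $|Q_\ell| = 4\ell + 2$ states, which gives $2\ell = (|Q_\ell| - 2)/2$.

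I would then show by induction on $\ell$ that \playerone\ needs $\Theta(m^{2\ell})$ steps in the worst case to synchronize $m$ agents in $\cA_{2\ell}$. The base case $\ell = 1$ is Lemma~\ref{slow}. For the inductive step, fix an optimal winning strategy of \playerone\ and let \playertwo\ use the following splitting strategy at every occurrence of a {\tt try}-like action (outer or inner): upon a {\tt try} played on a pool of $i$ agents, send exactly one agent to the $q_\bot$-side and the remaining $i-1$ agents into the gadget leading to $q_\top$. By the inductive hypothesis applied to the gadget with $i-1$ agents, exiting it to $q_\top$ costs $\Theta(i^{2(\ell-1)})$ steps, after which the outer structure mimics one iteration of $\cA_{\text{time}}$: one agent progresses from $\state_\bot$ to $k$ and the remainder cycle back to $\stateinit$ through {\tt keep}. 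Summing over the outer iterations, which deposit one additional agent into $k$ per round, yields $\sum_{i=2}^{m} \Theta(i^{2(\ell-1)}) = \Theta(m^{2\ell - 1})$ steps to place a single agent in $\targetstate$. After the ensuing {\tt restart} the process repeats with one fewer agent, so outer iteration $j$ costs $\Theta((m-j)^{2\ell - 1})$ steps and the total is $\sum_{j=0}^{m-1} \Theta((m-j)^{2\ell - 1}) = \Theta(m^{2\ell})$, proving the lower bound. The matching upper bound is achieved by the obvious lifted strategy of \playerone, which just plays the analogue of the Lemma~\ref{slow} strategy at each level.

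The main obstacle is to verify that this splitting strategy of \playertwo\ is optimal, i.e.\ that \playerone\ cannot circumvent the recursion by taking a shortcut. The argument is the same local one as in Lemma~\ref{slow}, propagated through the nesting: as long as both $q_\top$ and $q_\bot$ (at any layer) are non-empty, playing the layer's {\tt top} or {\tt bot} action sends some agent to $\frownie$ and loses; playing {\tt restart} prematurely only wastes steps since it returns agents from $k$ to $\stateinit$ without having placed any in $\targetstate$; and the outer alphabet is blocked while the inner gadget holds an agent. Hence, in every round, \playerone\ is forced to fully drain the inner gadget by the inductively optimal sequence before any outer progress can occur, which closes the induction. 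Corollary~\ref{c.exp} then follows immediately by taking $|Q| = 4\ell + 2$ and observing that $m^{(|Q|-2)/2} = m^{2\ell}$.
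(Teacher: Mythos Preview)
Your argument is correct and follows exactly the approach the paper sketches in the paragraph preceding Corollary~\ref{c.exp}: iterate the nesting of the $m^2$-\textsf{Gadget}, observe that each layer adds four states so that $|Q|=4\ell+2$, and establish the $\Theta(m^{2\ell})$ bound by the double summation you wrote out. Your write-up is in fact more detailed than the paper's own, which leaves the induction and the forcing argument for \playerone\ implicit; the extra paragraph you give justifying why \playerone\ cannot shortcut the recursion is a welcome addition.
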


\subsection{Optimality of the winning strategy}
In this subsection, we show that the winning strategy built by our
algorithm is optimal, in the sense that it never requires more than
$m^{|Q|^{O(1)}}$ steps to synchronize $m$ agents. 
%In the sequel, we
%focus on positive instances of the population control problem. 
For $\nfa$ a controllable NFA, we write $\sigma_\nfa$ the winning strategy built by our algorithm.
\begin{thm}
\label{th.steps}
For $\nfa$ with $|Q|$ states, $\sigma_\nfa$ needs at most
$m^{|Q|} \times 2^{|Q|^4}$ steps to synchronise $m$ agents in the
target state $\targetstate$.
\end{thm}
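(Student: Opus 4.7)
The plan is to combine two observations. First, the winning strategy $\sigma_\nfa$ extracted from the parity game $\PG$ is positional over its memory, and by the analysis of Theorem~\ref{th.parity} the effective memory is just the current tracking list $\mathcal{L}$ (the support of the agents is already encoded in the configuration, and the bit $b$ only records whether synchronisation has already occurred). Second, the $m$-population game is symmetric in the identity of agents, so the relevant information about a configuration $\vstate \in Q^m$ is its counting abstraction $c(\vstate) \in \{(n_q)_{q\in Q} \mid \sum_q n_q = m\}$, of which there are at most $\binom{m+|Q|-1}{|Q|-1} \leq (m+1)^{|Q|}$.

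I would then define, at each time step $t$ along a play $\pi$ of the $m$-population game following $\sigma_\nfa$, the \emph{combined state} $\chi_t = (c(\vstate_t), \mathcal{L}_t)$ consisting of the counting abstraction and the current tracking list. By construction, $\sigma_\nfa$ chooses its next action as a function of $\chi_t$ alone (through the support, which is read off from $c(\vstate_t)$, and the tracking list). The number of possible combined states is bounded by $(m+1)^{|Q|} \cdot 2^{|Q|^4}$.

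The crucial step is to show that along any play $\pi$ consistent with $\sigma_\nfa$ that has not yet reached $\targetstate^m$, all combined states $\chi_0, \chi_1, \ldots$ are pairwise distinct. Suppose towards contradiction that $\chi_{t_1} = \chi_{t_2}$ for some $t_1 < t_2$ without $\targetstate^m$ being reached in between. Then \playertwo\ can build a new strategy that plays as in $\pi$ up to time $t_1$, then copies the \playertwo\ moves from the interval $[t_1,t_2)$ indefinitely. Since $\sigma_\nfa$ reacts only to $\chi_t$, this produces a play where the combined state cycles through the finite sequence $\chi_{t_1}, \chi_{t_1+1}, \ldots, \chi_{t_2-1}$ forever, and in particular never visits support $\{\targetstate\}$ (equivalently, never reaches $\targetstate^m$). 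This contradicts the fact that $\sigma_\nfa$ is a winning strategy in the $m$-population game (Proposition~\ref{th.1wins} combined with Theorem~\ref{th.parity}).

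Once distinctness of combined states is established, the number of steps before synchronisation is bounded by the number of combined states, that is $(m+1)^{|Q|} \cdot 2^{|Q|^4} \leq m^{|Q|} \cdot 2^{|Q|^4}$ (up to absorbing constants into the memory factor). The main obstacle in making this rigorous is checking that \playerone's action at time $t_2$ is really the same as at time $t_1$ when $\chi_{t_1} = \chi_{t_2}$; this reduces to recalling that $\sigma_\nfa$ is a finite-memory strategy whose memory updates are deterministic functions of the played transfer graphs, so that once the combined state coincides and \playertwo\ replays the same transfer graphs, the next combined state coincides too, sustaining the induction that drives the contradiction.
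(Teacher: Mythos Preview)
Your argument is correct, and it takes a genuinely different route from the paper.

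The paper does \emph{not} use the counting abstraction at all. Instead it proves a structural lemma (Lemma~\ref{l.leak}): whenever a transfer graph $H$ is a loop around a reachable memory--support pair $(\Memstate,S)$ under $\sigma_\nfa$, the support $S$ admits a partition $S=T\uplus U$ with $H(U)\subseteq U$ and $H(T)\cap U\neq\emptyset$. The proof of the theorem then argues by contradiction: if a play lasts more than $m^{|Q|}\cdot 2^{|Q|^4}$ steps, pigeonhole on memory states produces $m^{|Q|}$ returns to the same $(\Memstate,S)$, and repeated applications of the lemma split $S$ into more and more ``non-communicating'' pieces. Each split costs a factor of $m$ in length (because at most $m$ agents can cross the $T/U$ boundary), and after $|Q|$ iterations one obtains $|Q|{+}1$ non-empty pairwise disjoint subsets of $Q$, which is impossible.

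Your pigeonhole on $(c(\vstate_t),\mathcal{L}_t)$ is considerably more elementary. The pumping step is sound: since $\sigma_\nfa$ is positional in $\PG$ and the bit $b$ stays $0$ before synchronisation, the action at time $t$ is a function of $(S_t,\mathcal{L}_t)$; and because $c(\vstate_{t_1})=c(\vstate_{t_2})$, \playertwo\ can replay the same \emph{flow} (not just the same transfer graph) so that both the counting abstraction and the tracking list reproduce exactly, giving an eventually periodic play that never hits $\targetstate^m$, contradicting Proposition~\ref{th.1wins}. In fact the tight count of multisets, $\binom{m+|Q|-1}{|Q|-1}$, yields an exponent $|Q|-1$ in $m$, slightly better than the stated $m^{|Q|}$; the cost is a multiplicative factor $|Q|^{|Q|-1}$ that has to be absorbed into the $2^{|Q|^4}$ term, so the exact constant of the theorem is not matched on the nose (as you note). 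The paper's approach, while heavier, isolates the combinatorial content in Lemma~\ref{l.leak} and makes explicit why each repeated loop must drain agents across a fixed boundary; that structural picture is not visible in the pigeonhole argument.
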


To prove Theorem~\ref{th.steps}, an essential ingredient is the following:

\begin{lem}
\label{l.leak}

Let $\Memstate$ be a memory state of $\sigma_\nfa$, $S \neq \{f\}$ a support and
$H$ a transfer graph such that 
(i) $(\Memstate,S)$ is reachable under $\sigma_\nfa$,
(ii) $H$ is compatible with $\sigma_\nfa$ from $(\Memstate,S)$, 
and (iii) $H$ is a loop around $(\Memstate,S)$.
Then, there exists a
partition $S = T \uplus U$ such that $H(U) \subseteq U$ and
$H(T) \cap U \neq \emptyset$, where
$H(X) = \{q \in S \mid \exists p\in X,\ (p,q) \in H\}$.
% Let $G$ and $H$ be two transfer graphs, such that $GH$ is a finite
% play compatible with $\sigma_\nfa$, and the memory state of the
% strategy is the same after $G$ and after $GH$. In particular, the
% support is the same, and we denote it $S$. Then, there exists a
% partition $S=T \uplus U$ such that for every $(v,w) \in H$,
% $v \in U \implies w \in U$ and there exists $v \in T, w \in U$ and
% $(v,w) \in H$.
\end{lem}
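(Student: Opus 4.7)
I would argue by contradiction: assume no partition $S = T \uplus U$ with $H(U) \subseteq U$ and $H(T) \cap U \neq \emptyset$ exists, and derive that $\sigma_\nfa$ cannot be winning, contradicting the hypothesis. Let $\pi$ be the infinite play obtained by letting \playertwo\ answer with $H$ at every step from $(\Memstate, S)$: by (i)--(iii) it is consistent with $\sigma_\nfa$, and since $H$ is a loop on the support $S \neq \{f\}$, $\pi$ never reaches the target support. By Theorem~\ref{th.ror} and the capacity-game winning condition, $\sigma_\nfa$ being winning forces $\pi$ to have infinite capacity, i.e.\ some accumulator of $\pi$ has infinitely many entry times.

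The negated conclusion says: every $H$-closed $U \subsetneq S$ with $U \neq \emptyset$ has an $H$-closed complement $S \setminus U$. Hence the $H$-closed subsets of $S$ form a Boolean algebra, and $S$ partitions into its atoms $C_1 \uplus \cdots \uplus C_k$, each a minimal non-empty $H$-closed set. Each atom $C_i$ is strongly connected under $H|_{C_i}$ (a proper sink SCC inside $C_i$ would yield a strictly smaller $H$-closed set), one has $H(C_i) = C_i$ (using $\dom(H) = \im(H) = S$), and there are no $H$-edges between distinct atoms.

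The technical heart is to show that within each atom $C$, every accumulator of $(H|_C)^\omega$ has only finitely many entry times. Fix such an accumulator $(T_j^C)$, eventually periodic with period $p$; after replacing $p$ by a multiple we may assume $p$ is also a multiple of the period $d$ of $H|_C$. Decompose $C = L_0 \uplus \cdots \uplus L_{d-1}$ into its $d$ layers, with $H(L_i) = L_{i+1 \bmod d}$. From $T_j^C = T_{j+p}^C \supseteq H^p(T_j^C)$ and the fact that $(H^{kp})|_{L_i}$ is, for $k$ large, the full relation $L_i \times L_i$ (since $H^d|_{L_i}$ is aperiodic strongly connected on $L_i$), one obtains $T_j^C \supseteq \bigcup_{i : T_j^C \cap L_i \neq \emptyset} L_i$, so $T_j^C$ is a union of layers. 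Writing $T_j^C = \bigcup_{i \in I_j} L_i$ with $I_j \subseteq \mathbb{Z}/d\mathbb{Z}$, successor-closure yields $I_{j+1} \supseteq I_j + 1$; periodicity forces $|I_j|$ to be constant on the tail, hence $I_{j+1} = I_j + 1$. An entry $(s,t) \in H|_C$ at a tail index $j$ would require $s \in L_a$ with $a \notin I_j$ and $t \in L_{a+1}$ with $a+1 \in I_{j+1} = I_j + 1$, hence $a \in I_j$ --- a contradiction. Thus no entries occur in the periodic tail, and $(T_j^C)$ has finitely many entry times.

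Since $H$ has no inter-atom edges, entries of any accumulator of $\pi$ lie entirely within atoms, so summing the per-atom bound yields finitely many entry times for every accumulator. Hence $\pi$ has finite capacity, contradicting the first paragraph and forcing the desired partition to exist. The main obstacle is the layer analysis within an atom: combining the accumulator's own period with the intrinsic period $d$ of $H|_C$ and the idempotent structure of high powers of $H$ to show the rigid ``shift of layers'' behaviour of $(T_j^C)$ that precludes tail entries; in particular, the step identifying $T_j^C$ as a union of layers is where the argument is least routine.
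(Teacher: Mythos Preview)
Your approach is correct and genuinely different from the paper's. The paper proceeds constructively: it invokes Lemma~\ref{lemma.leaks} on the play $GH^\omega$ to extract a leak witness $(q,x,y)$, defines for each $s\in S$ the minimal accumulator $A^s$ generated by $\{s\}$, and then either finds some $s$ not lying in its own forward closure (giving $U=\bigcup_{n>\ell} A^s_n$ directly), or else takes $U$ to be the forward closure of $y$ and shows $x\notin U$ by a width-comparison argument on accumulators. Your argument instead characterises the obstruction: negating the conclusion forces every nonempty proper $H$-closed subset of $S$ to have an $H$-closed complement, so $S$ splits into disjoint strongly connected $H$-atoms with no cross edges, and you then argue that $H^\omega$ has finite capacity under this structure. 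This is a clean structural picture and explains \emph{why} the partition must exist, whereas the paper's argument is more hands-on but produces the partition explicitly.

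One gap to flag: you write ``Fix such an accumulator $(T_j^C)$, eventually periodic with period $p$'', but accumulators are arbitrary successor-closed sequences and need not be eventually periodic. The fix is routine: if some accumulator has infinitely many entry times then by pigeonhole two entry times $j_1<j_2$ have $T_{j_1}=T_{j_2}$, and the periodisation $T_0,\ldots,T_{j_1-1},(T_{j_1},\ldots,T_{j_2-1})^\omega$ is still successor-closed with infinitely many entries, so restricting to eventually periodic accumulators is legitimate. Alternatively, once you have the atom decomposition you can bypass the layer analysis entirely via Lemma~\ref{lemma.leaks}: infinite capacity of $H^\omega$ would give infinitely many $n$ with $H^n$ leaking at $H$, but inside an atom of period $d$, for all large $n$ the relation $H^n$ maps each layer $L_a$ onto $L_{(a+n)\bmod d}$ exactly, so any $(q,x,y)$ with $(x,y)\in H$ and $(q,y)\in H^{n+1}$ forces $x$ into $L_{(\mathrm{layer}(q)+n)\bmod d}$, hence $(q,x)\in H^n$ and no leak occurs.
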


Intuitively, letting $G$ a run according to $\sigma_\nfa$
reaching $(\Memstate,S)$, the run $GH^\omega$ is also according to $\sigma_\nfa$.
As $\sigma_\nfa$ is winning, 
$GH^\omega$  must be non realisable, that is 
$GH^\omega$ needs to have an accumulator with an infinite number of entries. We prove in this lemma that for a repeated graph $H$, 
there is a structural characterisation of accumulators with an infinite number of entries: $S$ can be partitioned into $U \uplus T$, where $U$ corresponds to a structural accumulator, and there is one entry from $T$ to $U$ for each $H$.

\begin{proof}
  As \playerone\ is winning, the play $GH^\omega$ must have infinite
  capacity.  By Lemma~\ref{lemma.leaks}, there exist an index $i$ and
  an infinite sequence of indices $j>i$ such that $H^{j-i}$ leaks at
  $H$.  Because the same graph $H$ is repeated, we can assume wlog
  that $i=0$.  As the number of states is finite, there exists a
  triple $(q, x, y )$ such that $(q,y) \in H^{j+1}$, $(x,y) \in H$ and
  $(q,x) \notin H^{j}$ for all $j$'s in an infinite set $J$ of indices.

  \medskip

  Let $\ell$ be the number of steps in $G$. We consider $H$ atomic,
  that is, as if it is a single step.  For every state $s \in S$,
  define $A^s=(A^s_n)_{n \in \nats}$ the smallest accumulator with
  $A^s_\ell=\{s\}$.

If any state $s \in S$ is such that
$s \notin A^{s}_n$
for any $n > \ell$, then we are done as we can set 
$U = \bigcup_{n > \ell} A^{s}_n$, and $T=S \setminus U$.
Indeed, $s \notin U$, and $s$ has a successor in $U$ (any state in $A^{s}_{\ell+1}$).
%and some state in $(Q\setminus U) \cap Post^*(s_1)$ has a successor in %$U$. 
We can thus assume that 
$s \in \bigcup_{n > \ell} A^{s}_n$
for all $s \in S$.

Let $U = \bigcup_{n > \ell} A^{y}_n = \bigcup_{n \geq \ell} A^{y}_n$
as $y \in \bigcup_{n > \ell} A^{y}_n$ and $A^y_\ell=\{y\}$.
Let $T = S \setminus U$.
We will show that this choice of $(T,U)$
satisfies the condition of the statement.
In particular, we will show that $x \notin U$, 
and as $(x,y) \in H$ and $y \in U$, we are done.

Let $k_x,k_y>0$ such that $(x,x) \in H^{k_x}$,
and $(y,y) \in H^{k_y}$. Let $k = k_x \times k_y$.
Partition $J$ into sets $J_r = \{j \mid j = r (\mod k)\}$, 
for $r < k$.
As $J$ is infinite, one of $J_r$ must be infinite.
Taking two indices $j,j' \in J_r$, we have that
$j'-j$ is a multiple of $k$.
In particular, we have $(x,x)\in H^{j'-j}$
and $(y,y)\in H^{j'-j}$.
We also know that $(y,x) \notin H^{j'-j-1}$
as $(q,x) \notin H^{j'}$
and $(q,y) \in H^{j+1}$.

For any state $s\in S$, let $width(s)= max_n(|A_n^s|)$.
Easily, for all $s \in U$, 
$width(s) \leq width(y)$, 
as $s$ can be reached by $y$, 
let say in $v$ steps, 
and hence $A_n^s \subseteq A_{n+v}^y$.
We now show that $width(x) > width(y)$, which implies that 
$x \notin U$, and we are done.
For all $n$, we have $A_n^y \subseteq A_{n+1}^x$, 
as $(x,y) \in H$. 
Now, there are two indices $j,j' \in J_r$
such that $j'-j > n+1$, as $J_r$ is infinite.
Let $z$ such that $(x,z) \in H^{n+1}$
and $(z,x) \in H^{j'-j-n-1}$, which must exists as $(x,x) \in H^{j'-j}$.
As $(y,x) \notin H^{j'-j-1}$, we have
$(y,z) \notin H^{n}$.
That is, 
$z \in A_{n+1}^x \setminus A_n^y$.
Thus, $|A_{n+1}^x| \geq |A_n^y|+1$, and thus
$width(x) > width(y)$.
\end{proof}

We can now prove Theorem \ref{th.steps}:

\begin{proof} 
  Assume by contradiction that there is a run consistent with
  $\sigma_\nfa$ lasting more than $m^{|Q|} \times 2^{|Q|^4}$ steps
  before synchronisation.  Because there are no more than $2^{|Q|^4}$
  different memory states, there is one memory state $\Memstate$ which
  is repeated at least $m^{|Q|}$ times.  Let us decompose the path as
  $G_1 G_2 \cdots G_{m^{|Q|}}$, such that $G_{i} \ldots G_j$ is a loop
  around $\Memstate$, for all $1 \leq i < j \leq m^{|Q|}$. We write $S$
  for the support associated with $\Memstate$.

\medskip

By Lemma \ref{l.leak} applied with $H=G_1 G_2 \cdots G_{m^{|Q|}}$,
there exists a partition $S = T \uplus U$ , such that
$H (U) \subseteq U$ and $H ( T) \cap U \neq \emptyset$.  We define a
sequence $U_1, \ldots, U_{m^{|Q|}}$ of supports inductively as
follows:
\begin{itemize}
  \item $U_{m^{|Q|}}=U$, and 
  \item for $i=m^{|Q|}$ to $i= 1$,
 $U_{i-1} = \{s \in S \mid G_i(\{s\}) \subseteq U_{i}\}$.
\end{itemize}

We further write $T_i= S \setminus U_i$.
We have 
$U \subseteq U_0 \subsetneq S$,
and $U_i \neq \emptyset$
as $G_{i}(U_{i-1}) \subseteq U_{i}$
for all $i$.
In the same way, $U_i \neq S$
for all $i$ as otherwise we would have $U_{m^{|Q|}}=S$, 
a contradiction with $U_{m^{|Q|}}=U \neq S$.
%Note that for all index $i$,
%$G_i ( U_i) \subseteq U_{i+1}$.
Hence $T_i \neq \emptyset$  for all $i$.

Consider now the set $K$ of indices $i$ such that
$G_i ( T_{i-1}) \cap U_i \neq \emptyset$. By Lemma~\ref{l.leak}, $K$
is nonempty. Moreover, since there are $m$ agents, $|K| < m$, otherwise,
$T_0$ would contain at least $m$ more agents than $T_{m^{|Q|}}$. Thus
$1 \leq |K| < m$.
% We cannot have $|K| \geq m$ as there are $m$ agents: It would mean %that there are at least $m$ more agents in $T_1$ than in %$T_{m^{|Q|}}$.
% As there is at least one agent in 
% $T_{m^{|Q|}}$, it would mean that there are at least $m+1$ agents  in 
% $T_1$, a contradiction with $m$ the total number of agents. 
% That is, $1 \leq |K| < m$.
Hence, there are two indices $i < j \in K$ that are far enough,
\emph{i.e.} such that $j-i > m^{|Q|-1}$, with 
$k \notin K$ for all $i<k<j$. Therefore, for every $k$ between $i$ and
$j$, $G_k(T_{k-1}) = T_{k}$ and $G_k (U_{k-1}) = U_{k}$: no agent is transferred from $(T_{k})_{i \leq k \leq j}$ to $(U_{k})_{i \leq k \leq j}$ in the fragment $G_i \cdots G_j$. 
We say that $(T_{k})_{i \leq k \leq j}$ and $(U_k)_{i \leq k \leq j}$
do not communicate. 
In particular, we have 2 non-empty disjoint subsets $T_i,U_i$ of $Q$. We will inductively partition at each step at least one of the sequences into 2 non-communicating subsequences. 
Eventually, we obtain $|Q|+1$ 
non-communicating subsequences, and in particular 
$|Q|+1$ non-empty disjoint subsets of $Q$, a contradiction.

Applying Lemma~\ref{l.leak} again on $H'=G_i \cdots G_j$ yields a
partition  $S'= T' \uplus U'$. We define in the
same way $U'_j=U'$ and inductively $U'_k$ for $k = j-1, \ldots, i$.
As above, there are less than $m$ indices $i' \in [i,j]$ such that
$G_{i'} ( T'_{i'-1}) \cap U_{i'} \neq \emptyset$.  We can thus find
an interval $[i',j'] \subsetneq [i,j]$ such that $j' -i' > m^{|Q|-2}$
and $(T'_{k})_{i' \leq k \leq j'}$ and $(U'_k)_{i' \leq k \leq j'}$
do not communicate.

To sum up,  (any pair of) the four following sequences do not communicate together:
\begin{itemize}
 \item $(T_k \cap T'_k)_{i' \leq k \leq j'}$, $(T_k \cap U'_k)_{i' \leq k \leq j'}$,
 \item $(U_k \cap T'_k)_{i' \leq k \leq j'}$ and $(U_k \cap U'_k)_{i' \leq k \leq j'}$.
\end{itemize}

For any of these four sequences $(X_k)_{i' \leq k \leq j'}$, if 
$X_k \neq \emptyset$ for some $k$, then
$X_k \neq \emptyset$ for all $k$ (as these 4 sequences do not communicate
together and they partition $S$),
in which case we say that the sequence is {\em non-empty}.
Now some of these sequences may be empty. 
Yet, we argue that at least 3 of them are non-empty.
  
Indeed, for at least one index $i \leq k < j$, we have $G_{k}(T'_{k-1}) \cap U'_{k} \neq \emptyset$.
As there is no communication between 
$(T_k)_{i \leq k \leq j}$ and $(U_k)_{i \leq k \leq j}$,
at least one of $T_{k},U_{k}$, let say $T_{k}$, contains at least one state from $T'_{k}$ \emph{and} one state from $U'_{k}$.
Assuming $|j'-i'|$ maximal, we can choose $k=i'$ (else we have a contradiction with $|j'-i'|$ maximal, unless $i'=i$, in which case we can choose $k=j'+1$). 
%For the same reason, it means that for all $i \leq k \leq j$,
%$T_k$ contains at least one state from $T'_k$ \emph{and} one state %from $U'_k$.
That is,
$T_{i'} \cap T'_{i'} \neq \emptyset$ and $T_{i'} \cap U'_{i'} \neq \emptyset$,
and both sequences are non-empty.
Obviously at least one of $U_{i'} \cap T'_{i'}$ and $U_{i'} \cap U'_{i'}$ should be
non-empty as $U_{i'} = (U_{i'} \cap T'_{i'}) \cup (U_{i'} \cap U'_{i'})$ is nonempty, and this gives us the third non-empty sequence.
Hence, we have three non-empty sequences such that no pair of these sequences communicate between $i'$ and $j'$.

We can iterate once more to obtain $i''<j''$ with 
$j'' - i'' > m^{|Q|-3}$, and 
four non-empty sequences, such that no pair of these sequences
communicate between $i''$ and $j''$. 
This is because 1 non-empty sequence contains states from both $U''$ and $T''$, giving 2 non-empty sequences, and the 2 other non-empty sequences gives at least 2 non-empty sequences.
Obviously, this operation can be made at most $|Q|$ times, as it would
result into $|Q|+1$ non-empty and pairwise disjoint subsets of $Q$. We
thus obtain a contradiction with the number of steps being more than
$m^{|Q|} \times 2^{|Q|^4}$.
\end{proof}
% Hence we have shown tight bounds for the number of steps to synchronize agents.

\newpage

%%% Local Variables:
%%% mode: latex
%%% TeX-master: "lmcs"
%%% End:

% !TEX root = arkiv.tex
	\section{Lower bounds}

	\label{sec:lowerbounds}

The proofs of Theorems~\ref{th1} and~\ref{th2} are concluded by 
the proofs of lower bounds.

\begin{thm}
\label{th.lower}
The population control problem is \EXPTIME-hard.
\end{thm}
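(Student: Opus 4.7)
The plan is to establish \EXPTIME-hardness by reduction from the membership problem for alternating Turing machines running in polynomial space, which is \EXPTIME-complete by $\mathsf{APSPACE}=\EXPTIME$. Given such a machine $M$ with space bound $p(n)$ and input $w$ of length $n$, I would build in polynomial time an NFA $\cA_{M,w}$ and a target state $\targetstate$ such that the population control problem on $\cA_{M,w}$ has a positive answer iff $M$ accepts $w$. The overall idea is to exploit the two layers of choice of the population game — Player 1's uniform action and Player 2's per-agent resolution of nondeterminism — to mirror the existential/universal alternation of $M$.

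The NFA is designed so that the agents collectively represent an $M$-configuration: each agent's state encodes its cell index $i\in\{1,\ldots,p(n)\}$, the tape symbol at that cell, and, for the agent at the head, also the TM control state. A short initialization phase lets Player 1 route agents into the correct initial cells, sending any surplus agents to a passive ``reserve'' state. The alphabet contains one action per deterministic/existential TM transition, one per universal TM transition, and a few administrative actions including a consistency check \texttt{sync}. Player 1's uniformly broadcast actions simulate TM moves: each agent updates according to its local state, so agents at the head perform the write-move while others simply preserve their index and symbol. Universal transitions are encoded as nondeterministic NFA branches, so Player 2, who resolves nondeterminism per agent, plays the role of the universal adversary.

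The key technical obstacle is to prevent Player 2 from gaining anything by splitting his choices inconsistently across agents sharing a role. A consistency-check gadget addresses this: right after any universal step, Player 1 plays \texttt{sync}, which uses the NFA structure to pairwise compare agents sharing the same cell index and routes any disagreeing pair into a losing sink. Under \texttt{sync}, Player 2's only rational play is to treat agents of the same role uniformly, so the population game faithfully simulates the alternating computation. The winning condition — all agents reach $\targetstate$ — then corresponds to every surviving branch of the alternating computation reaching an accepting configuration, which is exactly acceptance by $M$.

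Correctness follows in both directions: if $M$ accepts $w$, the existential strategy lifts to a Player-1 strategy that drives all agents (after \texttt{sync}-pruning) to $\targetstate$ for every population size $m$; conversely, if Player 1 wins for every $m$, taking $m$ large enough forces Player 2 to cover all universal branches, so one can extract from Player 1's decisions at existential steps a winning alternating strategy. The main difficulty lies in engineering the consistency gadget and initialization so that (i) the NFA stays polynomial in $|M|+|w|$, (ii) Player 2 has no escape via unaccounted nondeterministic branches, and (iii) the answer is monotone in $m$ as required by Proposition~\ref{prop:monotony} — in particular, small populations should not make Player 1 win spuriously when $M$ rejects. Standard pairwise equality-check gadgets from the synchronising-words literature should suffice to implement the \texttt{sync} comparison while keeping the construction polynomial.
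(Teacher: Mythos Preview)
Your reduction from alternating polynomial-space Turing machines is the right starting point, and the paper proceeds the same way, but your consistency gadget has the incentives reversed and this breaks the construction.

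You write that after a universal step, \texttt{sync} ``routes any disagreeing pair into a losing sink,'' concluding that Player~2's rational play is therefore to keep agents consistent. But in the $m$-population game Player~1 must gather \emph{all} agents in $\targetstate$; an agent in a losing sink is a win for Player~2, not a punishment. So as soon as there are at least two agents at the head, Player~2 simply splits them and is immediately rewarded with unsynchronisable agents, regardless of whether $M$ accepts. Flipping the direction does not help either: if disagreeing agents are sent to $\targetstate$ instead, then after a split Player~1 is free to keep whichever branch he prefers and discard the other to $\targetstate$, which turns every universal step into an existential one. More fundamentally, ``pairwise comparison'' is not something a broadcast action can perform: the effect of an action on an agent depends only on that agent's own state, so no action can distinguish an agent in state $X_A$ surrounded by $X_A$-agents from the same agent surrounded by $X_B$-agents. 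The synchronising-words gadgets you allude to let Player~1 \emph{test} whether a state is empty, not compare two agents.

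The paper sidesteps this by decoupling the universal choice from the tape encoding. The configuration is carried in the \emph{support} (one state per bit of each cell, one per head position, one per control state), and after the initial \texttt{start} all these transitions are deterministic. Universal alternation is implemented by a separate ``choice'' state $C$: Player~2 moves the agent(s) there to some $q_t$, and from $q_t$ only the action simulating transition $t$ avoids $\frownie$. If Player~2 cheats by choosing a $t$ inconsistent with the current configuration, Player~1 has checking actions sending everyone to the winning side. If Player~2 cheats by spreading several agents from $C$ to several $q_t$'s, a store--restart gadget lets Player~1 shelve all but one choice, run the simulation to push at least one agent into the permanent winning state $\smiley$, then \texttt{restart} with strictly fewer active agents; iterating empties $C$ and Player~1 wins. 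This store--restart idea is exactly the missing ingredient in your proposal: handling Player~2's splits requires a mechanism that makes progress (fewer non-won agents) across restarts, not a one-shot comparison.
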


\begin{proof}
  We first prove \PSPACE-hardness of the population control problem,
  reducing from the halting problem for polynomial space Turing
  machines.  We then extend the result to obtain the
  \EXPTIME-hardness, by reducing from the halting problem for
  polynomial space {\em alternating} Turing machines.  Let
  $\mathcal{M} = (S,\Gamma,T,s_0,s_f)$ be a Turing machine with
  $\Gamma = \{0,1\}$ as tape alphabet. By assumption, there exists a
  polynomial $P$ such that, on initial configuration
  $x \in \{0,1\}^n$, $\mathcal{M}$ uses at most $P(n)$
  tape cells. A transition $t \in T$ is of the form
  $t=(s, s', b, b', d)$, where $s$ and $s'$ are, respectively, the
  source and the target control states, $b$ and $b'$ are,
  respectively, the symbols read from and written on the tape, and
  $d \in \{\leftarrow,\rightarrow,-\}$ indicates the move of the tape
  head. From $\mathcal{M}$ and $x$, we build an NFA
  $\nfa = (\states, \Sigma, \state_0,\Delta)$ with a distinguished
  state $\textbf{Acc}$ such that, $\mathcal{M}$ terminates in $s_f$ on
  input $x$ if and only if $(\nfa,\textbf{Acc})$ is a positive instance of
  the population control problem.

  Now we describe the states of NFA $\nfa$. They are given by:
  \begin{equation*}
  \states = \states_{\cells} \cup \states_{\pos} \cup
  \states_{\cont} \cup \{\state_0,\textbf{Acc},\frownie\}
  \end{equation*}
  where
  \begin{itemize}
  \item $\states_{\cells} = \bigcup_{i=1}^{P(n)} \{0_i,1_i\}$ are the states for
    the cells contents of $\mathcal{M}$, one per bit and per position;
  \item $\states_{\pos} = \{p_i \mid 1 \leq i \leq P(n)\}$ are the states for
    the position of tape head of $\mathcal{M}$;
  \item $\states_{\cont} = S$ are the states for the control state of $\mathcal{M}$;
  \item $q_0$ is the initial state of $\nfa$, $\textbf{Acc}$ is a sink
    winning state and $\frownie$ is a sink losing state.
  \end{itemize}
  A configuration of the Turing machine, of the form $(q, p, x) \in S
  \times [P(n)] \times \{0,1\}^{P(n)}$, 
  is represented by any configuration of $\nfa^m$
  such that the set of states with at least one agent is 
%  can be represented in
 %by placing agents exactly in the state 
 $\{q, p\} \cup \{0_i \mid x_i = 0 \} \cup \{1_i \mid x_i =
  1\}$.

  With each transition $t=(s, s', b, b', d)$ in the Turing machine and each position $p$ of
  the tape, we associate an action $a_{t,p}$ in $\nfa$, which
  simulates the effect of transition $t$ when the head position is
  $p$.
  For instance, Fig.~\ref{TM} represents the
  transitions associated with action $a_{t,k}$,
  for the transition $t=(q_i, q_j, 0, 1, \rightarrow)$ of the Turing Machine 
  and position $k$ on the tape. Note that if agents are in the states representing
  a configuration of the Turing machine, then the only action \playerone\ can
  take to avoid $\frownie$ is to play $a_{t,p}$ where $p$ is the current head
  position and $t$ is the next allowed transition. Moreover on doing this, the
  next state in $\nfa^m$ exactly represents the next configuration of the Turing
  machine.

There are also winning actions called $\trash(Q')$ for certain subsets $Q' \subseteq Q$. 
\playerone\ should only play these when no agents are in $Q'$.
  One of them is for $Q' = \states_{\cont} \setminus \{s_f\}$ which can
  effectively only be played when the Turing machine reaches $s_f$, indicating that $\mathcal{M}$ has accepted the input $x$. $\trash(Q')$
  for other subsets $Q'$ are used to ensure that \playertwo\ sets up
  the initial configuration of the Turing machine correctly (see the formal definition of the transitions at the end of the construction). 
 
 % Thus, on action $a_{t,p}$ there is a transition from the source
 %  state $s$ to the target state $s'$, another from the tape head
 %  position $p$ to its update according to $d$, and also from $(b,p)$
 %  to $(b',p)$. Moreover, from head position $q \neq p$, $a_{t,p}$
 % leads to $\frownie$, so that in any population game, \playerone\
 % only plays actions associated with the current head
 % position. Similarly from states $(b'',p)$ with $b'' \neq b$, states
 % $s \neq s$, action $a_{t,p}$ leads to $\frownie$.  Initially, an
 % $\init$ action is available from $\state_0$ and leads to $s_0$, to
 % position $0$ for the tape head, and to cells $(b,p)$ that encode the
 %initial tape contents on input $x$. The NFA also has winning
  % actions $\textbf{accept}$, that allow one to check that there are no agents in a subset of states, and send the remaining ones to the target $\textbf{Acc}$. One such action should be played when agents encoding the state of the Turing machine lie in $s_f$, indicating that
  % $\mathcal{M}$ accepted. Another winning action is played
  % whenever there are not enough agents to encode the initial
  % configuration: \playertwo\ needs $m$ to be at least $P(n)+2$ to fill
  % states corresponding to the initial tape contents ($P(n)$ tokens),
  % the initial control state $s_0$ and the initial head position.  The
  % sink losing state $\frownie$ is used to pinpoint an error in the
  % simulation of $\mathcal{M}$.

\tikzset{rect state/.style={draw,rectangle}}

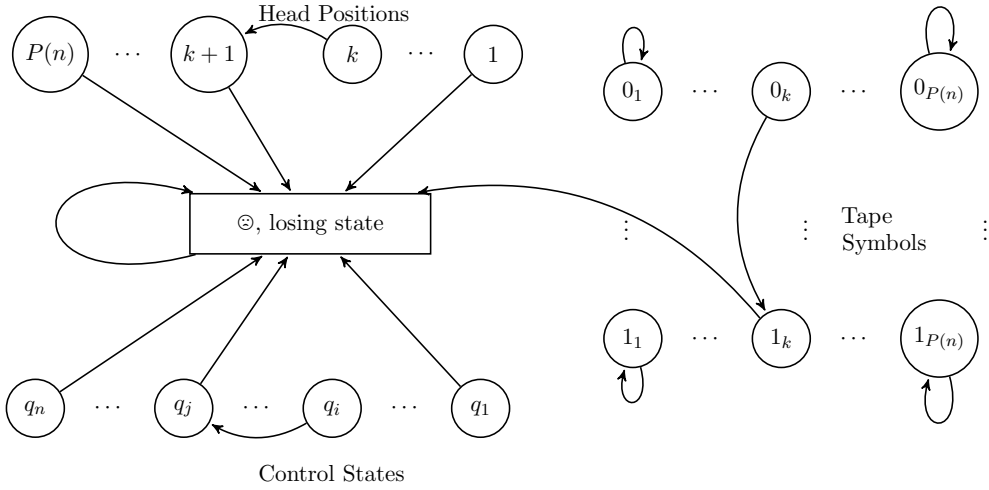
\begin{figure}[h!]
\centering
\begin{tikzpicture}[->,>=stealth',shorten >=1pt,auto,node distance=1.7cm, semithick,scale=0.8, every node/.style={scale=0.8}]
    \tikzstyle{every state}=[fill=none,draw=black,text=black]

    \node[rect state, minimum width=4cm, minimum height=1cm]         (L)         {$\frownie$, losing state};
    \node[state, below left=2.5cm of L]         (QN) []    {$q_n$};
    \node[state, right=0.2cm of QN]         (QD3) [draw=none]    {$\cdots$};
    \node[state, right=0.2cm of QD3]         (QJ) []    {$q_j$};
    \node[state, right=0.2cm of QJ]         (QD2) [draw=none]    {$\cdots$};
    \node[state, right=0.2cm of QD2]         (QI) []    {$q_i$};
    \node[state, right=0.2cm of QI]         (QD1) [draw=none]    {$\cdots$};
    \node[state, right=0.2cm of QD1]         (Q1) []    {$q_1$};
    
    \node[state, above left=2.1 cm of L]         (MN) []    {$P(n)$};
    \node[state,right=0.15cm of MN]         (HD1) [draw=none]    {$\cdots$};
    \node[state, right=0.15cm of HD1]         (MJ) []    {$k+1$};
    \node[state, right=1cm of MJ]         (MI) []    {$k$};
    \node[state, right=0.15cm of MI]         (HD3) [draw=none]    {$\cdots$};
    \node[state, right=0.15cm of HD3]         (M1) []    {$1$};
    
    \node[state, below right=0.7 cm of M1]         (A1) [right of=M1]    {$0_1$};
    \node[state, right=0.2cm of A1]         (AD1) [draw=none]    {$\cdots$};
    \node[state, right=0.2cm of AD1]         (AK) []    {$0_k$};
    \node[state, right=0.2cm of AK]         (AD2) [draw=none]    {$\cdots$};
    \node[state, right=0.2cm of AD2]         (AM) []    {$0_{P(n)}$};
    
    \node[state, right=2.2cm of L]         (D1) [draw=none]    {$\vdots$};
    \node[state, right=1.6cm of D1]         (D2) [draw=none]    {$\vdots$};
    \node[state, right=1.6cm of D2]         (D3) [draw=none]    {$\vdots$};

    \node[state, below=2.5 cm of A1]         (B1) []    {$1_1$};
    \node[state, right=0.2cm of B1]         (BD1) [draw=none]    {$\cdots$};
    \node[state, right=0.2cm of BD1]         (BK) []    {$1_k$};
    \node[state, right=0.2cm of BK]         (BD2) [draw=none]    {$\cdots$};
    \node[state, right=0.2cm of BD2]         (BM) []    {$1_{P(n)}$};
    
    \node [text width=5cm, anchor=west, right] at (-1,3.5)
    {Head Positions};
    \node [text width=5cm, anchor=west, right] at (-1,-4.15)
    {Control States};
    \node [text width=2.1cm, anchor=west, right] at (8.7,-0.12)
    {Tape\break Symbols};

    \path (QN) edge           node [below] {} (L)
        (QJ) edge           node [below] {} (L)
        (QI) edge [bend left]          node [below] {} (QJ)
        (Q1) edge           node [below] {} (L)
        
        (MN) edge           node [below] {} (L)
        (MJ) edge           node [below] {} (L)
        (MI) edge [bend right]          node [below] {} (MJ)
        (M1) edge           node [below] {} (L)

        (A1) edge [loop above]            node [below] {} (A1)
        (B1) edge [loop below]            node [below] {} (B1)
        (AM) edge [loop above]            node [below] {} (AM)
        (BM) edge [loop below]            node [below] {} (BM)
        (AK) edge [bend right]             node       {} (BK)
        (BK) edge [bend right]            node [below] {} (L)
		  (L) edge [loop left]            node [left] {} (L);
    
\end{tikzpicture}
\caption{Transitions associated with action $a_{t,k}$ for $t=(q_i, q_j, 0, 1, \rightarrow)$.}
\label{TM}
\end{figure}
\medskip

Let us now describe the transitions of $\nfa$ in more detail. The actions are
\begin{itemize}
\item $\Sigma = \Sigma_{\transition} \cup \Sigma_{\trash} \cup \{\start\}$, with
\begin{itemize}
\item $\Sigma_{\transition} = T \times \left\{ 1, \ldots, P(n) \right\}$
\item $\Sigma_{\trash} = \Bigl\{\trash(\states') \mid \states' \in
 \bigl\{\{0_i,1_i\}_{1 \leq i \leq P(n)}, \states_{\cont}, \states_{\cont}\setminus\{s_f\}, \states_{\pos}\bigr\} \Bigr\}$
\end{itemize}
\end{itemize}

% A configuration $(s,i,w_1 \cdots w_{P(n)})$ of the Turing machine
% will then be encoded by a configuration of $\nfa^m$ with tokens only
% allowed in the states representing $s$,$i$, and encoding the tape
% contents $w_1 \cdots w_{P(n)}$. Precisely, an encoding of $(s,i,w_1
% \cdots w_{P(n)})$ is a configuration $(q_1, \cdots, q_m)$ such that
% $\supp{(q_1,\cdots, q_m)} = \{s,p_i,q[w_1],\cdots,q[w_n]\}$ with the
% interpretation that $q[w_k] = 0_k$ if $w_k =0$ and $q[w_k] = 1_k$ if
% $w_k =1$.

  To describe the effect of actions from $\Sigma$, we use the
  following terminology: a state $\state \in \states$ is called an
  $\alpha$-sink for $\alpha \in \Sigma$ if $\Delta(\state,\alpha) =
  \{\state\}$, and it is a sink if it is an $\alpha$-sink for every
  $\alpha$. Only the initial $\start$ action is nondeterministic: for
  every $\alpha \in \Sigma \setminus\{\start\}$, and every $\state \in
  \states$, $\Delta(\state,\alpha)$ is a singleton. A consequence is
  that in the games $\nfa^m$, the only decision Player~2 makes is in
  the first step.

  The two distinguished states $\textbf{Acc}$ and $\frownie$ are
  sinks. Moreover, any state but $q_0$ is a $\start$-sink. From the
  initial state $\state_0$, the effect of $\start$ aims at
  representing the initial configuration of the Turing machine:
  $\Delta(\state_0,\start) = \{s_0,p_1\} \cup \{0_k \mid x_k =0
  \textrm{ or } k \geq n\} \cup \{1_k \mid x_k =1\}$. Then, the
  actions from $\Sigma_\transition$ simulate transitions of the Turing
  machine. Precisely, the effect of action $\alpha = ((s_s, s_t, b_r, b_w, d), i) \in
    \Sigma_\transition$ is deterministic and as follows:
\begin{itemize}
\item $\Delta(s_s,\alpha) = \{s_t\}$
\item $\Delta(p_i,\alpha) = \begin{cases}\{p_{i+1}\} & \textrm{ if } i<P(n)
    \textrm{ and } d=\rightarrow\\ \{p_{i-1}\} & \textrm{ if } i>1
    \textrm{ and } d=\leftarrow\\ \{p_i\} & \textrm{ otherwise;}
\end{cases}$
\item $\Delta(0_i,\alpha) = \begin{cases} \{0_i\} & \textrm{ if } b_r =0
    \textrm{ and } b_w=0\\
\{1_i\} & \textrm{ if } b_r =0
    \textrm{ and } b_w=1
\end{cases} \quad \Delta(1_i,\alpha) = \begin{cases} \{0_i\} & \textrm{ if } b_r =1
    \textrm{ and } b_w=0\\
\{1_i\} & \textrm{ if } b_r =1
    \textrm{ and } b_w=1;
\end{cases}
$
\item $\Delta(0_j,\alpha) = 0_j$ and $\Delta(1_j,\alpha) = 1_j$, for
  $j \neq i$;
\item otherwise, $\Delta(q,\alpha) = \frownie$.
\end{itemize}
Last, we described how actions from $\Sigma_\trash$ let the system
evolve. Let $\trash(Q') \in \Sigma_\trash$ be a check action for set
$Q' \subseteq Q$. Then  
\begin{itemize}
\item $\Delta(q,\trash(Q')) = \begin{cases} \frownie & \textrm{ if }
    q \in Q' \cup \{q_0,\frownie\}\\ \textbf{Acc} &\textrm{ otherwise.}\end{cases}$
\end{itemize}

We claim that this construction ensures the following equivalence:

\begin{lem}
\label{lem:eq-MT-SS}
$\mathcal{M}$ halts on input $x$ in $q_f$ if and only if $(\nfa,\textbf{Acc})$ is
a positive instance of the sure-synchronization problem.
\end{lem}

\begin{proof}\leavevmode
\begin{itemize}
\item case $m \leq P(n)+1$: not enough tokens for player 2 in the first step to cover all of $\Delta(q_0,\start)$; player 1 wins in the next step
  by selecting the adequate check action
\item case $m \geq P(n)+2$: best move for player 2 in the first step is
  to cover all of $\Delta(q_0,\start)$; afterwards, if player 1 does
  not mimic the execution of the Turing machine, some tokens get
  stuck in $\frownie$; thus the best strategy for player 1 is to
  mimic the execution of the Turing machine; then, the machine halts
  if and only if all the tokens in $\states_{\cont}$ converge to $s_f$. Now applying $\trash(\states_\cont  \setminus \{s_f\})$ moves all tokens to $\textbf{Acc}$.\qedhere
\end{itemize}
\end{proof}

We thus performed a \PTIME reduction of the halting problem for
polynomial space Turing machines to the sure synchronization problem,
which is therefore \PSPACE-hard.

\medskip

Now, in order to encode an alternating Turing machine, we assume that
the control states of $\mathcal{M}$ alternate between states of \playerone\
and states of \playertwo. The NFA $\nfa$ is extended with a
state $\smiley$, 
a state $C$, which represents that \playerone\ decides what transition to take, and one state $q_t$ per transition $t$
of $\mathcal{M}$, which will represent that \playertwo\ chooses to play transition $t$ as the next action.  Assume first, that $C$ contains at
most an agent; we will later explain how to impose this.
 
The NFA has an additional transition labelled
$\init$ from $q_0$ to $C$, and 
one transition from $C$ to every state $q_t$
labeled by $a_{t',p}$,
for every transition $t$ and action $a_{t',p}$.
Intuitively, 
whatever action $a_{t',p}$ is played by \playerone,
\playertwo\ can choose 
the next action to be associated with $t$ by 
placing the agent to state $q_t$.

From state $q_t$, only actions of the form $a_{t,p}$ are allowed, leading
back to $C$. That is, actions $a_{t',p}$ with $t' \neq t$ lead from
$q_t$ to the sink losing state $\frownie$. This encodes that \playerone\
must follow the transition $t$ chosen by \playertwo. To punish
\playertwo\ in case the current tape contents is not the one expected by
the transition $t=(s, s', b, b', d)$ he chooses, there are checking actions
$\trash_{s}$ and $\trash_{p,b}$ enabled from state $q_t$.  Action
$\trash_{s}$ leads from $q_t$ to $\smiley$, and also from $s$ to
$\frownie$. Similarly,
$\trash_{p,b}$ for any position $p$ and $b \in \{0, 1\}$ leads from $q_t$ to $\smiley$ and from any position state
$q \neq p$ to $\frownie$, and from $b_p$ to $\frownie$. 
In this way, \playertwo\ will not move the token from $C$ to 
an undesired $q_t$.
 This
ensures that \playertwo\ places the agents only in a state $q_t$ which
agrees with the configuration.

Last, there are transitions on action $\ttend$ from state $\smiley$, $C$ and any of the $q_t$'s to the target state $\smiley$.
Action $\ttend$ loops around the accepting state $\textbf{Acc}$ associated with the Turing machine, and 
it leads from any other state to $\frownie$.  
Last, there is an action $\winact$, which leads from $\smiley$ to $\smiley$, from $\textbf{Acc}$ to $\smiley$,
and from any other state to $\frownie$.
This action $\winact$ may seem unnecessary, 
but its purpose will appear clear in the following step.
This whole construction
encodes, assuming that there is a single agent in $C$ after the first
transition, that \playerone\ can choose the transition from a
\playerone\ state of $\mathcal{M}$, and \playertwo\ can choose
the transition from an \playertwo\ state.

\medskip

 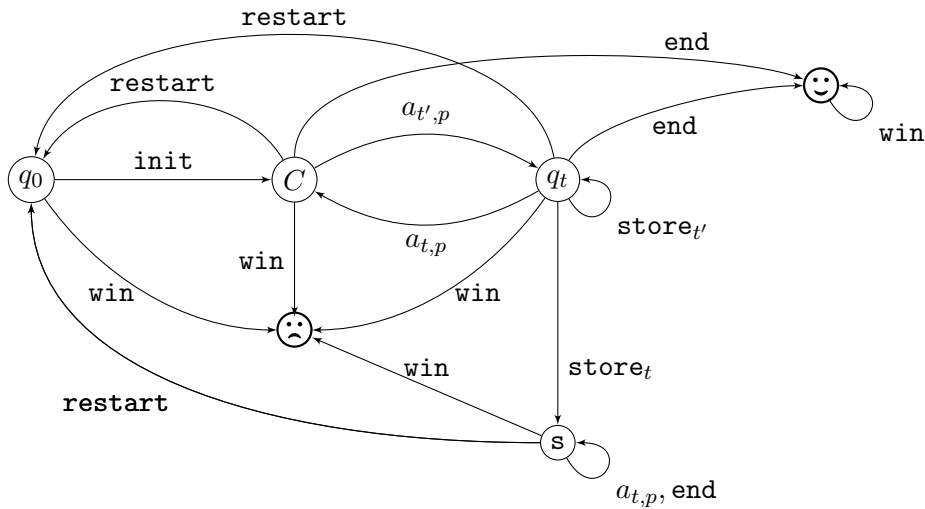
\begin{figure}[b!]
\centering
\begin{tikzpicture}[scale=1]
\draw (-3.5,0) node [circle,draw,inner sep=2pt,minimum size=12pt ]
  (s0) {$\state_0$} ;
%\draw (3.5,.75) node [circle,draw,inner sep=2pt,minimum size=12pt ]
%  (sf) {$\targetstate$} ;

  \draw(0,0) node [circle,draw,inner sep=2pt,minimum size=12pt ]
  (c) {$C$} ;

  \draw(3.5,0) node [circle,draw,inner sep=2pt,minimum size=12pt ]
  (t) {$q_t$} ;

\draw(3.5,-3.5) node [circle,draw,inner sep=2pt,minimum size=12pt ]
  (store) {$\sstore$} ;

\draw(7,1.25) node[inner sep=-2pt] (win) {$\huge{\smiley}$} ;

\draw(0,-2) node [inner sep=-2pt]
  (lose) {$\huge{\frownie}$} ;

\draw [-latex'] (c) -- (lose) node[pos=.5,left] {$\winact$};
\draw [-latex'] (s0) .. controls +(305:1.5cm) and +(180:1.5cm) .. (lose) node[pos=.5,left] {$\winact$};
\draw [-latex'] (t) .. controls +(235:1.5cm) and +(0:1.5cm) .. (lose) node[pos=.5,right] {$\winact$};
\draw [-latex'] (store) -- (lose) node[pos=.5,above] {$\winact$};

 \draw [-latex'] (s0) -- (c) node[pos=.5,above] {$\init$};
 \draw [-latex'] (c) .. controls +(30:1.5cm) and +(150:1.5cm)  .. (t)
 node [pos=.5,above] {$a_{t',p}$};
\draw [-latex'] (t) .. controls +(210:1.5cm) and +(330:1.5cm)  .. (c)
 node [pos=.5,below] {$a_{t,p}$};

\draw [-latex'] (store) .. controls +(180:4cm) and +(270:2.5cm)  .. (s0)
 node [pos=.5,below left] {$\restart$};

\draw [-latex'] (c) .. controls +(120:1.5cm) and +(60:1.5cm)  .. (s0)
 node [pos=.5,above] {$\restart$};

\draw [-latex'] (c) .. controls +(90:2cm) and +(160:1.5cm)  .. (win)
 node [pos=.75,above] {$\ttend$};

 \draw [-latex'] (t) .. controls +(60:1cm) and +(180:1cm) .. (win) node [pos=.5,below] {$\ttend$};

\draw [-latex'] (t) .. controls +(100:2.5cm) and +(80:2.5cm)  .. (s0)
 node [pos=.5,above] {$\restart$};

\draw [-latex'] (store) .. controls +(180:4cm) and +(270:2.5cm)  .. (s0)
 node [pos=.5,below left] {$\restart$};
\draw [-latex'] (t) -- (store) node[pos=.75,right] {$\store_t$};

\draw [-latex']  (t) .. controls +(300:30pt) and +(0:30pt) .. (t)
 node[midway,below right]{$\store_{t'}$};

\draw [-latex']  (win) .. controls +(300:30pt) and +(0:30pt) .. (win)
 node[midway,below right]{$\winact$};

\draw [-latex']  (store) .. controls +(300:30pt) and +(0:30pt) .. (store)  node[midway,below right]{$a_{t,p},\ttend$};

\end{tikzpicture}
\caption{Gadget simulating a single agent in $C$.}
\label{fig:gadget-unique}
\end{figure}

% \vspace{-0.4cm}
% \begin{figure}[h!]
% {\center
% \begin{center}
% \includegraphics[scale=0.3]{try12.pdf}
% \end{center}
% }
% \end{figure}

Let us now explain 
how to deal with the
case where \playertwo\ places several agents in state $C$ on the
initial action $\init$, enabling the possibility to later send agents
to several $q_{t}$s simultaneously.
With the current gadget, 
if there is an agent in $q_{t1}$ and one in $q_{t2}$,
then \playerone\ would be stuck as playing $a_{t1,k}$ would send 
the agent from $q_{t2}$ to $\frownie$, and vice-versa.
To handle this case, consider the gadget from Figure~\ref{fig:gadget-unique}.
We use an extra 
state $\sstore$, actions $\store_t$ for each transition $t$, and action
$\restart$.

Action $\store_{t}$  leads from $q_t$ to $\sstore$, and loops
on every other state.
From all states except $\smiley$ and $\frownie$
(in particular, $\sstore$ and every state associated with the Turing machine, including $\textbf{Acc}$), action $\restart$
leads to $q_0$. Last, the effects of $\ttend$ and $\winact$ are
extended as follow to $\sstore$: 
$\ttend$ loops on $\sstore$, while 
$\winact$ leads from $\sstore$ to $\frownie$.

Assume that input $x$ is not accepted by the alternating Turing
machine $\mathcal{M}$, and let $m$ be at least $P(n)+3$.
%  \playertwo\ has a winning strategy in the alternating Turing machine
% and the number of tokens is at least $P(n)+3$.
In the $m$-population game, \playertwo\ has a winning strategy placing
initially a single agent in state $C$. If \playerone\ plays
$\store_{t}$ (for some $t$), either no agents are stored, or the
unique agent in $C$ is moved to $\sstore$. 
Playing $\ttend$ does not change the configuration, and 
\playerone\ cannot
play $\winact$. Thus, there is no way to lead the agents encoding the Turing machine configuration to $\smiley$.
Playing $\restart$ moves all the agents back to the original configuration $q_0$. This shows that $\store_{t}$ is
useless to \playerone\, and thus \playertwo\ wins as in the previous case.

Conversely, assume that \playerone\ has a strategy in $\mathcal{M}$ witnessing the acceptance of $x$.
If \playertwo\ never split, then \playerone\ never plays any store actions and wins as in the previous case.
Otherwise, assume that \playertwo\ places at least two agents in $C$ to eventually split them to ${t_1}, \ldots, {t_n}$. 
In this case, \playerone\ can play the corresponding actions
$\store_{t_2}, \ldots, \store_{t_n}$ moving all agents (but the ones
in ${t_1}$) in $\sstore$, after which he plays his winning strategy
from ${t_1}$ resulting in sending at least one agent to $\smiley$. Then, \playerone\ plays $\restart$ and proceeds inductively with strictly less agents from $q_0$, until there is no agent in $C,q_t$
in which case \playerone\ plays $\winact$ to win.
\end{proof}

Surprisingly, the cut-off can be as high as doubly exponential in the size of the NFA.

\begin{prop}
\label{prop:cutoff-lowerbound}
  There exists a family of NFA $(\nfa_n)_{n \in \nats}$ such that
  $|\nfa_n|=2n+7$, 
  and for $M = 2^{2^{n}+1}+n$, there is no winning strategy
  in $\nfa_n^{M}$ and there is one in $\nfa_n^{M-1}$.
\end{prop}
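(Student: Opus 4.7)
My plan is to construct $\nfa_n$ as a product-like combination of two gadgets, in the spirit of the cut-off example of Figure~\ref{fig:linear_cutoff} but much more refined. The first gadget is the splitting gadget $\splitnfa$ of Figure~\ref{fig:splitgadget}, of constant size; the second is an $n$-bit counter gadget $\mathcal{C}_n$ with $2n+O(1)$ states. Intuitively, $\splitnfa$ will force $\Theta(\log m)$ steps to gather $m$ agents in its target, while $\mathcal{C}_n$ will grant \playerone\ only $\Theta(2^n)$ steps before being pushed into a losing sink $\frownie$. When the splitting time exceeds the counter budget, \playerone\ is doomed; this mismatch is what produces the doubly-exponential cut-off.

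First I will establish the quantitative properties of $\splitnfa^m$, using the strategy and analysis sketched in Example~\ref{ex}. Claim~(s1): the symmetric memoryless strategy (play $\delta$, then the letter sending the majority to $f$) wins in at most $2\lfloor\log_2 m\rfloor+2$ steps. Claim~(s2): no strategy of \playerone\ can do better than $2\lfloor\log_2 m\rfloor+1$ steps, by induction, since \playertwo's best response to a $\delta$-action is to split agents as evenly as possible between $\state_1$ and $\state_2$, so that at least $\lfloor m/2\rfloor$ agents return to $\stateinit$ after each pair of steps.

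Next I design $\mathcal{C}_n$ with bit-states $\ell_i, h_i$ for $0\le i<n$, plus a few auxiliary states and the losing sink $\frownie$. The transitions encode a binary counter that can only be incremented, and whose overflow sends all agents into $\frownie$; calibrating so that each increment takes two actions and the counter must cycle through all $2^n$ values yields a budget of exactly $B=2^{n+1}+O(1)$ steps. The key properties are (c1): if at least $n$ agents are present in the bit-states, \playertwo\ can keep the bits ``alive'' and force overflow after $B$ steps; and (c2): if fewer than $n$ agents are present, \playerone\ has a masking strategy avoiding $\frownie$ forever. Then $\nfa_n$ is obtained as a product combination of $\splitnfa$ and $\mathcal{C}_n$ with shared sinks, target state $(\targetstate,\mathsf{ok})$ and alphabet the product of both alphabets, so that $|\nfa_n|=2n+7$ after folding the shared initial, target and losing states.

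The cut-off analysis now combines the two bounds: for $m$ large enough that the counter is active, \playerone\ wins $\nfa_n^m$ iff the splitting finishes before the counter overflows, i.e.\ iff $2\lfloor\log_2(m-n)\rfloor+2\leq B$. Fine-tuning the constants of $\mathcal{C}_n$ so that this inequality is equivalent to $m-n<2^{2^n+1}$ gives exactly the advertised cut-off $M=2^{2^n+1}+n$. The main obstacle is this precise calibration of the counter: the increment transitions must be designed so that the boundary cases $m=M-1$ and $m=M$ fall on the correct side of the inequality, and the ``$+n$'' in the cut-off arises naturally from the $n$ agents that must be dedicated to the counter bits (the remaining $m-n$ driving the splitting). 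A secondary subtlety is to verify that \playertwo's optimal strategies in the two factors remain compatible inside the combined product game, which can be handled by an argument showing that \playertwo\ may decouple his choices across the two coordinates.
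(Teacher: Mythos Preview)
Your high-level plan matches the paper's: pair the splitting gadget $\splitnfa$ (which needs $\Theta(\log m)$ steps to gather $m$ agents) with an $n$-bit counter gadget (which grants \playerone\ only $\Theta(2^n)$ steps before hitting $\frownie$). Your properties (s1), (s2) for $\splitnfa$ are correct and coincide with the paper's analysis.

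The gap is in how you combine the two gadgets. You describe a ``product combination'' with target state $(\targetstate,\mathsf{ok})$, yet you then analyze it as though agents are \emph{partitioned} between the two factors (``$n$ agents dedicated to the counter bits, the remaining $m-n$ driving the splitting''). These two pictures are incompatible. In a genuine Cartesian product of state spaces every agent occupies a pair of states, one in each factor, so all $m$ agents drive the splitter simultaneously and there is no separate pool of $n$ agents for the counter; the state count would be roughly $4\cdot(2n{+}2)$ rather than $2n+7$; and the additive ``$+n$'' in the cut-off would not appear. The paper's construction is instead a \emph{disjoint union} of the two gadgets, equipped with a fresh initial state and a nondeterministic $\mathsf{init}$ action by which \playertwo\ distributes the $m$ agents between the initial states of $\splitnfa$ and $\mathcal{C}_n$. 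The alphabet is the product alphabet (so one combined step advances both sub-games by one), and a final action sends all agents to a single global target $\smiley$ provided the splitter has reached $f$ and no counter agent has reached $\frownie$. It is precisely \playertwo's freedom at the $\mathsf{init}$ step that yields the split ``$n$ agents to the counter, $m-n$ to the splitter'' and hence the $+n$ in $M$. Once you use the disjoint union, the decoupling worry you flag at the end evaporates: each agent lives in one gadget only, so \playertwo's strategies in the two factors are independent by construction.

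A secondary point: the paper's counter increments once per action (budget $2^n$), not twice as you propose. Your two-steps-per-increment variant is not wrong in spirit, but you defer the matching of constants to an unspecified ``fine-tuning''; since the proposition asserts the \emph{exact} value $M=2^{2^n+1}+n$ and the exact size $2n+7$, that calibration is not optional and your sketch does not carry it out.
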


{\begin{proof}
  Let $n\in \nats$. The NFA $\nfa_n$ we build is 
  the disjoint union of {\em two} NFAs with different properties, called
 $\splitnfa$ and $\countnfa{n}$.
 On the one hand, for $\splitnfa$, it requires $\Theta(\log m)$ steps for
 \playerone\ to win the $m$-population game. On the other
  hand, $\countnfa{n}$ implements a
  usual counter over $n$ bits,
  such that \playerone\ can avoid losing for $O(2^n)$ steps. In the combined NFA $\nfa_n$, we require that \playerone\ win in $\splitnfa$
    \emph{and} avoid losing in $\countnfa{n}$. This ensures that  $\nfa_n$ has a cutoff of $\Theta(2^{2^n})$

 Recall Figure~\ref{fig:splitgadget}, which presents the splitting gadget
 $\splitnfa$. It has the following properties. In $\splitnfa^m$ with $m \in \nats$ agents:
\begin{description}
\item[$(s1)$ \playerone\ has a strategy to ensure win in $2 \left \lfloor \log_2 m \right \rfloor + 2$ steps] Consider the following strategy of \playerone.
  \begin{itemize}
  \item Play $\delta$ if there is at least one agent in state $q_0$. Otherwise,
  \item Play $a$ if the number of agents in $q_1$ is greater than in $q_2$.
  \item Play $b$ if the number of agents in $q_2$ is greater than or same as in $q_1$.
  \end{itemize}
  
	For this strategy, let us look at the number of agents in the state $q_0$ and $f$
  respectively. For instance, since the play starts from all agents in $q_0$, the starting state has count $(m,0)$ - there are $m$ tokens in $q_0$ and $0$ in $f$. From
  a state with counts $(k,m-k)$, after two steps of this strategy (regardless of
  \playertwo's play), we will end up in a state with counts $(l,m-l)$ for some
  $l \leq \left \lfloor k/2 \right \rfloor$. Hence within $2\times\lfloor \log_2
  m\rfloor + 2$ steps starting from the initial state, one will reach a state
  with counts $(0,m)$ and the \playerone\ wins.

\item[$(s2)$ No strategy of \playerone\ can ensure a win in less than $2 \left
    \lfloor \log_2 m \right \rfloor + 2$ steps] The transition from $q_0$ on 
  $\delta$ is the only real choice \playertwo\ has to make. Assume that \playertwo\
  decides to send an equal number of agents (up to a difference of 1) to both $q_1$
  and $q_2$ from $q_0$. Against this strategy of \playertwo, let $\alpha_1, \alpha_2, \ldots \alpha_k$ be the
  shortest sequence of actions by \playerone\ which lead all the agents
  into $f$. 
  
  We now show that $k \geq 2\times \lfloor \log_2 m \rfloor + 2$.
  Initially all agents are in $q_0$. So by the minimality of $k$ we should
  have $\alpha_1 = \delta, \alpha_2 \in \{a, b\}, \alpha_3 = \delta, \ldots
  \alpha_k \in \{a, b\}$, since other actions will not change the state of the
  agents. For any $i \in \{ 1, 2, \ldots \frac{k}{2}\}$,
  after \playerone\ plays $\alpha_{2i}$, denote the number of agents in $q_0$ and $f$
  by $(l_i, m - l_i)$. Note $l_{i} \geq \frac{l_{i-1} - 1}{2}$, since
  \playertwo\ sends equal number of agents from $q_0$ to $q_1$ and $q_2$.
  Iterating this equation for any $j \in \{ 1, 2, \ldots \frac{k}{2}\}$ gives $l_j \geq \frac{l_0+1}{2^j} - 1$,
  where $l_0 = m$ is the number of agents in state $q_0$ at the beginning. In
  particular this shows that $\frac{k}{2} > \log m$ since $l_{\frac{k}{2}} = 0$.
  Hence $k \geq 2 \times (\lfloor  \log_{2}m \rfloor + 1)$.

\end{description}

%Now, 

	\begin{figure}[t!]
		\centering
		\begin{tikzpicture}[>=latex',->,scale=0.8, every node/.style={scale=0.8}]
		
		\node[initial above, initial text=,state] (q0) {$q_0$};

		\node[state,below=of q0, yshift=-1.5cm]  (li) {$l_i$};
		
		%\begin{scope}[node distance=1cm and 0.5cm]
		
		\node[state, right=of li, xshift=2cm] (l1) {$l_1$};
		\node[state, left=of li, xshift=-2cm] (ln) {$l_n$};
		
		\node[state,below=of li, distance=2cm]  (hi) {$h_i$};
		\node[state, below=of ln] (hn) {$h_n$};
		\node[state, below=of l1] (h1) {$h_1$};
		
		\node[below left=of li, yshift=0.4cm] {$\cdots$};
		\node[below right=of li, yshift=0.4cm] {$\cdots$};
				
		%\end{scope}
		\node[state, below=of hi, inner sep=0, yshift=-1.5	cm] (sad) {$\hFrownie$};

		\path (li) edge node[left,pos=0.5] {$\alpha_i$} (hi);
		\path (hi) edge node[right,pos=0.5] {$\alpha_i$} (sad)
			  (li) edge[in=120,out=220] node[left, pos=0.5] {$\{\alpha_j\}_{j>i}$} (sad)
			  (hi) edge[in=-20, out=30] node[right,pos=0.7] {$\{\alpha_j\}_{j>i}$} (li)
			  (hi) edge[in=-30, out=-70,looseness=5] node[right=1pt] {$\{\alpha_j\}_{j<i}$} (hi) 
			  (li) edge[in=150,out=110, looseness=6] node[left] {$\{\alpha_j\}_{j<i}$}(li);
			  
		\path (q0) edge node[left]	 {$\alpha_1, \ldots$} node[right]	 {$\ldots ,\alpha_n$} (li);
		
		\path (q0) edge node {} (l1)
			  (q0) edge  (ln);
		
        \path (l1) edge[bend left] node[pos=0.7, right] {$\alpha_1$} (h1)
			  (ln) edge node[right] {$\alpha_n$} (hn)
			  (h1) edge node[left] {$\alpha_1$} (sad)
			  (hn) edge node[left] {$\alpha_n$} (sad);
        %\path
			  %(hn) edge[in=-120,out=120] node[left] {$\{\alpha_j\}_{j<n}$} (ln);
        \path
        (sad) edge[loop below] node {$\alpha_1,..,\alpha_n$} (sad)
        (h1)  edge node[left] {$\{\alpha_j\}_{j>1}$} (l1)
        (hn) edge[looseness=5, in=190, out=-120] node[left] {$\{\alpha_j\}_{j<n}$} (hn)
        (ln) edge[looseness=5, in=160, out=110 ] node[left] {$\{\alpha_j\}_{j<n}$} (ln)
        ;
%			  (ln) edge[in=-120,out=120] (sad);

		%\draw[->] (ln) .. controls ($(hn)+(1cm,1cm)$) and ($(hn)+(2cm,-2cm)$) ..  (sad);		
		\draw[->] (l1) .. controls ($(l1)+(2cm,0cm)$) and ($(h1)+(2cm,-2cm)$) .. node[right, pos=0.7] {$\{\alpha_j\}_{j>1}$}(sad);		
		   
		\end{tikzpicture}
		\caption{The counting gadget $\countnfa{n}$.}	
                \label{fig:countgadget}
	\end{figure}

  \medskip
  
The gadget $\countnfa{n}$, shown in Figure~\ref{fig:countgadget}, represents a
binary counter. For each $i \in \{ 1 \ldots n \}$, it has states $\ell_i$
(meaning bit $i$ is $0$) and $h_i$ (meaning bit $i$ is $1$) and actions
$\alpha_i$ (which sets the $i$th bit and resets the bits for $j < i$). The only
real choice that \playertwo\ has in this gadget is at the first step, and it is
optimal for \playertwo\ to place agents in as many states as possible -- in particular when $m \geq n$, placing one agent in each $l_i$ for $i \in \{ 1 \ldots n \}$.

After this, the play is completely determined by \playerone's actions. Actually,
\playerone\ doesn't have much choice when there is at least one agent is each $l_i$. \playerone\ must simulate an $n$-bit counter if it wants to prevent some agent from reaching $\frownie$. More
precisely, assume inductively that all the agents are in the states given by
$b_n b_{n-1} \ldots b_1$ where $b_k \in \{ l_k , h_k \}$ (initially $b_i=l_i$
for each $i$). If $ b_i = h_i$ for each $i$, then any action $\alpha_i$ will lead
the agent in state $h_i$ to $\frownie$. Otherwise, let $j$ be the smallest index
such that $b_j = l_j$. Observe that $\alpha_j$ is the only action that doesn't lead some agent to $\frownie$; $\alpha_i$ for $i < j$ would lead the agent in $h_i$ to
$\frownie$, while $\alpha_i$ for $i > j$ would lead the agent in $l_j$ to
$\frownie$. On playing $\alpha_j$, the agents now move to the states $b_n
b_{n-1} \ldots b_{j-1} h_j l_{j-1} \ldots l_0$ -- which can be interpreted as
the next number in a binary counter.

This means that the gadget $\countnfa{n}$ has the following properties: 
\begin{itemize}
\item[$(c1)$] For any $m$, \playerone\ has a strategy in the $m$-population game on 
$\countnfa{n}$ to avoid $\frownie$ for $2^n$ steps by playing $\alpha_i$ whenever the counter suffix from bit $i$ is $0 1\cdots 1$;
\item[$(c2)$] For $m \geq n$, no strategy of \playerone\ in $\countnfa{n}^m$ can avoid
$\frownie$ for more than $2^n$ steps.
%\item 
%$(c2)$ for $m < n$, there is a strategy in $\countnfa{n}$ to always
%  avoid $\frownie$.
\end{itemize}

To construct $\nfa_n$, the two gadgets $\splitnfa$ and $\countnfa{n}$ are combined
by adding a new initial state, and an action labeled {\em init} leading from
this new initial state to the initial states of both NFAs.
Actions for $\nfa_n$ are made up of pairs of actions, one for each gadget: $ \{a,b,\delta\} \times \{\alpha_i \mid 1 \leq i \leq n\}$.
%The initial state is the $q_0$ of $\countnfa{n}$,
%and we add transitions  labelled $\alpha_1, \ldots, \alpha_n$
%from this initial state to the $q_0$ of $\splitnfa$.
We further add an action $*$ which can be played from any state of $\countnfa{n}$ except $\frownie$, and only from $f$ in $\splitnfa$, 
leading to the global target state $\smiley$.

Let $M=2^{2^{n}+1}+n$. 
We deduce that the cut-off is $M-1$ as follows:
\begin{itemize}
\item 
For $M$ agents, a winning strategy for \playertwo\ is to
first split $n$ tokens from the initial state to the $q_0$ 
of $\countnfa{n}$, in order to fill each $l_i$ with 1 token, and
$2^{2^{n}+1}$ tokens to the $q_0$ of $\splitnfa$.
Then \playertwo\ splits evenly tokens between $q_1,q_2$ in $\splitnfa$.
In this way, \playerone\ needs at least $2^n+1$ steps
to reach the final state of $\splitnfa$ ($s2$), but \playerone\ reaches $\frownie$ after these $2^n+1$ steps in $\countnfa{n}$ ($c2$).

\item
For $M-1$ agents, \playertwo\ needs to use at least $n$ tokens 
from the initial state to the $q_0$ of $\countnfa{n}$, 
else \playerone\ can win easily. But then there are less than 
$2^{2^{n}+1}$ tokens in the $q_0$ of $\splitnfa$.
And thus by $(s1)$, \playerone\ can reach $f$ within $2^n$ steps,
after which he still avoids $\frownie$ in $\countnfa{n}$ ($c1$).
And then \playerone\ sends all agents to $\smiley$ using $*$.
\end{itemize}
Thus, the family $(\nfa_n)$ of NFA exhibits a doubly exponential cut-off.
\end{proof}
}

\newpage

%%% Local Variables:
%%% mode: latex
%%% TeX-master: "lmcs"
%%% End:

% !TEX root = stacs.tex
\section{Discussion}
\label{sec:conclu}
Obtaining an \EXPTIME algorithm for the control problem of a population of agents was challenging. We also managed to prove a matching lower-bound. Further, the surprising doubly exponential matching upper and lower bounds on the cut-off imply that the alternative technique, checking that \playerone\ wins all $m$-population game for $m$ up to the cut-off, is far from being efficient. 

The idealised formalism we describe in this paper is not entirely satisfactory: for instance, while each agent can move in a non-deterministic way, unrealistic behaviours can happen, \emph{e.g.}\ all agents synchronously taking infinitely often the same choice.  An almost-sure control problem in a probabilistic formalism should be studied, ruling out such extreme behaviours.
As the population is discrete, we may avoid the undecidability that holds for distributions~\cite{DMS12} and is inherited from the equivalence with probabilistic automata~\cite{GO10}. Abstracting continuous distributions by a discrete population of arbitrary size could thus be seen as an approximation technique for undecidable formalisms such as probabilistic automata.

\medskip

{\bf Acknowledgement:} We are grateful to Gregory Batt for fruitful discussions concerning the biological setting. Thanks to Mahsa Shirmohammadi for interesting discussions. This work was partially supported by ANR project STOCH-MC (ANR-13-BS02-0011-01), and by DST/CEFIPRA/Inria Associated team EQUAVE.

%%% Local Variables:
%%% mode: latex
%%% TeX-master: "lmcs.tex"
%%% End:

%\newpage
\bibliographystyle{plainurl}
\bibliography{biblio}

\end{document}